\documentclass[12pt]{article}
\usepackage{graphicx}                  


\usepackage{xr}
\externaldocument[Main-]{main2015.11.03_SB}

\usepackage[normalem]{ulem}

\usepackage{amsfonts}
\usepackage{array}
\usepackage{amsthm}
\usepackage{color}
\usepackage[usenames,dvipsnames,svgnames,table]{xcolor}
\usepackage{amsmath}
\usepackage{booktabs}
\usepackage{amssymb}
\usepackage{graphicx}
\usepackage{amsfonts}
\usepackage{eurosym} 
\usepackage{float}
\usepackage{array}
\usepackage{amsthm}
\usepackage{amsmath}
\usepackage{appendix}
\usepackage[hang, small, bf]{caption}
\usepackage{cases}
\usepackage{enumerate}
\usepackage{caption}
\usepackage{subcaption}
\usepackage{authblk}
\usepackage{nicefrac}
\usepackage[round]{natbib}

\usepackage{tikz}
\usetikzlibrary{arrows}

\newtheorem{proposition}{Proposition}

\usepackage{amsthm}

\renewcommand{\arraystretch}{1.2}

\setlength\parindent{0pt}



\usepackage{appendix}

\usepackage{url}

\usepackage{gensymb}

\setlength{\textwidth}{16cm}       
\setlength{\textheight}{21cm}                 
\setlength{\topmargin}{0.5cm}
\setlength{\oddsidemargin}{-.0cm}

\definecolor{blue}{rgb}{0,0.,0.5}

\makeindex

\begin{document}

\title{\vspace{-2.5cm}\textbf{Rethinking Financial Contagion}
}

\author{Gabriele Visentin,  Stefano Battiston and Marco D'Errico\footnote{Corresponding author: marco.derrico@uzh.ch \\ \textbf{Acknowledgments}: we are grateful to Joseph Stiglitz and Tarik Roukny for discussions and comments. GV, SB and MDE acknowledge support from the FET Project SIMPOL nr. 610704, FET project DOLFINS nr 640772, and the Swiss National Fund Professorship grant no. PP00P1-144689.} \\ \normalsize Department of Banking and Finance \\ University of Zurich
}

\date{\normalsize \today}

\maketitle

\begin{abstract}

\noindent How, and to what extent, does an interconnected financial system endogenously amplify external shocks? This paper attempts to reconcile some apparently different views emerged after the 2008 crisis regarding the nature and the relevance of contagion in financial networks. We develop a common framework  encompassing several network contagion models and show that, regardless of the shock distribution and the network topology, precise ordering relationships on the level of aggregate systemic losses hold among models. 

\noindent
We argue that the extent of contagion crucially depends on the amount of information that each model assumes to be available to market players. Under no uncertainty about the network structure and values of external assets, the well-known \cite{eisenberg2001systemic} model applies, which delivers the lowest level of contagion. This is due to a property of \textit{loss conservation}: aggregate losses after contagion are equal to the losses incurred by those institutions initially hit by a shock.
%
%
This property implies that many contagion analyses rule out by construction any loss amplification, treating \textit{de facto} an interconnected system as a single aggregate entity, where losses are simply mutualised.
Under higher levels of uncertainty, as captured for instance by the DebtRank model, losses become \textit{non-conservative} and get compounded through the network. This has important policy implications: by reducing the levels of uncertainty in times of distress (e.g. by obtaining specific data on the network) policymakers would be able to move towards more conservative scenarios.  

\noindent Empirically, we compare  the magnitude of contagion across models on a sample of the largest European banks during the years 2006- 2016. In particular, we analyse contagion effects as a function of the size of the shock and the type of external assets shocked. 
%
\smallskip\\\smallskip\\
\noindent 
\textbf{Keywords}: financial network, financial contagion, leverage network, loss conservation, loss amplification, Eisenberg-Noe, DebtRank  
\end{abstract}

\normalsize
\noindent

\clearpage
\tableofcontents
\clearpage
\section{Introduction}
\label{sec:introduction}

Coping with financial crises may carry considerable social costs. Ideally, such costs would need to be correctly quantified and minimised while, at the same time, ensuring that the key functions of the financial system continue to work also under times of distress. Most of the regulator's attention at the onset of the 2007-2008 crisis has been devoted to reducing the probability of widespread default contagion by bailing-out systemically important financial institutions in the fear that they might be ``too big to fail'' (TBTF) or ``too interconnected to fail''
and that their default would lead to severe system-wide consequences. The need to understand the implications of interconnectedness and its consequences in terms of systemic risk has been often emphasized by high-level policymakers in the EU and the US\footnote{\cite{yellen2013intercon} argues that ``some degree of interconnectedness is vital to the functioning of our financial system [\ldots] Yet experience -- most importantly, our recent financial crisis -- as well as a growing body of academic research suggests that interconnections among financial intermediaries are not an unalloyed good
\cite{draghi2013intercon} holds that  ``the process of financial integration has created a myriad of complex linkages within the EU financial system'' and ``a more holistic view of interlinkages in the financial system is needed to understand how shocks are transmitted across the system and how to mitigate them''.} and it has motivated policy responses and substantial revisions of the  regulatory framework.\footnote{For instance, the the Dodd--Frank Wall Street Reform and Consumer Protection Act in the US and the European Market Infrastructure Regulation (EMIR) in the EU.}


Where does this interconnectedness arise? As a matter of fact, a large share of the activities of financial institutions (including banks and other financial players) is \textit{inter-financial}, i.e. it involves contracts between financial institutions themselves. The analysis of \cite{allahrakha2015systemic}, for instance, reports a share of  intrafinancial exposures between about $5\%$ and $40\%$ for the top six banking groups in the US (in our empirical analysis of a sample of EU banks, we find similar numbers). About half these exposures are Over The Counter (OTC) derivatives (i.e. derivative financial contracts traded bilaterally). Several authors \citep[see, for instance,][]{stulz2009credit,ecb2009counterpartycds,cont2010credit} have analysed the role of Credit Default Swaps (CDS) in determining systemic risk. As noted  by the New York FED, ``the financial crisis of 2008 exposed significant weaknesses in the over-the-counter (OTC) derivatives market''.\footnote{\url{https://www.newyorkfed.org/financial-services-and-infrastructure/financial-market-infrastructure-and-reform/over-the-counter-derivatives/index.html}}


The share of inter-financial activities has been increasing over the last decades: \cite{allen1997theory}, for instance, challenge the standard view of the intermediary role of the financial system and argue that these process has led to the creation of ``markets for intermediaries rather than individuals or firms''. 

Is this interconnectedness  a problem in terms of systemic contagion? A first stream of literature focuses on counterfactual simulations to assess the possibility and severity of contagion through national interbank lending markets.\footnote{Among other works, \cite{furfine2003interbank} for the United States, \cite{amundsen2005contagion} for Denmark, \cite{blaavarg2002interbank} and \cite{frisell2007state} for Sweden, \cite{degryse2007interbank} for Belgium, \cite{elsinger2006risk} for Austria, \cite{wells2004financial} and \cite{elsinger2005using} for the United Kingdom, \cite{graf2005interbank} for Mexico, \cite{lubloy2005domino} for Hungary, \cite{mistrulli2011assessing} for Italy, \cite{sheldon1998interbank} and \cite{muller2006interbank} for Switzerland, \cite{toivanen2009financial} for Finland, \cite{upper2004estimating} for Germany and \cite{van2004interbank} for the Netherlands.} All these studies implement models of sequential defaults \citep{upper2011simulation} in which the bankruptcy of one institution reduces the equity of its counterparties by failing to honor its liabilities, either applying the methodologies of  \cite{eisenberg2001systemic} or \cite{furfine2003interbank}.


The general finding of these counterfactual simulations is that the likelihood of domino-effects resulting from interbank exposures is negligible, even though the potential losses associated with the occurence of such events can be of systemic importance \citep{elsinger2006risk,upper2011simulation}. Further extensions of the Eisenberg-Noe model have included bankrupcty costs \citep{rogers2013failure} or cross equity holdings \citep{elsinger2009financial}, without substantially altering the main conclusions.

In their pioneering work, \cite{kiyotaki2002balance} divide contagion into \textit{indirect} and \textit{direct}. The latter refers indeed to a channel arising when ``firms simultaneously borrow
from and lend to each other''. The \textit{indirect} channel has also received attention in years after the crisis: a firm's actions generate externalities that affect other firms through \textit{non-contractual} channels \citep{clerc2016indirect}. Works on fire-sales \citep{cifuentes2005liquidity}, bank runs \citep{iyer2011interbank}, information contagion \citep{helwege2013financial} and liquidity risk \citep{iyer2014interbank} have highlighted the importance of such indirect channels.

%
%

%
%

A stream of literature has suggested that direct contagion may take place \textit{well-before} the threshold of default as in the EN model and lead to substantial distress even without outright defaults. For instance, the DebtRank methodology \citep{battiston2012debtrank} is based on a recursive process on the network of liabilities where shock transmission takes place even without the default of a counterparty, and has been adopted, e.g. in \citep{zlatic2015reduction,de2016evaluating}. As remarked in \citep{battiston2016debtrank,battiston2016leveraging,bardoscia2016paths} DebtRank moves towards extending the classical structural approach of \cite{merton1974pricing} within a network of liabilities. This is precisely modeled in \cite{barucca2016network}.



A reduced likelihood of repayment can be determined by different levels and types of \textit{uncertainty}. Indeed, during the last financial crisis, uncertainty dominated the financial system. As noted by \cite{haldane2009rethinking}, when Lehman filed for bankruptcy, ``Panic ensued. Uncertainty about its causes and contagious consequences brought many financial markets and institutions to a standstill". Other agents in the financial system (whether directly or indirectly exposed) were uncertain about how much they would recover from Lehman. The idea that uncertainty about the recovery rate is crucial for the amount of contagion is put forward also in a recent work by \cite{battiston2016price}: in their model, the levels of uncertainty compounds multiplicatively along chains of connected banks.

International policy organisations have conducted ex-post analyses of the losses emerged during the crisis that support the view that the endogenous mark-to-market re-evaluations of claims, due to the decline in counterparties' ability to repay due obligations, have caused large losses, if not even the biggest part. For instance, \cite{bcbs2011cva} report that 2/3 of losses were due to Credit Valuation Adjustments (CVA) rather than direct defaults;  the \cite{fsa2010prudential}, report a value of 4/5 of losses. CVA implies a re-evaluation of the market value of an asset by taking into account the increased likelihood of default of a counterparty and it is largely applied to OTC derivatives, which are largely interfinancial \citep{derrico2015passing,abad2016shedding}.\footnote{See also \cite{burnham1991structure}, \cite{flood1994market} and \cite{lyons1997simultaneous}, for theoretical and empirical analysis on the FX market.} This shows that a default based process may be insufficient to explain the extent of systemic contagion.\footnote{The theoretical contribution by \cite{glasserman2015likely} find than, under an EN approach contagion is not likely and point towards the inclusion of other mechanisms.} As a result, when balance sheet values are marked-to-market, \textit{valuation} and \textit{contagion} are \textit{strictly interrelated} and may create potential feedback loops and further devaluation spirals.

This leads to the question: does interconnectedness matter in determining contagion?  In order to provide an answer, we develop the a framework based on the concept of \textit{leverage network} (Section \ref{sec:general-framework}) in which we compare (Section \ref{sec:analytical-results}) the losses across five main different contagion models:
\begin{itemize}
\item the \cite{eisenberg2001systemic} model (EN),
\item the \cite{rogers2013failure} model (RV),
\item the Default Cascades (DC) model \citep{battiston2012credit,roukny2013default},
\item the acyclic DebtRank (aDR) model \citep{battiston2012debtrank,battiston2016debtrank,battiston2016leveraging}, 
\item cyclic DebtRank (cDR) model \citep{bardoscia2015debtrank,barucca2016network}.
\end{itemize}
As analytically shown in \citep{barucca2016network}, all these models, including the EN model, can be interpreted as a re-evaluation of interfinancial claims under certain conditions. Moreover, the latter two models explicitly capture a CVA-type of accounting framework.

%
%

As a first contribution, we provide precise ordering relations in terms of total systemic equity loss according to the different models. Importantly, such relations hold \textit{regardless} of i) the underlying network structure and ii) the original shock distribution. In this ordering relations, the EN contagion model gives the least possible losses followed by the RV model and by the cyclic version of DebtRank and the differences in estimated losses between models can be substantial. This implies that the differences in the assumptions of the various models (and in what they are able to capture) is key for assessing the extent of contagion.
%
%
Notice that in the EN and RV models the only event that matters for the transmission of distress from a bank to another is its default. As a result, the value of a claim on a bank's debt is not at all affected by a loss of equity of that bank as long as the equity is positive, while in case of default of the bank the value of the claim for their counterparties is, recursively, a proportional share of all the remaining assets of the bank including what the bank could recover from its own defaulting obligors. 
In particular, losses that exceed the equity of a bank are mutualized by the defaulting bank' counterparties as in a sort of system of communicating vessels where the water in excess of one vessel moves to the next but the total amount of water remains constant. 

Our second contribution is to emphasise that an important \textit{conservation} property holds for the EN model:  the total final losses suffered by banks and their creditors at the end of the contagion process are simply equal to the total initial losses on the external assets of the banks suffering the first shock. This property, which has often gone unnoticed, implies that the possibility of a shock amplification through the network is ruled out by construction and that the interbank network is equivalent in terms of aggregate losses to a single aggregate bank. 
The other implicit assumptions of the EN model are related to the certainty in how much each counterparty will recover from the others (there are no bankruptcy costs and assets are liquidated at book value) and how payments are enforced. Moving towards more uncertainty on the network structure and on the recovery rate makes the contagion process no longer conservative: the final total losses in the system are higher than the initial shock. In the RV model, for example, losses are computed recursively taking into account a discount factor for the interbank liabilities. This factors compounds in the network, lowering the final recovery rate. Yet, the RV model assumes full knowledge on the network of liabilities and external asset values. 
When further uncertainty on external and interfinancial assets (i.e., the network) ensues, a full mark-to-market/CVA must be applied and an approach such as DR is more suitable. This approach leads to even higher levels of losses.


As a third (empirical) contribution, in Section \ref{sec:empirical}, we apply the stress-test framework in a cross-national setting, studying the vulnerability of the top $50$ publicly listed banks on the EU interbank lending market from $2005-Q4$ to $2015-Q3$.\footnote{Though pure interbank loans are not subject to a mark-to-market reevaluation, we focus on interbank loans as a benchmark case to assess the magnitude of the loss.} We also compare the behavior of the different models across several dimensions in the space of parameters, such as the magnitude of external shock and exogenous recovery rates, identifying the limits under which different models yield similar results. Our empirical results reflect the theoretical findings: sequential default models (like the Eisenberg-Noe model) constantly yield lower equity losses than mark-to-market-based models.
%


Finally, we discuss (Section \ref{sec:discussion}) the implications of our results, especially from a policy perspective. In fact, our paper provides theoretical ground for policy actions aimed at mitigating systemic risk in times of crises. Increasing information about the network structure and other holdings of market participants is available, policymakers may move the financial system towards a more conservative setting by curbing uncertainty. We discuss also the types of policies that may lead to these specific desirable results. However, further work is needed to understand their actual implementation in the current complex financial landscape.

\section{A general framework for contagion models}
\label{sec:general-framework}

\subsection{Leverage networks}
\label{subsec:leverage-networks}

In order to compare different contagion models we hereby provide a coherent, unified framework of definitions. In what follows we will outline such a framework and review all the models analyzed.

Consider a set of $n$ financial institutions (hereafter called \textit{banks}) with nominal liabilities to each other.  Let $L^b$ be the \textit{interbank liability matrix}, whose entry $L_{ij}^b$ represents the nominal value of the liability of bank $i$ to bank $j$. It is worth to stress that the liability $L_{ij}^b$ can represent any type inter-financial contract, including OTC derivatives (cleared and uncleared), bonds, etc.: our theoretical analysis, therefore, is not limited to interbank loans.\footnote{The current framework does not apply straightforwardly, though, in the presence of a specific type of financial network with CDS contracts written on the very same financial institutions, as in \cite{schuldenzuckerclearing}.}

We can represent this financial system as a directed graph (or network) with weighted adjacency matrix given by the interbank liabilities matrix $L^b$, in which each bank is a node. Associated with the matrix $L^b$ is its transpose, which we call the \textit{interbank assets matrix} and denote by $A^b$, whose entry $A_{ij}^b = L_{ji}^b$ represents the nominal value of bank $i$'s claim on bank $j$. Banks invest also in $m$ \textit{external assets}, such that their total amounts to $A_i^e = \sum_{k=1}^m A^e_{ik}$, where $A_{ik}^e$ is the amount invested by bank $i$ in asset class $k$; moreover, they have \textit{external liabilities}, $L^e_i$, that is liabilities to entities \textit{outside} of the inter-financial network. Denoting by $E_i$ the equity of $i$, the balance sheet identity for each node in the network is therefore given by the following expression:
\begin{equation}
 E_i = A_i - L_i = A_i^e + A_i^b - (L_i^e + L_i^b) = \sum_{k=1}^m A_{ik}^e + \sum_{j=1}^n A_{ij}^b - L_i^e - \sum_{j=1}^n L_{ij}^b \label{eq:balancesheetidentity}
 \end{equation}
One of the key quantities in our framework is the \textit{leverage} of bank $i$, defined as
\begin{equation}
l_i =A_i \big/ E_i. \label{eq:totalleverage}
\end{equation}

The leverage ratio can be interpreted \citep{battiston2016leveraging} as the multiplicative factor that relates a decrease in the value of the assets with the losses in equity it induces. In fact, given a negative relative shock, $s \in [0, 1]$, in the value of the assets, the fraction of equity needed to respect identity (\ref{eq:balancesheetidentity}) is readily computed as $\min\{1, l_i \times s\}$. 

Another simple interpretation is that the leverage ratio can be viewed as the reciprocal of the minimum relative decrease (negative shock) in the value of assets that leads to $E_i = 0$, i.e. $i$'s equity is completely used in order to absorb the shock. \footnote{See, e.g., \cite{gros2010leverage}, for a discussion about this interpretation of the leverage ratio within different accounting frameworks.}

The framework is  based on the decomposition of the leverage ratio into \textit{additive components}, as introduced in \citep{battiston2016debtrank,battiston2016leveraging}. The leverage ratio of $i$ \textit{with respect to asset}  $k$ is defined as the ratio between the value of the asset and the equity of the bank, $l_{ik}^e = A_{ik}^e / E_i$. The same applies to interbank exposures, and $l_{ij}^b$ is the leverage of $i$ towards $j$. Therefore the total leverage $l_i$ can be additively decomposed along external assets and interbank assets in the following way:

\begin{equation}
\label{eq:decomposition}
 l_i = \frac{\sum_{k=1}^m A_{ik}^e + \sum_{j=1}^n A_{ij}^b}{E_i} =  \sum_{k=1}^m l_{ik}^e + \sum_{j=1}^n l_{ij}^b 
\end{equation}

We call the matrices whose elements are $l_{ik}^e$ and $l_{ij}^b$ the \textit{external leverage matrix} and the \textit{interbank leverage matrix} respectively. To these matrices are naturally associated a bipartite leverage network banks/assets and a monopartite leverage network banks/banks.


\subsection{First and second round effects}
\label{subsec:first-second-round-effects}

In general, given a model of financial contagion, we are interested in assessing how the losses induced by a shock, $s$, on the external assets of some banks (\textit{first round effects}) propagate and amplify to the other banks in the network (\textit{second round effects}).

We show that all contagion models considered in this work can be equivalently reformulated as discrete-time recursive processes\footnote{Even models framed as fixed-point problems, like the Eisenberg-Noe and Rogers-Veraart models ultimately admit a discrete-time process solution, like the Fictitious Default Algorithm.} on the leverage network. Each process prescribes how a certain balance sheet quantity (e.g. total liabilities or equity) evolves, compatibly with accounting rules and key assumptions related to each model, up until convergence is reached. 

Without loss of generality and independently of the details of any model implementation, we can define a time scale\footnote{Time here refers to the internal time of the process itself, or algorithmic time, and not physical time in any way.} for this process of \textit{stress-testing} (compatibly with the general framework proposed in \citet{battiston2016leveraging}) as in Table \ref{tab:outlinedistress}:
%

\begin{table}[!htbp]\index{typefaces!sizes}
  \footnotesize%
  \begin{center}
    \begin{tabular}{lll}
      \toprule
\textbf{Time} &\begin{tabular}{l} \textbf{Round} \end{tabular}& \begin{tabular}{l} \textbf{Effects} \end{tabular}\\
      \midrule
\begin{tabular}{l}$t = 0$ \\ $\;$ \end{tabular}       &  \begin{tabular}{l}  Initial allocation \\ $\;$ \end{tabular}&\begin{tabular}{l} Initial conditions for model, resources are  \\allocated according to nominal values \end{tabular}\\
& & \\\midrule
\begin{tabular}{l}$t = 1$ \\ $\;$ \end{tabular}&  \begin{tabular}{l} First round \\ $\;$ \end{tabular}& \begin{tabular}{l} Shock on external assets, loss on balance sheets \\ $\;$  \end{tabular} \\ \midrule
\begin{tabular}{l}$t = 2$ \\ $\;$ \end{tabular}&  \begin{tabular}{l} Second round begins \\ $\;$ \end{tabular} & \begin{tabular}{l}Reverberation of first round losses on the interbank network \\  according to the model's  dynamics\end{tabular}\\
& & \\\midrule
\begin{tabular}{l}$t = \infty$ \end{tabular}& \begin{tabular}{l} Second round ends  \end{tabular}   & \begin{tabular}{l}Model reaches convergence \end{tabular}\\ 
& & \\
      \bottomrule
    \end{tabular}
  \end{center}
  \vspace{-0.5cm}
  \caption{Outline of distress process in the general framework.}
  \label{tab:outlinedistress}
\end{table}

As a convenient placeholder we will indicate with $t= \infty$ the time of convergence of any model, even in those cases in which the model reaches convergence in finite time.\footnote{We further notice that all the models considered can equivalently be re-phrased in terms of fixed-points, so that at convergence if we proceed iterating the dynamics, we will always remain at the fixed-point, without further changes in the system}

The focus of the generalised framework of this paper is on the following two main quantities:

\begin{enumerate}
\item the \textit{individual vulnerability} of bank $i$
\begin{equation}
\label{eq:individual-vulnerability}
h_i(t) = \frac{E_i(0) - E_i(t)}{E_i(0)} \in [0,1]
\end{equation}
defined as the relative cumulative equity loss of bank $i$ up to time $t$,
\item the \textit{global vulnerability} of the system
\begin{equation}
\label{eq:global-vulnerability}
 H(t) = \frac{E_{tot}(0) - E_{tot}(t)}{E_{tot}(0)} = \sum_{i=1}^n \frac{E_i(0)}{\sum_{j=1}^n E_j(0)} h_i(t) \in [0, 1]
 \end{equation}
defined as the relative cumulative equity loss of the system up to time $t$.
\end{enumerate}

Clearly $h_i(t) = 1$ if bank $i$ has defaulted\footnote{Compatibly with modern literature, we assume that a default event is triggered when the bank is insolvent, ie. its liabilities are greater than its assets. For such a defaulted bank we define its equity to be exactly zero, $E_i = 0$, in accordance with the principle of limited liability} at any time up to time $t$. Further, we assume that for the entire duration of the stress-test no ``injection'' of equity is performed, so that $h_i(t)$ is non-decreasing in $t$	, i.e. $h_i(t) \geq h_i(t-1)$. 

These quantities have the obvious advantage of providing the monetary value of losses in equity incurred by the system (upon multiplying by initial equity) and are compatible with the principle of limited liability, allowing us to compare different models in a straightforward way.

\section{Conservation versus amplification of losses}
\label{sec:analytical-results}

We now explore how contagion can lead to more or less shock amplification in the financial system discussing the analytical results contained in  the \ref{sec:backgr-inform-models}. We show that, when a sufficient amount of information on the network is available, so that the EN model can be implemented, the system is \textit{conservative}, i.e. initial losses are not amplified. Increasing the uncertainty leads to larger losses and less conservative scenarios. 

\subsection{Partial ordering}
\label{subsec:partial-ordering}

Within the leverage network framework, we prove that, independently of the network structure and shock distribution, there exists a general partial ordering among the five different models in terms of individual and global vulnerabilities.

Preliminary to the analysis, we show in Proposition \ref{prop:EN-RV-leverage} that the EN and RV models can be re-expressed in terms of individual vulnerabilities, thus allowing a direct comparison across the models. A general overview of the models' main assumptions and features is outlined in Table \ref{tab:model-summary}. 

\begin{table}[!htbp]\index{typefaces!sizes}
  \footnotesize%
  \begin{center}
    \begin{tabular}{lll}
      \toprule
\textbf{Model} &\begin{tabular}{l} \textbf{Set of active nodes} \end{tabular}& \begin{tabular}{l} \textbf{Contagion mechanism} \end{tabular}\\
      \midrule
\begin{tabular}{l}EN \\ $\;$ \end{tabular}       &  \begin{tabular}{l}  Defaulted banks \\ $h_i(t) = 1$ $\;$ \end{tabular}&\begin{tabular}{l} Counterparty default losses \\ Endogenous recovery rate\end{tabular} \\ \midrule
\begin{tabular}{l} RV \\ $\;$ \end{tabular}       &  \begin{tabular}{l}  Defaulted banks \\ $h_i(t) = 1$ $\;$ \end{tabular}&\begin{tabular}{l} Counterparty default losses \\ Endogenous and \\ exogenous recovery rate\end{tabular}\\ \midrule
\begin{tabular}{l}DC \\ $\;$ \end{tabular}&  \begin{tabular}{l} Defaulted banks \\ $h_i(t) = 1 \mbox{ and } h_i(t-1) < 1$ $\;$ \end{tabular}& \begin{tabular}{l} Counterparty default losses \\ Exogenous recovery rate \\ (propagation only once) $\;$  \end{tabular} \\ \midrule
\begin{tabular}{l}aDR \\ $\;$ \end{tabular}&  \begin{tabular}{l} Distressed banks \\ $h_i(t) > 0 \mbox{ and } h_i(t-1) = 0$ $\;$ \end{tabular} & \begin{tabular}{l} Counterparty default and mark-to-market losses \\ Exogenous recovery rate \\ (propagation only once)\end{tabular}\\ \midrule
\begin{tabular}{l}cDR \end{tabular}& \begin{tabular}{l} Distressed banks \\ $h_i(t) > 0$ \end{tabular}   & \begin{tabular}{l}Counterparty default and mark-to-market losses \\ Exogenous recovery rate \end{tabular}\\ \bottomrule
    \end{tabular}
  \end{center}
  \vspace{-0.5cm}
  \caption{Summary of properties of distress propagation models.}
  \label{tab:model-summary}
\end{table}

Then,  we prove that the following chain of inequalities is always satisfied (see Propositions \ref{prop:EN<RV}, \ref{prop:RV<cDR}):

\begin{equation}
H^{\text{EN}}(t) \leq H^{\text{RV}}(t) \leq H^{\text{cDR}}(t).
\end{equation}

This result asserts that models that allow distress propagation only in the event of defaults, such as the EN and RV models, always yield lower vulnerability values than particular models incorporating MtM losses, thus corroborating the empirical evidence (see Introduction) that the greatest part of counterparty risk losses are determined by mark-to-market accounting more than outright defaults.

We further establish that for the case of the aDR model no such relationship holds and that in general

\begin{equation}
H^{\text{EN}}(t) \nleq H^{\text{aDR}}(t), \quad H^{\text{DC}}(t) \nleq H^{\text{aDR}}(t).
\end{equation}

This is a direct consequence of the fact that the aDR model, despite implementing MtM losses, allows banks to propagate distress only once. This feature can be exploited to build pathological counterexamples in which the EN, RV and DC model predict higher vulnerabilities (see Propositions \ref{prop:not-DC<aDR}, \ref{prop:not-EN<aDR}). Nevertheless it is important to emphasize that such a situation requires a certain degree of fine-tuning in the choice of parameters and underlying network structure, resulting in a rather unrealistic financial system.

We remark that in the case of all reconstructed financial networks analyzed in our empirical application, the following stable order relationship among models is satisfied (for instance, see Figure \ref{fig:time-external-assets}):

\begin{equation}
H^{\text{EN}}(t) \leq H^{\text{DC}}(t) \leq H^{\text{RV}}(t) \leq H^{\text{aDR}}(t) \leq H^{\text{cDR}}(t).
\end{equation}

As outlined in the Introduction, a large body of work on contagion is based on the EN model, which systematically yields very low second round effects, followed by the other models, where the interplay between the recovery rate and increasing uncertainty yields to increasingly higher losses. 

Why does the EN lead to very low losses? The answer lies in a specific property of this model, i.e. that initial losses are \textit{conserved}, as we shall see in the following.

\subsection{Conservation of losses in EN}
\label{subsec:conservation-losses}

An important aspect of the EN model is that it was originally conceived \citep{eisenberg2001systemic} as a \textit{clearing} algorithm, attempting to find a solution for the problem of payments among a set of interconnected firms, with a set of endogenous endowments that are less than the nominal value of the liabilities in the system (in our framework this is captured by an initial exogenous shock). Their main claim is that not only a solution for the clearing problem exists, but that - under mild assumptions - it is unique.\footnote{The solution to the clearing problem can be found either via a fixed-point argument or via a linear program. Using the latter approach, for example, \cite{liu2010sensitivity} provides a detailed sensitivity analysis of the problem.} 

Indeed, looking more in depth, the EN method requires a series of precise conditions that need to be met in order to be implemented.  These conditions represent implicit assumptions for the application of the model in a real setting. Overall, they lead to reduced sources of uncertainty and allow to reach an optimal clearing in a network of interfinancial exposures. Among others, some of the assumptions are:

\begin{enumerate}
\item External shocks are fully known and fixed in time;  \cite{barucca2016network} show that, when external shocks follow a stochastic process, instead, the valuation has to be closer to the one performed in the DebtRank model;
\item Complete knowledge of banks' balance sheets, which implies further,
\item the complete knowledge of the network; which would have eliminated a large source of uncertainty in the system during the financial crisis as described by \cite{haldane2009rethinking}.
\item All agents abide to the payment solution proposed to them given by the clearing algorithm (no uncertainty in the payments out and in) which would rule out uncertainty due to the possibility of defection. Should one of the participants refuse to pay the amount, the whole clearing process should be repeated, leading to further possible defections.
\item Institutions have full recovery of external assets, thus excluding any fire sales and liquidity costs.
\item There are no bankruptcy costs, no default would lead to a (potentially lengthy) process of unwinding of positions.
\item Even a \textit{complete default} (no remaining equity) is not enough for a node $i$ to propagate distress onto its neighbours: the shortfall in payments due by $i$ needs to exceed its equity level in order to be spread to other nodes.
\item At default, all remaining assets are \textit{liquidated immediately} and with certainty. This implies that counterparties should not react to any changes (however large) in the equity levels of their obligor and patiently wait until all external and interbank assets are used (this is in contrast with current accounting practice and with the \cite{merton1974pricing} approach).
\end{enumerate}

In particular, complete knowledge of interbank exposures and full compliance to the clearing system imply the existence of an omniscent, omnipotent central agent able to collect data and fully enforce payments (via binding legal agreements), guaranteeing optimal market clearing. Whether or not this can be achieved in a non-centralised way remains an open problem.

Under the global clearing mechanism implemented in the EN model, losses are conserved and fully mutualized among banks, thus leading to negligible second-round contagion effects (Propositions \ref{prop:EN-final-H}, \ref{prop:EN-second-round}, and \ref{prop:EN-second-round-upper-bound}). This allows us to provide an explicit closed-form

\begin{equation} 
\label{eq:en-at-convergence}
H^{\text{EN}}(\infty) = \frac{1}{\sum_{i=1}^n E_i(0)} \sum_{i=1}^n \bigg( A^{e}_i s_i - (1 - \beta_i) (\bar{p}_i - p_i(\infty)) \bigg)
\end{equation}

where $s_i$ is the shock on the external assets of bank $i$ and $\beta_i$ is its \textit{financial connectivity}, i.e. the fraction of $i's$ liabilities held by other banks. The term $\bar{p}_i - p_i(\infty)$ represents the \textit{total shortfall in payments} of $i$. In the summation of Equation (\ref{eq:en-at-convergence}) the term $A_i^e s_i$ represents the loss in external assets of bank $i$, while the term $(1 - \beta_i) (\bar{p}_i - p_i^*)$ (which is non-zero only if bank $i$ defaults) represents that portion of external liabilities that are written-off. Clearly, if $\beta_i=1, \forall i$, the total equity losses are equal to the total initial shocks on external assets, while second round losses are simply equal to those losses in external assets that are in excess of the equity of the banks defaulting during the first round.  If $\beta_i<1$, part of the shortfall in payments is borne by external debtholders (e.g. depositors). 

In a sense, the two terms have opposite effects: the shortfalls in payments that external debtholders (for instance, depositors) are forced to endure represent financial distress that flows \textit{out} of the system (hence the minus sign in front of the second term), thus reducing the second round losses that are passed onto counterparties of defaulting institutions.

We can  obtain the following upper bound on second round losses (see Proposition \ref{prop:EN-second-round-upper-bound}):

\begin{equation}
 H^{EN}(\infty) - H^{EN}(1)  \leq \frac{1}{\sum_{i=1}^n E_i(0)}  \sum_{i \in \mathcal{D}(1)} \beta_i \bigg( A^{e}_i s_i - E_i(0) \bigg). \nonumber
 \end{equation}

where $\mathcal{D}(1)$ is the set of banks that default as a consequence of the first round shock on external assets. 

We see that losses due to contagion are solely due to those first round losses that defaulting banks' equity cannot absorb at the end of the first round.\footnote{Indeed, in the original paper, \cite{eisenberg2001systemic} had provided this very intuition ``\textit{the financial system is \textit{conservative}, neither creating nor destroying value}''. Within our framework, this is explicitly proven in \ref{subsec:si-cons-loss-en}.}

For instance, if bank $i$ defaults as a consequence of first round losses, then a fraction $1 - \beta_i$ of losses in excess of its equity flows out of the network, as discussed above, while a fraction $\beta_i$ is passed onto its counterparties in the network and alone contributed to second round effects. The network then acts as a system of communicating vessels: bank $i$'s counterparties try to absorb these losses until their equity is exhausted and, if a default is triggered, they proceed to write-off external liabilities (thus determining further out-flow of distress) and interbank liabilities (thus passing distress onto their own counterparties). 

The process ends when all those initial losses have been re-distributed along the chain of counterparties and/or externalized onto depositors. Clearly, the worst-case scenario for the financial system is that in which $\beta_i = 1,\: \: \forall i$, i.e. banks do not have outside liabilities and have to internalize all shocks. Interbank losses are then maximized and the conservation law is even clearer and reads:

\begin{equation}
\label{eq:en-at-convergence-agg}
\begin{array}{ccccc}
H^{\text{EN}}(\infty) \times \sum_{i = 1}^n E_i(0) &=& \left[\sum_i E_i(0) - \sum_i E_i(\infty) \right] &=&
\\
& = & \left[\sum_i \left(E_i(0) - E_i(\infty)\right)\right]&= & \sum_i s_i A_i^e 
\end{array}
\end{equation}

It can be readily observed that equity losses in the system are thus exactly equal to the losses in assets determined by the initial external shock. The clearing mechanism simply re-distributes distress in the network in such a way that no amplification is observed. 

Dividing Equation (\ref{eq:en-at-convergence-agg}) by the original total equity in the system, one obtains (assuming a common shock $s_i = s, \: \: \forall i$):

\begin{equation}
H^{\text{EN}}(\infty) = 1 - \frac{\sum_i E_i(\infty)}{\sum_i E_i(0)} = s \frac{\sum_i A_i^e}{\sum_i E_i} = s l_{sys}^e
\end{equation}

Thus the final equity loss in the system is independent of the underlying network topology (see examples in Section \ref{subsec:si-illustrative-examples}) and is solely a function of the external shock and the total leverage of the system. 

Therefore, the implementation of the EN clearing mechanism is equivalent to aggregating all the balance sheets of the banks and computing first round losses on the aggregated entity: the structure and arrangement of the internal connections between individual balance sheets is irrelevant for the computation of the final losses ultimately shouldered by the financial system. From the point of view of a policymaker quantifying the amount of equity injection in the system during a crisis, all that matters is the original shock and the fraction of external debt. 

This means that, despite the original formulation as a recursive process on networks, in the EN model the banking system acts as a single bank with an aggregate balance sheet and satisfies the principle of conservation of losses. Therefore the algorithm is needed soley for the correct individual attribution of losses, which would incentivize participants to adopt the clearing algorithm, rather than for an aggregate assessment.

\section{Empirical application\label{sec:empirical}}

We apply the stress-test framework presented in Section \ref{sec:general-framework} in a cross-national setting, studying the vulnerability of the top $50$ publicly listed banks on the EU interbank lending market according to the different models. We have collected quarterly balance sheet data for these banks from 2005-Q4 to 2015-Q3.\footnote{Details concerning data collection and processing can be found in Section \ref{subsec:si-data-collection}}

For each quarter in our dataset we proceed to reconstruct $1000$ realizations of the network of interbank exposures via a modified fitness model \citep{demasi2006fitness,musmeci2012bootstrapping,cimini2015systemic,
cimini2015estimating,battiston2016leveraging}, as explained in more detail in Section \ref{subsec:si-network-reconstruction}. For each quarter, we run the five different models\footnote{For background information on the models, see Section \ref{subsec:si-contagion-models}} on every realization of the $1000$ interbank networks and compare the resulting median global vulnerabilities. 

We explore the behavior of the different models across several dimensions in the space of parameters, in particular we study how global vulnerabilities are affected by:
\begin{itemize}
\item different asset classes (all external assets, derivatives and impaired loans)
\item varying magnitude of external shocks
\item varying exogenous recovery rates
\end{itemize} 
and we identify the limits under which different models yield similar results.

\subsection{Vulnerability in time across asset classes}
\label{sec:time}

As a starting point we compute first round, $H(1)$, and second round, $H(\infty)$, losses\footnote{Notice that the value of $H(1)$, depending only on the initial balance sheets and external shock, is identical for all models. Only  second round effects, quantified as $H(\infty)$, are different, since they're the result of  the particular dynamics of the individual contagion processes} for a $1\%$ fixed shock on the external assets of each bank and compare models across the whole period 2005-Q4 - 2015-Q3, on a quarterly basis.

\begin{figure}[H]
\centering
\includegraphics[width=\columnwidth]{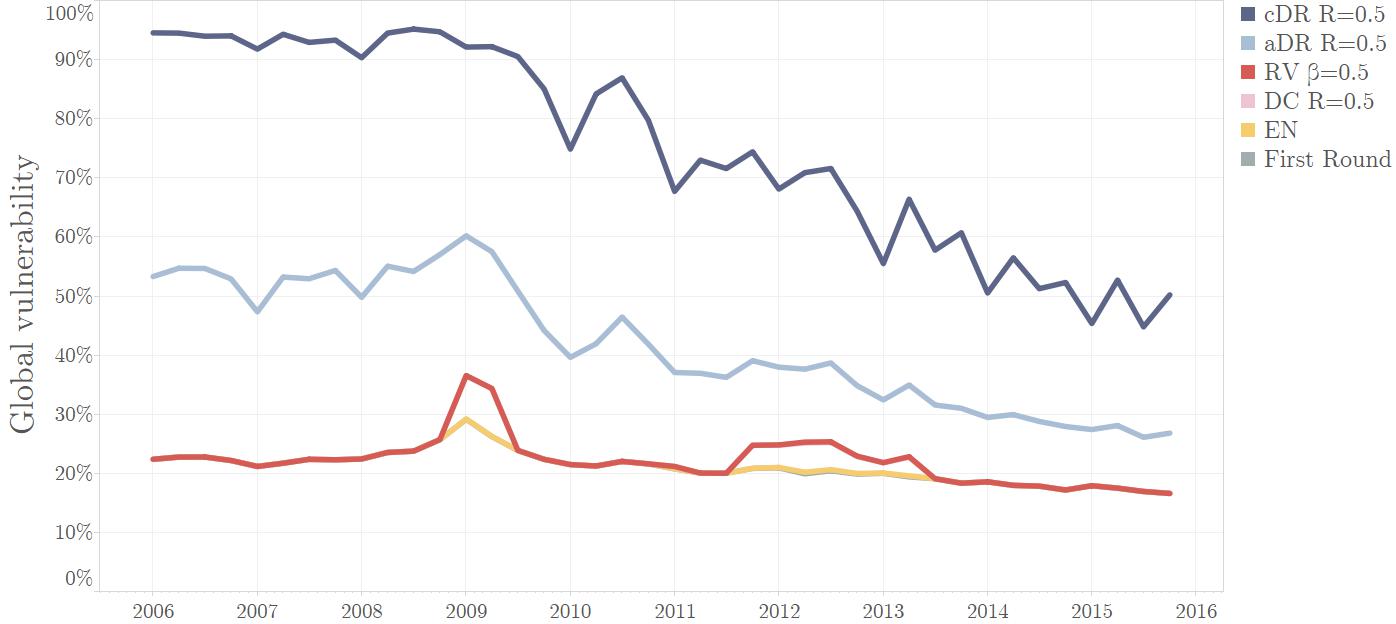}
\caption{Global vulnerability, $H$, in time for a 1\% shock on external assets. ``First round'' corresponds to $H(1)$, identical across all models. Second round, $H(\infty)$, for each of the five models.}
\label{fig:time-external-assets}
\end{figure}

Figure \ref{fig:time-external-assets} presents the results of such a simulation. The global vulnerability for the EN model, $H^{EN}(\infty)$, is almost identical to the first round and covers it entirely in the picture, while the DC model lies underneath the RV model.

We notice that the aDR and cDR models constantly result in higher levels of vulnerability, while the DC model yields only slightly smaller values than the RV model. That this is always the case is demonstrated analytically in Section \ref{subsubsec:si-relations-among-models}, together with several other relations among the models.

This, in particular, implies that the models explicitly integrating the practice of mark-to-market accounting are able to take into account losses due to CVA that the standard EN and RV model simply cannot capture.

It is also evident that the EN model fails to produce any visible second round effect, even though the banking system in that period faced its biggest financial crisis since 1929. This is the result of several assumptions underlying the EN model, as illustrated in Section \ref{subsec:partial-ordering} and proved analytically in Section \ref{subsec:si-cons-loss-en}, that indicate that the EN model is not suited for stress-testing. In this regard it is particularly worrying that most practitioners used this model in reaching the conclusion that financial contagion through the interbank market is unlikely. Our findings prove that the EN model, created as a clearing payment system, was indeed designed to minimize this kind of losses and all the assumptions on which it's based are clearly non satisfied in a context of stress-testing.

It is worth emphasizing that, nevertheless, the RV model, thanks to the implementation of an exogenous recovery rate, is able to capture a portion of the equity losses at times when financial distress was at its peak (namely the first quarters of 2009, when EU banks registered the highest losses in market capitalization, and the 2012-2013 sovereign debt and Greek crises). But these second round effects are limited to the periods of greatest strain in the system and can simply be used to identify periods of full-fledged crisis when it is well under way. In contrast, the aDR and cDR models provide visible trends in second round effects, clearly showing a build-up of vulnerability in the years preceeding the crisis and a generalized easing of conditions later on. In particular, as indicated by \cite{battiston2016financial} and \cite{bardoscia2016paths}, the values for cDR can be conveniently related to the first eigenvalue of the leverage matrix.

\begin{figure}[!htbp]
\centering
\includegraphics[width=\columnwidth]{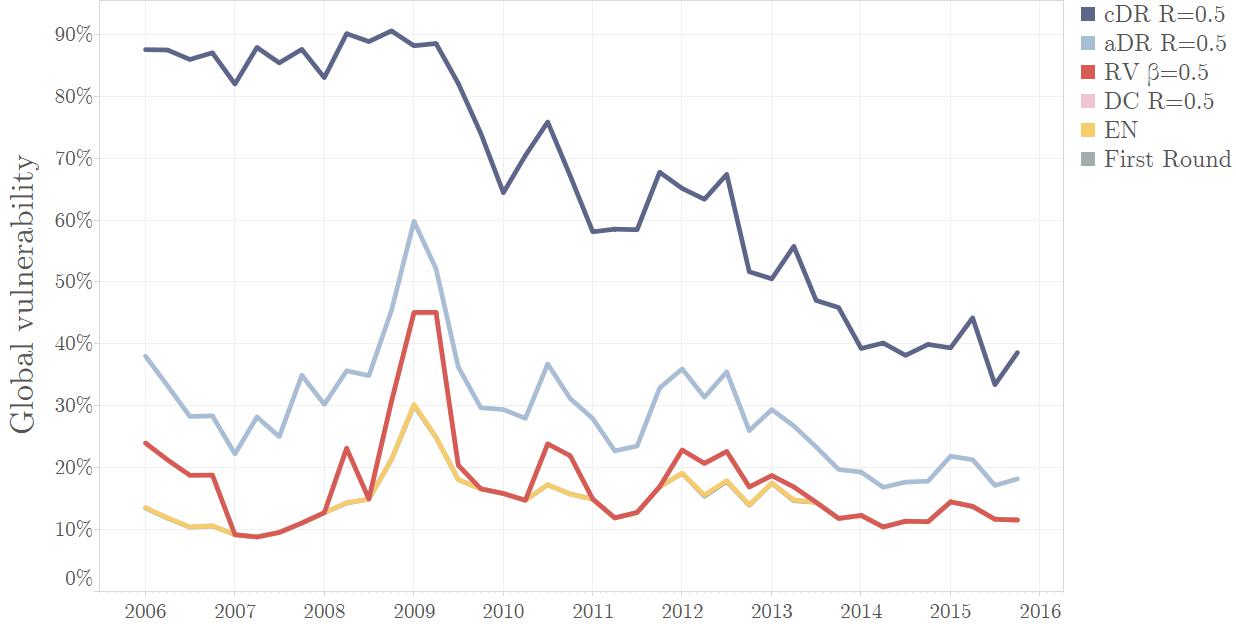}
\caption{Global vulnerability, $H$, in time for a 5\% shock on all derivatives. ``First round'' corresponds to $H(1)$, identical across all models. Second round, $H(\infty)$, for each of the five models.}
\label{fig:time-derivatives}
\end{figure}

\begin{figure}[H]
\centering
\includegraphics[width=\columnwidth]{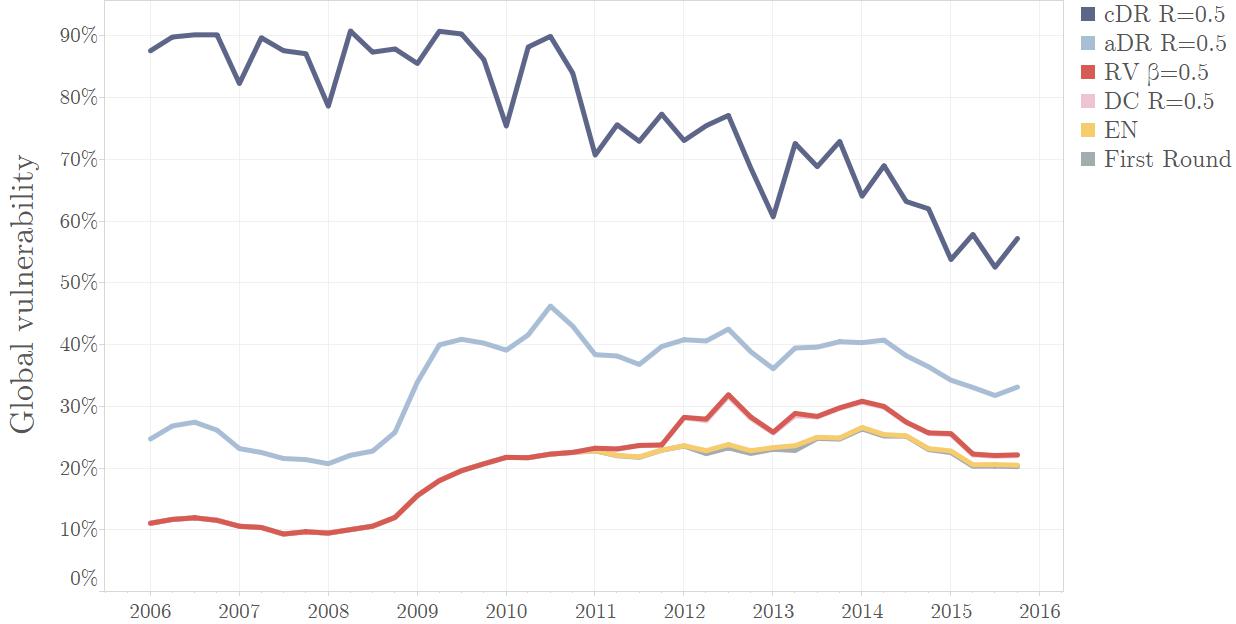}
\caption{Global vulnerability, $H$, in time for a 40\% shock on all impaired loans. ``First round'' corresponds to $H(1)$, identical across all models. Second round, $H(\infty)$, for each of the five models.}
\label{fig:time-impaired-loans}
\end{figure}

Figures \ref{fig:time-derivatives} and \ref{fig:time-impaired-loans} show exactly the same results for shocks\footnote{We do not seek to justify this particular choice of shocks on derivatives and impaired loans, since this is meant to be a comparative analysis of different distress models and not an actual stress-test. The magnitude of these shocks has therefore been chosen in such a way that first round losses are roughly comparable with the $1\%$ shock presented in Figure \ref{fig:time-external-assets}} on different asset classes, namely a shock of $7\%$ on all derivatives and a shock of $40\%$ on all impaired loans\footnote{Banks have considerable discretion in how they classify impaired loans and there is no commonly accepted definition. A loan is usually defined impaired when payments are 90+ days overdue, but in our dataset, as in the case of all data providers, the figure of impaired loans might be undereported}.

The decomposition of contagion through particular asset classes allows us to identify the assets that most contribute to fragility in the system and to understand how systemic risk migrates from one typology of investment to another.

Derivatives, for example, appear to have contributed strongly to the vulnerability of banks during the financial crisis because they were particularly leveraged with respect to this kind of instruments.

Impaired loans, on the other hand, seem to pose a systemic risk mainly in times of recession,\footnote{It must be emphasized nevertheless how the widespread securitisation of loans, especially sub-standard ones, might have contributed substantially to the financial crisis, but due to the fact that these assets were registered off the balance sheets of most banks, we couldn't capture them in our data} when sluggish growth in the real economy implies higher default probabilities on retail, residential and commercial loans.

\subsection{Vulnerability for different shocks}
\label{sec:shocks}

\begin{figure}[H]
\centering
\includegraphics[width=\columnwidth]{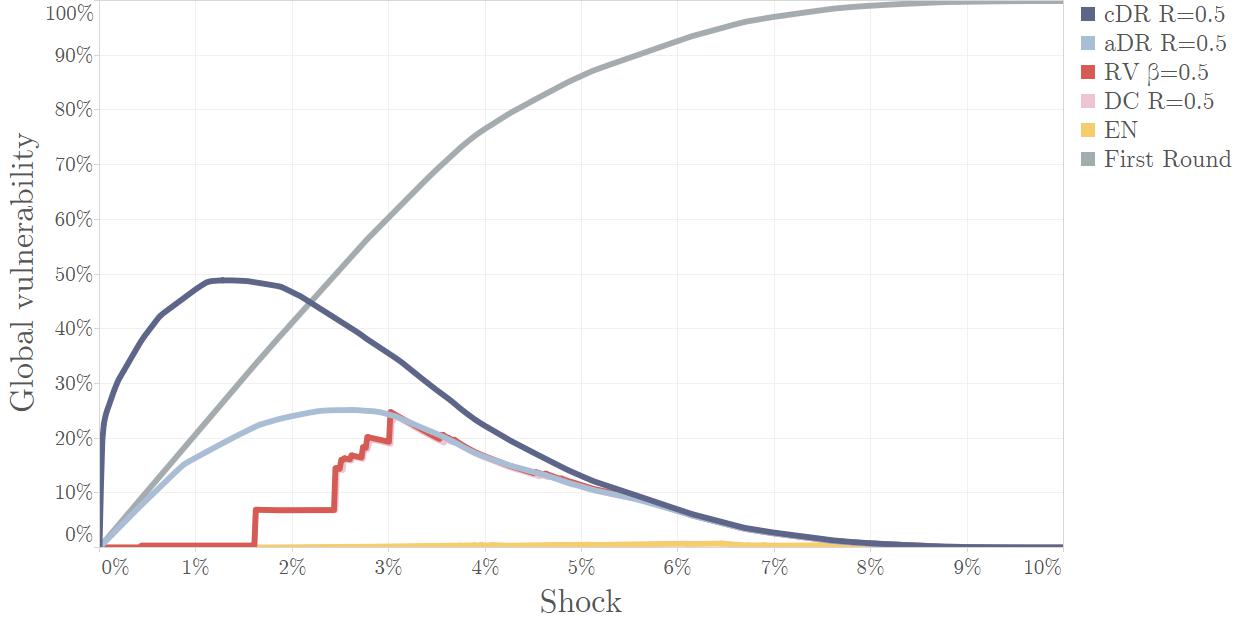}
\caption{Global vulnerability, $H$, as a function of shock on external assets. ``First round'' corresponds to $H(1)$, identical across all models. Second round, $H(\infty)$, for each of the five models. Simulation on a unique realization of the interbank network in 2010-Q2.}
\label{fig:shocks-H}
\end{figure}

The dependence of second round effects on the size of the initial shock is another important aspect of contagion models. 

In Figure \ref{fig:shocks-H} we plotted first round effects and second round effects for the aDR, RV, DC and EN models as functions of shock on all external assets in 2010-Q2 for a given realization of the interbank network in that quarter.  It can be seen how shocks and second round effects are, quite intuitively, not trivially proportional, in fact as the shock is increased the fraction of losses incurred during the first round becomes predominant, literally leaving smaller and smaller amounts of equity for the second round to deplete.

For small shocks ($s < 1.5\%$) the RV, EN and DC models capture no significant second round losses, because the shock is so small that virtually no bank is in default and therefore no bank can propagate distress. The aDR and cDR on the other hand, accounting for CVA, show second round losses due to mark-to-market accounting even for those shocks. 
The relationship between fraction of defaulted banks and shocks is presented for the very same simulation in Figure \ref{fig:shocks-defaulted}, where it can be seen that it takes a certain fraction of first round defaults to trigger second round losses under the EN, RV and DC models.

\begin{figure}[H]
\centering
\includegraphics[width=\columnwidth]{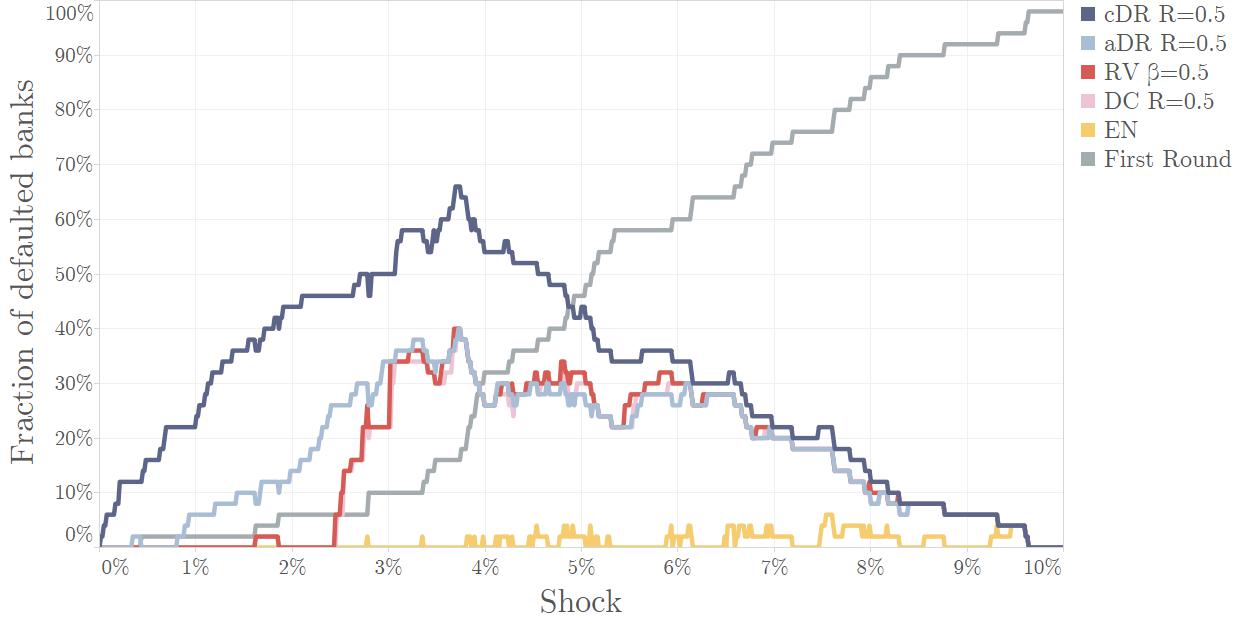}
\caption{Fraction of defaulted banks in the system as a function of shock on external assets. ``First round'' corresponds to fraction of banks defaulted under the external shock and is identical across all models. Fraction of defaulted banks at second round is shown for each of the five models. Simulation on a unique realization of the interbank network in 2010-Q2.}
\label{fig:shocks-defaulted}
\end{figure}

For medium sized shocks ($ 1.5\% < s < 3\%$) the second round effects of every model, except aDR and cDR, are strongly network-dependent. Losses are triggered depending on whether or not certain well-connected institutions are shocked and different shocks imply different defaults at first round. This in particular implies that to compute expected monetary losses from a stress-test based on these models with reasonable accuracy, full and exact knowledge of the underlying interbank network must be assumed. 

For large shocks ($s > 3\%$) the models converge roughly to a common estimate of second round losses, due to the fact that the majority of banks have, by now, defaulted at the first round and are all able to propagate financial distress to the few solvent counterparties left.

\subsection{Vulnerability for different recovery rates}
\label{sec:recovery-rates}

Mark-to-market contagion models, like aDR and cDR, tend to yield substantially higher vulnerabilities. An important connection between mark-to-market models and the other ones is given by the recovery rate. 

In Figure \ref{fig:recovery-rates-H} we present results similar to those in Figure \ref{fig:shocks-H}, but for three distinct values of the recovery rate and only for the particularly interesting cases of the aDR and RV models.
As can be seen the models tend to converge to the same distribution of second round effects for sufficiently large shocks, remaining substantially different only for small shocks. 

\begin{figure}[!htbp]
\centering
\includegraphics[width=\columnwidth]{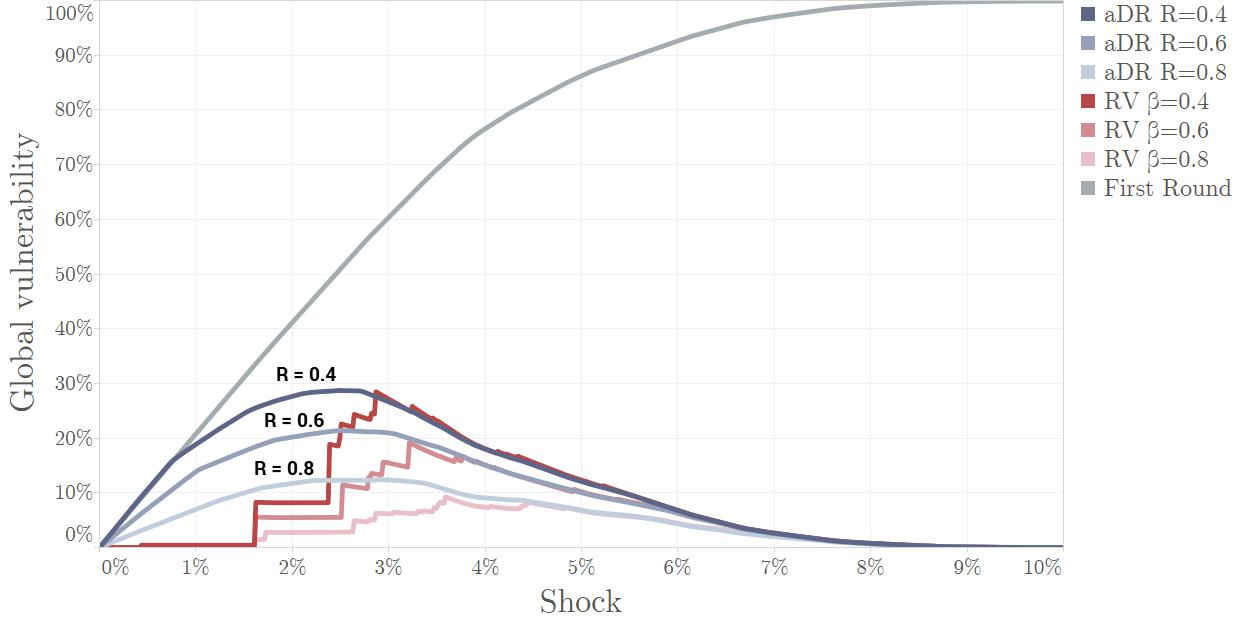}
\caption{Global vulnerability as a function of shock on external assets. First round and second round for aDR and RV for increasing values of the recovery rate. Simulation on a unique realization of the interbank network in 2010-Q2.}
\label{fig:recovery-rates-H}
\end{figure}

\begin{figure}[!htbp]
\centering
\includegraphics[width=\columnwidth]{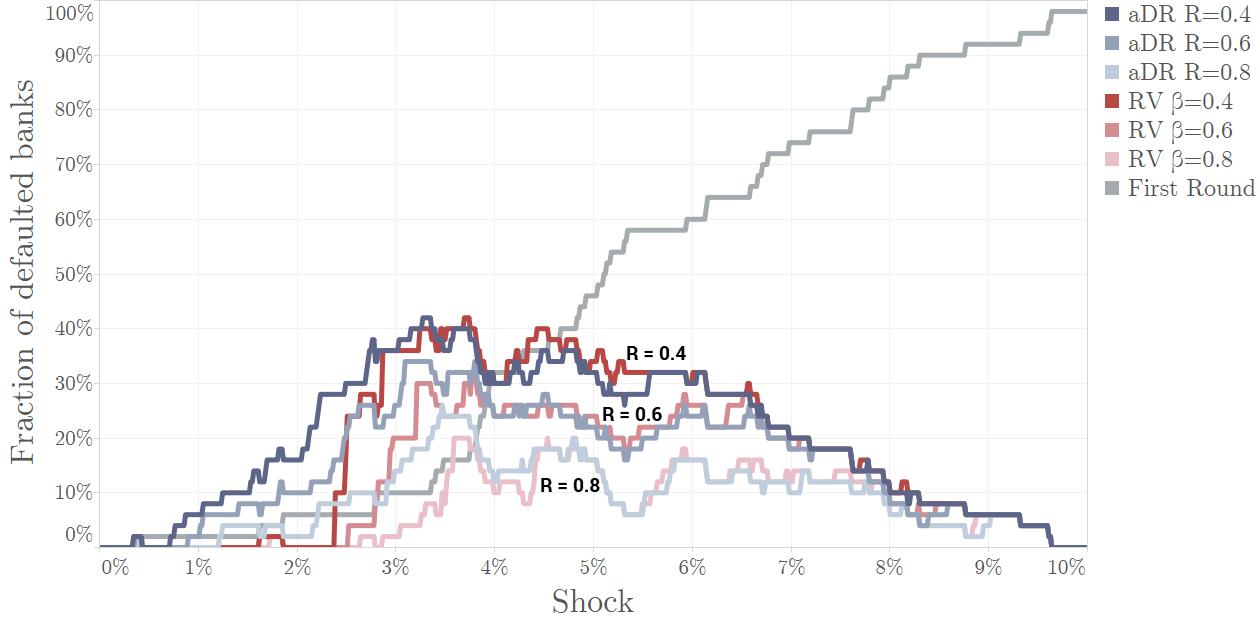}
\caption{Fraction of defaulted banks as a function of shock on external assets. First round and second round for aDR and RV for increasing values of the recovery rate. Simulation on a unique realization of the interbank network in 2010-Q2.}
\label{fig:recovery-rates-defaulted}
\end{figure}

\section{Discussion and policy relevance\label{sec:discussion}}

By introducing a common framework (based on the notion of \textit{leverage network}) for several contagion models, we prove analytical results about the amount of contagion in a set of financial institutions with interconnected liabilities. In particular, the ordering relationship we find is suitable to compare models as it holds \textit{regardless} of the network structure.

Overall, our results point towards the conclusion that the level of systemic losses arising from interconnectedness crucially depends on the amount of information about the network structure, coupled with the expected recovery rate on the liabilities. We find that systemic effects are increasing in the levels of uncertainty and exacerbated by an interconnected system.

We show that the well-known Eisenberg and Noe contagion model is \textit{conservative}, in the sense that initial losses are not amplified by the network of liabilities but merely redistributed. Adopting this type of approach to ``clear'' liabilities during a crisis would be then desirable in order to reduce systemic losses to a minimum level. As losses would be in great part limited and fully absorbed by individual equity cushions, this would put into a new perspective both the notions of ``too big to fail'' and ``too interconnected to fail''.

However, as we have discussed, for the EN approach to be applied, a series of conditions need to be satisfied. In particular, it requires full knowledge of the network of liabilities. These specific condition was not present, for example, when the 2007-2008 crisis ensued. As \cite{haldane2009rethinking} points out, the financial system came to a ``standstill'' after Lehman's default, as its position in the CDS market ``was believed'' to be large. Back then, this implied two levels of uncertainty:  the first is about the value of external assets, the second is about the network of exposures, thus making almost impossible to implement the EN approach in that setting. 

This leads to an important policy implication of our work. As an EN-like model is more desirable to curb systemic losses, how could it be  applied in a real setting?

First, more data at the level of individual positions should be made available to policymakers. The increase in data reporting is indeed the road taken by recent reforms such as the European Market Infrastructure Regulation (EMIR) for OTC derivatives in the European Union\footnote{\url{http://eur-lex.europa.eu/legal-content/EN/TXT/?uri=CELEX:32012R0648}} and the Dodd Frank Act in the US. Policymakers should then investigate the key aspects of the financial network, including mapping ``risk flows'' \citep{derrico2015passing} and the ``geography or risk''.\footnote{As the Governor of the Central bank of Ireland Philip R. Lane has very recently pointed out in a speech \url{http://www.bis.org/review/r160610b.pdf}.}

Second, encouraging  the possibility of setting up specific institutional arrangements or post-trade infrastructures, under these conditions, would greatly help in applying to its fullest the EN algorithm and limit the losses. These arrangements would need to be ideally legally binding (at least to a certain extent). In this situation, the system would be able to move towards more and more conservative losses. In the presence of the correct type and amount of data, the regulator or the infrastructure would be able to enforce almost instantaneously the payments due by each participant in order to minimise losses.  
If correctly put in place (despite the natural operational difficulties), this policy (data availability) can prove to be helpful for mitigation. This operation could also include market players directly:  Nobel laureate E. Fama has recently suggested that, during the crisis,  market participants could have even ``unscrambled everything'' in ``a week or two''.\footnote{\url{http://www.newyorker.com/news/john-cassidy/interview-with-eugene-fama}} As a more long-term objective, it would be even possible to run a \textit{real-time clearing} of the financial system that would help policymakers to understand the build up of systemic distress.


However, our work provides only theoretical support and not yet an operative solution on how to achieve this. Further work is necessary to bring the theoretical finding to a fully operative level. We are aware that a complete conservation of losses may be not fully achieved when the maturity structure and the complexity of financial products grows: micro and macroprudential policies should continue taking such complexity into account. Efforts in this direction include pre-trade activities such as central clearing \citep{abad2016shedding} and post-trades practices such as compression \citep{derrico2016compression}, which are becoming increasingly adopted by market participants, and leading to a more structured system.





\clearpage

\appendix
\renewcommand{\thesection}{SI}
\section{Supplementary Information}
\label{sec:backgr-inform-models}

\subsection{Five contagion models}
\label{subsec:si-contagion-models}

We now briefly present the five contagion models analyzed in the paper within the general stress-testing framework presented in Section \ref{subsec:leverage-networks}.

\subsubsection{Eisenberg-Noe model (EN)}
\label{subsubsec:si-en}

The Eisenberg-Noe model was first introduced in \cite{eisenberg2001systemic} as a clearing payment mechanism for a network of financial institutions in which some nodes had defaulted. The EN model addresses this issue by determining the payments institutions have to make in order to minimize individual and collective losses, under the assumptions of absence of bankrutpcy costs and liquidation costs. As such it is not a model specifically designed for stress-testing, but has been used extensively in recent literature (see \cite{elsinger2006risk}, \cite{glasserman2015likely}, among many others) exactly for this purpose. We briefly review the main definitions and results, connecting them with our general framework formulation.

In the same context of the financial system described in Section \ref{sec:general-framework}, define the \textit{total obligations vector} $\bar{p}$, with components 
\begin{equation}
\bar{p}_i = \sum_{j=1}^n L^b_{ij} + L^e_i \nonumber
\end{equation}
as the total liabilities of bank $i$, both internal and external to the network, with equal seniority.\footnote{In the original formulation of the model \citep{eisenberg2001systemic} banks are assumed to have no external liabilities. However, a system in which $\beta_i = 1 \forall i$ is incompatible with a system for all banks have leverage strictly larger than one: only-lenders banks must have unit leverage. However, banks are naturally deposit-collecting institutions from the real economy. So having a fraction $1- \beta_i > 0$  of deposits over total liabilities represents a more realistic choice. Hence, we present here a reformulation of the model used in virtually all recent works in the field, that is the natural extension of the old model in the presence of external liabilities and proportionality of payments. For more details, see \cite{elsinger2013models} and for an application see \cite{glasserman2015likely}.}
Let \begin{equation}
\Pi_{ij} = 
\begin{cases}
L_{ij}/\bar{p}_i \quad & \mbox{ if $\bar{p}_i > 0$} \\
0 & \mbox{ otherwise}
\end{cases}
\end{equation}
be the \textit{relative liabilities matrix}. This matrix represents the relative value of nominal liabilities of bank $i$ to bank $j$ as a proportion of the debtor's total liabilities. Generally $\Pi$ is a row sub-stochastic matrix, unless external liabilities are zero, in which case $\Pi$ is row stochastic. Define $A_i^e$ as the external assets of bank $i$, so that the \textit{net worth} of a bank is given by
\begin{equation}
\label{eq:EN-net-worth}
\tilde{E}_i(\bar{p}) = \sum_{j=1}^n \Pi_{ij}^T \bar{p}_j + A_i^e - \bar{p}_i
\end{equation}
This expression corresponds to the book-value of the equity of the bank, $E_i$, only in case $\tilde{E}_i(\bar{p})$ is non-negative. If it is strictly negative, then it indicates that the bank's assets are less than its liabilities, which means that the bank is \textit{insolvent} and cannot honor its liabilities completely and creditors will therefore necessarily accept debt write-offs.

Let us now introduce the \textit{payment vector}, $p$, representing all payments of bank $i$ to debtholders, both internal and external to the network. In general not all payments vectors will be feasible, in the sense that for an arbitrary choice banks might be required to pay more than they hold in assets or less than  their liabilities.

\cite{eisenberg2001systemic} have proved, under fairly general conditions, the existence and uniqueness of a \emph{clearing payment vector}, i.e. a particular payment vector that satisfies the basic accounting principles of: i) limited liabitility, ii) absolute priority, and iii) proportionality of claims; that is
\begin{equation}
\label{eq:en-clearing-payment-vector}
p_i^* = \min \bigg\{ \sum_{j=1}^n \Pi_{ij}^T p_j^* + A_i^e, \bar{p}_i \bigg\}.
\end{equation}
Such a clearing payment vector can be found through a discrete-time algorithm, called Fictitious Default Algorithm. This algorithm, starting from the candidate payment vector $\bar{p}$, produces a sequence of plausible payment vectors, $p(t)$, until convergence is reached in maximum $n$ (i.e. the number of nodes in the network) iterations. At convergence, the clearing payment vector $p^* = p(\infty)$ thus yields to:
\begin{equation}
E_i(\infty) =  
\begin{cases}
0 \quad & \mbox{if $p_i = \bar{p}_i$} \\
\sum_{j=1}^n \Pi_{ij}^T p_j(\infty) + A_i^e - p_i(\infty) \quad & \mbox{otherwise.}
\end{cases}
\end{equation}

Notice that at convergence the equity of defaulted banks is necessarily zero. From now on we will always indicate the clearing payment vector with the symbol $ p(\infty)$, to stress that we will work solely with the discrete-time Fictitious Default Algorithm implementation of the model, in agreement with the remarks made in Section \ref{subsec:first-second-round-effects}.

A key quantity in our framework is the \emph{endogenous recovery rate}, defined as 
\begin{equation}
\label{eq:en-recovery-rate}
p_j(\infty) / \bar{p}_j \in [0, 1]
\end{equation}

which determines the fraction of nominal value of liabilities that bank $j$ can effectively repay to its creditors once the system is cleared. For non-defaulting banks this value is always $1$.

The algorithm can be used in stress-testing, as we do, by ``shocking'' the external assets of banks by a given fraction, $s_i$, so that the equity at $t=1$ is

$$ \tilde{E}_i(1) = \max \bigg\{0, \sum_{j=1}^n \Pi_{ij}^T \bar{p}_j + (1 - s_i)A_i^e - \bar{p}_i \bigg\}$$

and from $t=2$ the Fictitious Default Algorithm starts on the shocked balance sheets, so that at the $t$-th step the equities read
\begin{equation}
\label{eq:en-equities-time-t}
\tilde{E}_i(t) = \max \bigg\{0, \sum_{j=1}^n \Pi_{ij}^T p_j(t) + (1 - s_i)A_i^e - p_i(t) \bigg\},
\end{equation}

\subsubsection{Rogers-Veraart model (RV)}
\label{subsubsec:si-rv}
The Rogers-Veraart model \citep{rogers2013failure} is a generalization of the Eisenberg-Noe model that takes into account bankruptcy and liquidation costs. Equation \ref{eq:en-clearing-payment-vector} can be modified to account for these effects in the following way:

\begin{equation}
\label{eq:rv-clearing-payment-vector}
p_i^* = \min \bigg\{ \beta \sum_{j=1}^n \Pi_{ij}^T p_j^* + \alpha A_i^e, \bar{p}_i \bigg\},
\end{equation}

where $\alpha$ is a coefficient quantifying the fraction of nominal value of the external assets recovered after liquidation costs (recovery rate on external assets) and $\beta$ quantifies the fraction of nominal value of interbank assets that the defaulted bank can recover after bankruptcy costs or, alternatively, corresponds to a discount factor on the liquidated interbank claim due to early settlement.

Rogers and Veraart show that the results of existence and uniqueness of the clearing payment vector are true also for this extension of the model and a straightforward variation of the Fictitious Default Algorithm allows efficient computation in exactly the same fashion.

We will always deal with the case $\alpha = \beta$ and denote this common value\footnote{This assumption leads to substantial improvements in mathematical tractability and for this reason it is also made, for example, in \cite{glasserman2015likely}.} by $\beta$. This parameter $\beta$ can be interpreted as an additional multiplicative factor that compounds the endogeneous recovery rate obtained through the algorithm, the latter being computed as in Equation \ref{eq:en-recovery-rate}.

\subsubsection{Default Cascades model (DC)}
\label{subsubsec:si-dc}

An alternative approach to the one proposed by Eisenberg and Noe is that of modelling the propagation of losses from defaulted banks to counterparties directly in terms of a contagion process on the equity of banks. This approach was followed in \cite{battiston2016price} and \cite{battiston2012credit}.

Using the definitions of vulnerability outlined in Section \ref{sec:general-framework}, we define recursively a discrete-time dynamic process on the graph given by the following equation:
\begin{equation}
 h_i(t+1) = \min \bigg\{1, h_i(t) + \sum_{j \in \mathcal{A}(t)} (1 - R)l_{ij}^b h_j(t) \bigg\}, \nonumber
 \end{equation}
where $l_{ij}^b = A_{ij}^b / E_i(0)$ is the interbank leverage of bank $i$ towards bank $j$, $R$ is the exogenous recovery rate, and $\mathcal{A}(t)$ is the \textit{set of active nodes} at time $t$, that is the set of nodes that can propagate distress at time $t+1$, defined as
\begin{equation}
\mathcal{A}(t) = \{ j \:\:\: | \:\:\: h_j(t) = 1 \mbox{ and } h_j(t') < 1, \quad \forall t'<t \}.\nonumber
\end{equation}

The process is easily explained as follows. When bank $j$ defaults, that is $h_j(t) = 1$, it reduces its payments to counterparty $i$ of a value $(1 - R)A_{ij}^b$. This loss in asset value is absorbed by the equity of its counterparty $i$ and determines a relative equity loss of magnitude $(1 - R) A_{ij}^b / E_i(0)$. If the total losses incurred by $i$ up to time $t+1$ exceed the value of equity at $t=0$, then $h_i(t+1) = 1$ and bank $i$, having exhausted its shock buffer against losses, is itself in default. The peculiar definition of the set of active nodes ensures that a node imparts losses to counterparties: i) only once, and ii) exactly at the time step immediately following its default. 

Once a bank has defaulted, its vulnerability will remain equal to one in all successive iterations. Since losses are propagated on the simple directed graph induced by the interbank nominal liabilities matrix and cycles are ignored, convergence to $h(\infty)$ is reached after at most $\text{diam}(L)$ iterations.\footnote{The diameter $\text{diam}(L)$ of the interbank network, is the maximum length of all shortest paths in the network between any pair of nodes}

\subsubsection{Acyclic DebtRank model (aDR)}
\label{subsubsec:si-adr}

All the models seen so far assume that losses are propagated exclusively by the default event of a counterparty. In other words they assume that only defaults can determine a loss in asset values on the balance sheets of counterparties.

Nevertheless, the current practice requires banks to implement a mark-to-market evaluation of their current claims against their respective counterparties. In the derivative market, for instance, this goes under the name of Credit Valuation Adjustment (CVA).\footnote{Introduced by the \cite{ifsa2006statement} for US banks before the crisis and subsequently internationally sanctioned by the \cite{basel2010global}, CVA is an instance of Fair Value Accounting (FVA), in that it involves reporting assets and liabilities on the balance sheet at fair value and recognizing changes in fair value as gains and losses in the income statement. All regulatory implementations require the use of mark-to-market (MtM) values for fair value accounting, whenever available. In particular, CVA takes into account counterparty and credit risk by requiring to explicitly account for the deterioration in creditworthiness of counterparties (according to some plausible models).  See, for instance, the BCBS Consultative Document on CVA at \url{http://www.bis.org/bcbs/publ/d325.pdf} or the more recent EBA report at \url{https://www.eba.europa.eu/documents/10180/950548/EBA+Report+on+CVA.pdf}.
Since its early implementation CVA has attracted numerous detractors \citep{aba2008letter,wallison2008fair,wallison2008judgment}. For a review on the current state of the debate on FVA see \cite{laux2009crisis} and \cite{laux2010did} claiming that such an endogenous revaluation of inter-financial claims could lead to, so called, \textit{downward spirals}. A downward spiral occurs when a small shock decreasing the equity of a single bank triggers massive asset devaluations on the balance sheets of its counterparties, that are forced under CVA regulations to take into account its increased default probability (in agreement with \cite{merton1974pricing}). These devaluations, in turn, determine further equity losses that keep propagating to higher-order counterparties along the complex contractual interconnections of the financial system. In particular, notice that such a contagion mechanism does not require actual defaults to act as triggers for distress propagation, being able to generate substantial losses even in the absence of any bankruptcy. 
 The change in a counterparty's ability to repay may influence not only the mark-to-market value of a claim, but also the amount of collateral the counterparty may have to post: we do not tackle explicitly this problem in this paper, but the framework can be extended also to this case.

}

In the case of a debt security, $A_{ij}^b$, the expected value at time $t$ of an interbank debt claim of nominal value $A_{ij}^b$ will in general be given by
\begin{equation}
\label{eq:expected_value}
 \mathbb{V}_t(A_{ij}^b) = \underbrace{p_j^D(t) R A_{ij}^b}_{\text{payoff given default}} + (1 - p_j^D(t)) A_{ij}^b 
 \end{equation}

where $p_j^D(t)$ is the default probability of bank $j$ estimated at time $t$ and reflects the ability of the counterparty to honor its debts. 

This probability may reflect different \textit{sources of uncertainty} influencing the likelihood that the claim is actually paid (and in what amount). We envision here two main sources of uncertainty: the first source relates to the potential changes in the external assets $A_j^e$ of the counterparty, which can be modeled as a stochastic process: \cite{barucca2016network} provide a detailed account of this case and a formal relation with the standard structural Merton approach in a network context. The second source of uncertainty relates to the case in which the network of liabilities may be unknown \citep{haldane2009rethinking,caballero2013fire} or may evolve after the shock. In this work we do not model these cases explicitly, but we restrict our analysis to the fact that $\mathbb{V}_t(A_{ij}^b)$ is the expect value of a Bernoulli random variable as in Equation \ref{eq:expected_value}.

Several models are commonly used to estimate default probabilities.\footnote{E.g. approaches based on the Merton model \citep{merton1974pricing}. Other methods use implied default probabilities from the pricing of CDSs.} If we assume that the default probability of a bank is equal to its individual vulnerability\footnote{In \cite{battiston2016leveraging} it is pointed out that an alternative choice would be to maintain that $p_j^D(t) = f(h_j(t))$, where $f$ is a generic function of arbitrary form. The Acyclic DebtRank model, therefore, simply corresponds to the simplest linear choice for $f$.}, ie $p_j^D(t) = h_j(t)$, we obtain the following financial distress propagation process:
\begin{equation}
h_i(t+1) = \min \bigg\{1, h_i(t) + \sum_{j \in \mathcal{A}(t)} (1 - R)l_{ij}^b h_j(t) \bigg\} \nonumber
\end{equation}
where now, unlike the case of the DC model, the set of active nodes is defined as
\begin{equation}
 \mathcal{A}(t) = \{ j \:\:\: | \:\:\: h_j(t) > 0 \mbox{ and } h_j(t') = 0, \quad \forall t'<t \} \nonumber
 \end{equation}

This simple change in definition implies that banks can propagate distress as soon as they lose a fraction of their equity, even before their default (as required by CVA). Nevertheless, once a bank has propagated distress it will not be able to transmit further losses, despite still being able to receive them. This is equivalent to saying that in the graph process only first cycles are considered.

\subsubsection{Cyclic DebtRank (cDR)}
\label{subsubsec:si-cdr}

The Cyclic DebtRank model was proposed in \cite{bardoscia2015debtrank} as an extension of the Acyclic DebtRank model to allow distress to be propagated along all paths in the network, including all cycles. The extension also allows to overcome the mathematical intractabilities resulting from the  definition of the set of active nodes, $\mathcal{A}(t)$, in the Acyclic DebtRank model, thus allowing a full exploitation of the linearity underlying the model, with important consequences for the analysis of the stability of the system.\footnote{See  \citep{battiston2016financial}, \citep{bardoscia2015debtrank}, and \citep{bardoscia2016paths} for results concerning the relevance of the spectral properties of the interbank leverage matrix for the stability of the system} The process is defined as follows:
\vspace{-0.4cm}
$$ h_i(t+1) = \min \bigg\{1, h_i(t) + \sum_{j = 1}^n (1 - R)l_{ij}^b \big[ h_j(t) - h_j(t-1) \big] \bigg\} $$
where now the set of active nodes has been removed and summation runs over all counterparties.

\subsection{Analytical results on systemic losses}
\label{subsec:si-analytical-results}

\subsubsection{Ordering relations between EN, RV, DC, DR}
\label{subsubsec:si-relations-among-models}

We will now present a series of propositions in order to clarify the ordering relation among the different models in terms of individual and global vulnerabilities. This order relation holds regardless of the network structure.\footnote{The idea that the network structure alone does not determine the levels of contagion was explored in \cite{roukny2013default} by means of simulations.} 

The crucial step in this direction consists in re-expressing the Eisenberg-Noe and Rogers-Veraart models, normally formulated in terms of payment vectors, as recursive processes on individual vulnerabilities, $h_i(t)$. Proposition \ref{prop:EN-RV-leverage} partially provides such a variable transformation, highlighting at the same time the importance of leverage as a key quantity for quantifying distress propagation.

\begin{proposition}[EN/RV in terms of leverage]
\label{prop:EN-RV-leverage}
The Einseberg-Noe and Rogers-Veraart models can be re-expressed in terms of individual vulnerability as discrete-time processes on the interbank leverage network in the following way:

$$ h_i (t +  1) = \min \bigg\{ 1, h_i(t) + \sum_{j=1}^n l^b_{ij} \bigg( \frac{p_j(t - 1) - p_j(t)}{\bar{p}_j} \bigg) \bigg\} $$

where $p(t)$ is the payment vector at the $t$-th iteration of the Fictitious Default Algorithm.
\end{proposition}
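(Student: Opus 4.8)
The plan is to start from the defining recursions of the EN and RV models in terms of payment vectors and translate them, step by step, into the equity variables $\tilde E_i(t)$ and then into the normalized vulnerabilities $h_i(t)$. First I would write down, at iteration $t$ of the Fictitious Default Algorithm, the equity of bank $i$: from Equation~(\ref{eq:en-equities-time-t}) (and its RV analogue), as long as $i$ is not already fully depleted,
$$ \tilde E_i(t) = \sum_{j=1}^n \Pi_{ij}^T \, p_j(t) + (1 - s_i) A_i^e - p_i(t). $$
Taking the difference between consecutive iterations kills the external-asset term (the shock on external assets is fixed in time), leaving
$$ \tilde E_i(t) - \tilde E_i(t-1) = \sum_{j=1}^n \Pi_{ij}^T \big( p_j(t) - p_j(t-1) \big) - \big( p_i(t) - p_i(t-1) \big). $$
In the Fictitious Default Algorithm the payment of a node only ever decreases (or stays put) along the iterations, and a node's \emph{own} payment $p_i$ starts changing only once it has been declared in default, at which point its equity is pinned at zero by the limited-liability convention; so on the range where bank $i$'s equity is still positive the term $p_i(t) - p_i(t-1)$ contributes nothing to the depletion of $i$'s own equity buffer, and the relevant loss inflicted on $i$ between steps $t-1$ and $t$ is exactly $\sum_j \Pi_{ij}^T (p_j(t-1) - p_j(t)) \ge 0$.

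Next I would divide through by the initial equity $E_i(0)$ to pass to vulnerabilities. Using $h_i(t) = (E_i(0) - E_i(t))/E_i(0)$, the incremental relation becomes
$$ h_i(t) - h_i(t-1) = \frac{1}{E_i(0)} \sum_{j=1}^n \Pi_{ij}^T \big( p_j(t-1) - p_j(t) \big), $$
and it remains to recognize the coefficient. By definition $\Pi_{ij}^T = \Pi_{ji} = L_{ji}^b / \bar p_j = A_{ij}^b / \bar p_j$, so
$$ \frac{\Pi_{ij}^T}{E_i(0)} = \frac{A_{ij}^b}{E_i(0)\,\bar p_j} = \frac{l_{ij}^b}{\bar p_j}, $$
which is precisely the factor appearing in the claimed recursion. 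Rearranging to solve for $h_i(t+1)$ (shifting the index by one) and re-imposing the cap at $1$ coming from limited liability — i.e. once cumulative losses reach $E_i(0)$ the bank defaults and $h_i$ is frozen at $1$ — yields exactly
$$ h_i(t+1) = \min\bigg\{ 1,\; h_i(t) + \sum_{j=1}^n l_{ij}^b \bigg( \frac{p_j(t-1) - p_j(t)}{\bar p_j} \bigg) \bigg\}. $$
The same computation goes through verbatim for RV: the only change is that Equation~(\ref{eq:rv-clearing-payment-vector}) carries the factor $\beta$ in front of the interbank term and $\alpha$ in front of the external term, but since $\alpha = \beta$ is constant in time the external term still cancels in the difference, and the factor $\beta$ is already absorbed into the iterates $p_j(t)$ produced by the RV version of the algorithm — so the recursion has the identical form, with $p(t)$ now denoting the RV payment sequence.

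The step I expect to be the main obstacle is the careful bookkeeping around the self-term $p_i(t) - p_i(t-1)$ and the activation of the $\min\{1,\cdot\}$ cap: one has to argue that, along the Fictitious Default Algorithm, the first iteration at which $i$'s own payment drops below $\bar p_i$ is also the iteration at which $i$'s cumulative incoming losses first exhaust $E_i(0)$, so that the two descriptions — ``equity hits zero'' and ``$h_i$ is capped at $1$'' — coincide and no double-counting or off-by-one error creeps in. This requires invoking the monotonicity of the payment iterates and the structure of the default set in the algorithm (properties established in \cite{eisenberg2001systemic}); once that synchronization is nailed down, the rest is the elementary algebra sketched above. A minor additional point is to confirm that each increment $p_j(t-1) - p_j(t)$ is nonnegative, so that $h_i$ is genuinely non-decreasing in $t$, consistent with the no-equity-injection assumption recorded after Equation~(\ref{eq:global-vulnerability}).
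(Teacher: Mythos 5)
Your proposal is correct and follows essentially the same route as the paper's own proof: differencing the equity recursion so that the external-asset term cancels, observing that the own-payment term $p_i(t)-p_i(t-1)$ vanishes while bank $i$ is solvent, identifying $\Pi_{ij}^T/E_i(0)=l_{ij}^b/\bar p_j$, and handling the default cases through the $\min\{1,\cdot\}$ cap (the paper makes the ``defaults exactly at $t+1$'' case explicit by noting that the incoming payment shortfall then exceeds $E_i(t)$, which is precisely the synchronization step you flag as the main obstacle). No substantive gap.
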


\begin{proof}
To prove the claim we simply follow the definitions of individual vulnerability (Equation (\ref{eq:individual-vulnerability})) and net worth of a node (Equation (\ref{eq:en-equities-time-t})):

\begin{itemize} 
\item If bank $i$ has not defaulted up to time $t + 1$, we can write

\begin{eqnarray*}
h_i (t +  1) & = & \frac{E_i(0) - E_i(t + 1)}{E_i(0)}  =  \frac{E_i(0) - E_i(t)}{E_i(0)} + \frac{E_i(t) - E_i(t + 1)}{E_i(0)} =\\
& = & h_i (t) + \frac{1}{E_i(0)} \bigg( \sum_{j = 1}^n \Pi^T_{ij} p_j(t - 1) + \\
& & + A_i^e - p_i(t - 1) - \sum_{j = 1}^n \Pi^T_{ij} p_j(t) - A_i^e + p_i(t) \bigg).
\end{eqnarray*}

Notice that the equity at time $t$ is computed using the payment vector at time $t-1$, in agreement with the implementation of the Fictitious Default Algorithm.

Now, because bank $i$ has not defaulted by time $t+1$,   we know that its payment vector has not changed, hence $p_i(t) = p_i(t - 1)$. This implies

\begin{flalign*}
h_i (t +  1) & =  h_i (t) + \frac{1}{E_i(0)} \bigg( \sum_{j = 1}^n \Pi^T_{ij} (p_j(t - 1) - p_j(t)) \bigg) \\
& =  h_i (t) + \frac{1}{E_i(0)} \bigg( \sum_{j = 1}^n \frac{A^b_{ij}}{\bar{p}_j} (p_j(t - 1) - p_j(t)) \bigg) \\
& =  h_i (t) + \sum_{j = 1}^n \frac{A^b_{ij}}{E_i(0)} \frac{p_j(t- 1) - p_j(t)}{\bar{p}_j} \\
& =  h_i (t) + \sum_{j = 1}^n l^b_{ij} \frac{p_j(t - 1) - p_j(t)}{\bar{p}_j}.\nonumber
\end{flalign*}

\item If the bank $i$ defaulted at time $t$ or before, then $h_i(t) = 1$ and 
\begin{equation}
\min \bigg\{ 1, h_i (t) + \sum_{j = 1}^n l^b_{ij} \frac{p_j(t - 1) - p_j(t)}{\bar{p}_j} \bigg\} = 1
\nonumber
\end{equation}
so that our expression yields the correct result $h_i(t+1) = 1$, as required. 
\item If the default happens exactly at time $t + 1$ then it suffices to notice that the reduced inflow of payments on the interbank market has completely depleted the equity, that is

\begin{equation}
\sum_{j=1}^n \Pi^T_{ij} (p_j(t-1) - p_j(t)) \geq E_i(t) \nonumber
\end{equation}

this, in turn, implies that

\begin{equation}
h_i(t) + \frac{1}{E_i(0)} \sum_{j=1}^n \Pi^T_{ij} (p_j(t-1) - p_j(t)) \geq h_i(t) + \frac{E_i(t)}{E_i(0)} = 1 \nonumber
\end{equation}

so that, once again, by recognizing the leverage in the above expression, we have

\begin{equation}
\min \bigg\{ 1, h_i (t) + \sum_{j = 1}^n l^b_{ij} \frac{p_j(t - 1) - p_j(t)}{\bar{p}_j} \bigg\} = 1
\nonumber
\end{equation}
\end{itemize}

This completes our proof.
\end{proof}

Building on the results of Proposition \ref{prop:EN-RV-leverage}, we can now compare the different models, identifying partial order relations in terms of final equity losses generated.

Proposition \ref{prop:EN<RV} shows that the EN model yields lower vulnerabilities than the RV model. Its content is already well-known and is a simple consequence of the definition of the two models, but we re-present it here in terms of individual vulnerabilities for completeness.

In Proposition \ref{prop:RV<cDR} we prove that vulnerabilities generated by the cDR model are always greater than the ones generated by the RV model, while Propositions \ref{prop:not-DC<aDR} and \ref{prop:not-EN<aDR} show that no clear relationship exists between aDR and the remaining models can be derived, by exhibiting counterexamples.

It should be emphasized nevertheless that the counterexample provided in Proposition \ref{prop:not-EN<aDR} can be considered pathological, in the sense that initial balance sheet quantities have to be fine-tuned in such a way as to obtain particularly vulnerable institutions on the interbank market and carefully timed default waves (see comments at the end of the proof). In all empirical implementations, in fact, vulnerabilities induced by aDR are larger than the ones induced by EN.

\begin{proposition}[EN $\leq$ RV]
\label{prop:EN<RV}
At every time $t \geq 1$ the relation
\begin{equation}
h^{EN}_i(t) \leq h^{RV}_i(t) \nonumber
\end{equation}
is satisfied $\forall i \in V$, where $t$ is the internal time of the algorithms. In particular this implies that at convergence:
\begin{equation}
H^{EN}(\infty) \leq H^{RV}(\infty) \nonumber
\end{equation}
\end{proposition}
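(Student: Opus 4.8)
The plan is to exploit the common reformulation of Proposition~\ref{prop:EN-RV-leverage}, which recasts both models as the \emph{same} recursion on individual vulnerabilities driven by the respective payment-vector increments; consequently it suffices to control how the two Fictitious Default Algorithms compare, iteration by iteration, on one synchronised internal clock $t=1,2,\dots$. Concretely, I would first establish the pointwise domination of payment vectors,
\[
p_i^{\mathrm{EN}}(t)\;\geq\;p_i^{\mathrm{RV}}(t)\qquad\text{for all } i \text{ and all } t,
\]
by induction on $t$. For the base case, both algorithms are initialised at the nominal obligations vector, $p^{\mathrm{EN}}(1)=p^{\mathrm{RV}}(1)=\bar p$, and the first-round shock on external assets enters both balance-sheet updates identically, so that in fact $E_i^{\mathrm{EN}}(1)=E_i^{\mathrm{RV}}(1)$ and $h_i^{\mathrm{EN}}(1)=h_i^{\mathrm{RV}}(1)$.

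For the inductive step, denote by $\Phi^{\mathrm{EN}}$ and $\Phi^{\mathrm{RV}}$ the respective one-step clearing maps, so that $\Phi^{\mathrm{EN}}(p)_i=\min\bigl\{\sum_{j}\Pi^{T}_{ij}\,p_j+(1-s_i)A_i^e,\ \bar p_i\bigr\}$ and $p^{\mathrm{EN}}(t)=\Phi^{\mathrm{EN}}\bigl(p^{\mathrm{EN}}(t-1)\bigr)$, and similarly for RV. I would use two elementary facts. First, $\Phi^{\mathrm{EN}}$ is monotone in its argument, being the minimum of an affine function of $p$ with non-negative coefficients and a constant. Second, for \emph{any} input vector $p$ the RV update of bank $i$ is dominated, component by component, by $\Phi^{\mathrm{EN}}(p)_i$: the RV update equals either $\bar p_i$ --- which occurs only when bank $i$ is solvent, in which case $\Phi^{\mathrm{EN}}(p)_i$ also equals $\bar p_i$ --- or $\beta\bigl(\sum_{j}\Pi^{T}_{ij}p_j+(1-s_i)A_i^e\bigr)$ --- which occurs only when bank $i$ defaults, so that its resources are strictly below $\bar p_i$, and then $\beta\le 1$ makes this value no larger than $\min\{\text{resources}_i,\bar p_i\}=\Phi^{\mathrm{EN}}(p)_i$. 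Combining monotonicity, this pointwise domination, and the inductive hypothesis gives $p^{\mathrm{RV}}(t)=\Phi^{\mathrm{RV}}\bigl(p^{\mathrm{RV}}(t-1)\bigr)\le\Phi^{\mathrm{EN}}\bigl(p^{\mathrm{RV}}(t-1)\bigr)\le\Phi^{\mathrm{EN}}\bigl(p^{\mathrm{EN}}(t-1)\bigr)=p^{\mathrm{EN}}(t)$, closing the induction. The argument lives entirely at the level of payment vectors, so it is irrelevant that the EN and RV algorithms may flag different banks as defaulted at a given round.

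Next I would transfer the comparison to vulnerabilities. Since $h_i(t)=\bigl(E_i(0)-E_i(t)\bigr)/E_i(0)$ with the \emph{same} denominator $E_i(0)$ in both models, it is enough to show $E_i^{\mathrm{EN}}(t)\ge E_i^{\mathrm{RV}}(t)$. If bank $i$ has already defaulted in the RV model by time $t$, then $E_i^{\mathrm{RV}}(t)=0\le E_i^{\mathrm{EN}}(t)$ and there is nothing to prove; otherwise bank $i$ pays its obligations in full in both models, so $E_i^{\mathrm{RV}}(t)=\sum_{j}\Pi^{T}_{ij}p_j^{\mathrm{RV}}(t-1)+(1-s_i)A_i^e-\bar p_i\le\sum_{j}\Pi^{T}_{ij}p_j^{\mathrm{EN}}(t-1)+(1-s_i)A_i^e-\bar p_i=E_i^{\mathrm{EN}}(t)$ by the payment-vector inequality. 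Hence $h_i^{\mathrm{EN}}(t)\le h_i^{\mathrm{RV}}(t)$ for all $i$ and all $t\ge 1$. Passing to the limit --- each $h_i(\cdot)$ is non-decreasing and bounded by $1$, so $h_i(\infty)$ exists --- and recalling that $H(t)=\sum_i\frac{E_i(0)}{\sum_j E_j(0)}\,h_i(t)$ is a fixed non-negative weighted average of the $h_i(t)$, the inequality survives aggregation and the limit, yielding $H^{\mathrm{EN}}(\infty)\le H^{\mathrm{RV}}(\infty)$.

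The part I expect to demand the most care is not conceptual but bookkeeping: pinning down precisely the synchronised clock of the two Fictitious Default Algorithms (in particular the role of the shock step $t=1$ and the convention that equity at time $t$ is computed from the payment vector of step $t-1$), and verifying that the pointwise domination $\Phi^{\mathrm{RV}}(p)_i\le\Phi^{\mathrm{EN}}(p)_i$ genuinely holds bank by bank regardless of the internal default classification used by the RV iteration. Once these conventions are fixed, the monotonicity of $\Phi^{\mathrm{EN}}$ together with $\beta\le 1$ does the rest.
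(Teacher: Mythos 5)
Your proposal is correct and follows essentially the same route as the paper: an induction showing $p^{\mathrm{RV}}(t)\le p^{\mathrm{EN}}(t)$ at every step of the synchronised Fictitious Default Algorithm (the paper gets the inductive step directly from $\beta\in[0,1]$, which is just a terser version of your monotonicity-plus-pointwise-domination decomposition), followed by a transfer to individual vulnerabilities via the reformulation of Proposition~\ref{prop:EN-RV-leverage} and aggregation to $H$. The only cosmetic difference is that you handle the default case from the RV side ($E_i^{\mathrm{RV}}=0\le E_i^{\mathrm{EN}}$) whereas the paper argues from the EN side, and both work.
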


\begin{proof}
We will first of all prove by induction that at each iteration of the Fictitious Default Algorithm, the relation $p_i^{RV}(t) \leq p_i^{EN}(t)$ is satisfied. At the beginning of the algorithm, both models satisfy $p_i^{RV}(1) = p_i^{EN}(1) = \bar{p}_i$, thus constituting our inductive basis. Now, suppose that $p_i^{RV}(t) \leq p_i^{EN}(t)$ is true, we want to show that this implies $p_i^{RV}(t+1) \leq p_i^{EN}(t+1)$. This is immediate because by definition we have:
\begin{equation}
 p_i^{EN}(t+1) = \min \bigg\{\bar{p}_i, \sum_{j=1}^n \Pi^T_{ij} p_j^{EN}(t) + A_i^e \bigg\} \nonumber
 \end{equation}
and
\begin{equation}
p_i^{RV}(t+1) = \min \bigg\{\bar{p}_i, R \bigg( \sum_{j=1}^n \Pi^T_{ij} p_j^{RV}(t) + A_i^e \bigg) \bigg\} \nonumber
\end{equation}

Given that $ \beta \in [0, 1]$, it follows that $p_i^{RV}(t+1) \leq p_i^{EN}(t+1)$, thus proving our claim. We now use this to prove that the inverse relationship holds on individual vulnerabilities, for every $t \geq 1$.

\begin{itemize}

\item If bank $i$ has defaulted under the EN algorithm at time $t$ or before, we have that $p_i^{EN}(t) \leq \bar{p}_i$. By what we have just proved on payment vectors, we know that also $p_i^{RV}(t) \leq \bar{p}_i$ and bank $i$ has defaulted under RV as well. Therefore $h_i^{EN}(t+1) = h_i^{RV}(t+1) = 1$, because vulnerabilities are non-decreasing in $t$.

\item If bank $i$ has not defaulted at time $t$ or before, we can write $h_i(t)$ in Proposition \ref{prop:EN-RV-leverage} explicitly, by solving backwards in time, thus obtaining:

\begin{equation}
h_i^{EN} (t +  1) = \min \bigg\{ 1, h_i^{EN}(1) + \sum_{j=1}^n l^b_{ij} \bigg( \frac{\bar{p}_j - p_j^{EN}(t)}{\bar{p}_j} \bigg) \bigg\}, \nonumber
\end{equation}

\begin{equation}
h_i^{RV} (t +  1) = \min \bigg\{1, h_i^{RV}(1) + \sum_{j=1}^n l^b_{ij} \bigg( \frac{\bar{p}_j - p_j^{RV}(t)}{\bar{p}_j} \bigg)\bigg\}. \nonumber
\end{equation}

Clearly, since we have established that $p_i^{RV}(t) \leq p_i^{EN}(t)$, for every $t$, and we know that $h_i^{RV}(1) = h_i^{EN}(1)$, we must have $h_i^{EN}(t+1) \leq h_i^{RV}(t+1)$.
\end{itemize}
\end{proof}

A straightforward corollary, as already showed in \cite{rogers2013failure} within their framework, is that if $\beta = 1$, then $h_i^{EN} (t) = h_i^{RV} (t), \forall t$, which implies that $H^{EN}(\infty)= H^{RV}(\infty)$ if $\beta = 1$.

\begin{proposition}[RV $\leq$ cDR]
\label{prop:RV<cDR}
At every time $t \geq 1$ the relation
\begin{equation}
h^{RV}_i(t) \leq h^{cDR}_i(t) \nonumber
\end{equation}

is satisfied $\forall i \in V$, where $t$ is the internal time of the algorithms, which implies that at convergence:
\begin{equation}
H^{RV}(\infty) \leq H^{cDR}(\infty) \nonumber
\end{equation}
\end{proposition}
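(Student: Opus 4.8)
The plan is to prove the stronger pointwise statement $h^{RV}_i(t) \leq h^{cDR}_i(t)$ for all $i$ and all $t \geq 1$ by induction on $t$, since the global inequality then follows immediately by taking $t \to \infty$ and summing with the (fixed, positive) equity weights in the definition of $H$ (Equation~\eqref{eq:global-vulnerability}). The base case $t = 1$ is trivial: both models start from the same shocked balance sheets, so $h^{RV}_i(1) = h^{cDR}_i(1) = \min\{1, l_i s_i\}$ for every $i$, depending only on the external shock.

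For the inductive step, I would use the leverage-form recursion for RV from Proposition~\ref{prop:EN-RV-leverage}, writing the per-step increment to $h^{RV}_i$ as $\sum_j l^b_{ij}\,(p_j(t-1) - p_j(t))/\bar p_j$, and compare it term-by-term with the cDR increment $\sum_j (1-R)\, l^b_{ij}\,[h_j(t) - h_j(t-1)]$. The key is to identify the quantity $(\bar p_j - p_j(t))/\bar p_j = 1 - p_j(t)/\bar p_j$ (one minus the endogenous recovery rate of $j$ at step $t$) as the RV analogue of the cDR vulnerability $h^{cDR}_j(t)$, and to show this RV "shortfall fraction" is dominated by $h^{cDR}_j(t)$. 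Concretely: if $j$ has not defaulted in RV by time $t$ then $p_j(t)/\bar p_j = 1$ so the RV shortfall is $0 \leq h^{cDR}_j(t)$; if $j$ has defaulted, then its payment shortfall is bounded by what the exogenous recovery rate $R = \beta$ permits, so $1 - p_j(t)/\bar p_j \leq 1 = h^{cDR}_j(1) \leq h^{cDR}_j(t)$ once $j$ is distressed in cDR — and since RV default requires equity exhaustion, which is triggered by losses that cDR also registers (cDR uses the same leverage coefficients with the extra $(1-R)$ discount but propagates from \emph{distress}, not default), one checks $j$ distressed in RV implies $j$ distressed in cDR by the same time. Unrolling the cDR telescoping sum, $h^{cDR}_i(t+1) \geq \min\{1, h^{cDR}_i(1) + \sum_j (1-R) l^b_{ij}\, h^{cDR}_j(t)\}$ is not literally the cDR update, so I would instead argue directly from the unrolled RV expression $h^{RV}_i(t+1) = \min\{1, h^{RV}_i(1) + \sum_j l^b_{ij}(\bar p_j - p_j^{RV}(t))/\bar p_j\}$ (as in the proof of Proposition~\ref{prop:EN<RV}) and bound $(\bar p_j - p_j^{RV}(t))/\bar p_j \leq h^{cDR}_j(t)$, combined with the fact that cDR's cumulative vulnerability dominates $h^{cDR}_i(1) + \sum_j (1-R) l^b_{ij} h^{cDR}_j(t')$ for a suitable $t'$.

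The main obstacle I anticipate is the bookkeeping around the $(1-R)$ versus $1$ discrepancy in the coefficients: RV's interbank recursion carries the factor $\beta$ on the \emph{inflows} $\sum_j \Pi^T_{ij} p_j$, whereas cDR carries $(1-R)$ on the loss \emph{differences}. One must verify that the shortfall RV actually transmits — which is $(1-\beta)$ of nominal liabilities plus the portion truly lost to asset destruction — is still no larger than $(1-R)\, h^{cDR}_j$ per edge when $R = \beta$; this requires carefully relating $\bar p_j - p_j^{RV}(t)$ to the equity losses cDR attributes to $j$, using the balance-sheet identity and the fact that cDR never under-counts a loss that RV registers (cDR propagates on strictly lower thresholds). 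A clean way to sidestep delicate timing is to first prove the RV shortfall fraction is itself bounded above by the cDR vulnerability via a \emph{simultaneous} induction on $t$ on both statements ($p_j^{RV}(t)/\bar p_j \geq 1 - h^{cDR}_j(t)$ and $h^{RV}_i(t) \leq h^{cDR}_i(t)$), so that the two feed each other. I expect the monotonicity of $h$ in $t$ (assumed in Section~\ref{subsec:first-second-round-effects}) and the non-negativity of all leverage entries to do most of the remaining work.
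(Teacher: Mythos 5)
Your plan is essentially the paper's proof: induction on $t$ with the common base case $h^{RV}_i(1)=h^{cDR}_i(1)$, unrolling both recursions into telescoped form, and reducing the inductive step to the edge-wise bound $(\bar p_j - p_j(t))/\bar p_j \leq h^{cDR}_j(t)$. Two small remarks. First, your case analysis for a bank $j$ that \emph{has} defaulted under RV is justified the wrong way: you do not need any argument about cDR "registering" RV's losses or about distress timing — the inductive hypothesis $h^{RV}_j(t)\leq h^{cDR}_j(t)$ together with $h^{RV}_j(t)=1$ and the bound $h^{cDR}_j(t)\leq 1$ immediately forces $h^{cDR}_j(t)=1$, which is all you need (the paper splits on cDR-default instead and uses the IH in the contrapositive direction to conclude $p_j(t)=\bar p_j$ for non-defaulted banks; either split works). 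Second, your worry about the $(1-R)$ coefficient is legitimate but is not resolved by the paper either: the paper's own telescoped cDR formula silently drops the $(1-R)$ factor present in the model's definition, and with that factor retained the term-by-term comparison $(1-R)h^{cDR}_j(t) \geq (\bar p_j - p_j(t))/\bar p_j$ can fail for a defaulted $j$ whose RV payment falls below $R\,\bar p_j$; so your proposed simultaneous induction on $p^{RV}_j(t)/\bar p_j \geq 1-h^{cDR}_j(t)$ would not close this gap either, and some additional assumption (effectively $R=0$ in cDR, as the paper implicitly uses) is needed.
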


\begin{proof}
We prove the claim by induction. For $t = 1$, we have that
\begin{equation}
h^{RV}_i(1) = h^{cDR}_i(1) \nonumber
\end{equation}

since first-round losses are identical for all models. Suppose the inequality holds at time $t$, we  show that this holds true also at time $t + 1$. In fact:

\begin{itemize}

\item if $h^{cDR}_i(t) = 1 \Longrightarrow h^{RV}_i(t + 1) \leq 1 = h^{cDR}_i(t + 1)$, because vulnerabilities are bounded above by $1$;
\item if  $h^{cDR}_i(t) < 1$ then the general formulas read
\begin{equation}
\begin{array}{rcl}
 h^{cDR}_i(t + 1) & =& \min \bigg\{ 1, h^{cDR}_i(t) + \sum_{j=1}^n l^b_{ij} \big( h^{cDR}_j(t) - h^{cDR}_j(t-1) \big) \bigg\},  \\
 h^{RV}_i (t +  1) & =& \min \bigg\{ 1, h^{RV}_i(t) + \sum_{j=1}^n l^b_{ij} \bigg( \frac{p_j(t - 1) - p_j(t)}{\bar{p}_j} \bigg) \bigg\}.
 \end{array}
\nonumber
\end{equation}
But since $h^{RV}_i(t) \leq h^{cDR}_i(t) < 1$ we can solve the telescopic sum in the second term of the minimum operator, obtaining:\footnote{It is assumed that $h^{cDR}_i(0) = h^{RV}_i(0) = 0$, ie. no bank is in default prior to the first round shock. This amounts to exclude from the network all banks reporting a negative book-value of equity.
}
\begin{equation}
\begin{array}{rcl}
h^{cDR}_i(t + 1) &=& \min \bigg\{ 1, h^{cDR}_i(1) + \sum_{j=1}^n l^b_{ij} h^{cDR}_j(t) \bigg\},  \\
 h^{RV}_i (t +  1) &=& \min \bigg\{ 1, h^{RV}_i(1) + \sum_{j=1}^n l^b_{ij} \bigg( \frac{\bar{p}_j - p_j(t)}{\bar{p}_j} \bigg) \bigg\}.
 \end{array}
\nonumber
\end{equation}
We know by the inductive step that $ h^{RV}_i(1) = h^{cDR}_i(1) $, so we will focus on the terms in the summation. Now, notice that if counterparty $j$ has defaulted under the cDR algorithm by time $t$, then:
\begin{equation}
h^{cDR}_j (t) = 1 \geq \frac{\bar{p}_j - p_j(t)}{\bar{p}_j}, \nonumber
\end{equation}
which follows from the fact that $p_j(t) \in [0, \bar{p}_j]$. On the other hand, if counterparty $j$ has not defaulted under the cDR algorithm by time $t$, then
\begin{equation}
 h^{RV}_j (t) \leq h^{cDR}_j (t) < 1 \Longrightarrow p_j(t) = \bar{p}_j, \nonumber
 \end{equation}
where the first inequality is true for the inductive hypothesis and the implication is a simple consequence of the Fictitious Default Algorithm: if bank $j$ is not in default, then by priority of claim it must be able to pay its counterparties in full. Once again, we obtain: 
\begin{equation}
h^{cDR}_j (t) \geq \frac{\bar{p}_j - p_j(t)}{\bar{p}_j} = 0 \nonumber
 \end{equation}
Putting these two inequalities together and recalling that $ h^{RV}_i(1) = h^{cDR}_i(1) $ is true, we finally obtain:
\begin{equation}
 h^{RV}_i(t + 1) \leq h^{cDR}_i(t + 1). \nonumber
\end{equation}
\end{itemize} \end{proof}

\begin{proposition}[DC $\nleq$ aDR]
\label{prop:not-DC<aDR}
In general it does not hold that:

$$ H^{DC}(\infty) \leq H^{aDR}(\infty) $$
\end{proposition}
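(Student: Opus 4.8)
The plan is to disprove the inequality by exhibiting an explicit small counterexample and evaluating both recursions by hand; no earlier result is needed, only the definitions of the DC and aDR processes in Sections~\ref{subsubsec:si-dc} and~\ref{subsubsec:si-adr}. The feature I want to exploit is the difference between the two active-set rules. Under aDR a bank enters $\mathcal{A}(t)$ the \emph{first} instant it loses any equity, however small, propagates only that amount, and is then removed forever; under DC a bank enters $\mathcal{A}(t)$ only when it actually defaults, and then propagates the full write-off $(1-R)l^b_{ij}$. Hence if an intermediate bank first receives a negligible shock and is struck hard only one step later, aDR will already have ``spent'' its single propagation on the negligible value, whereas DC, waiting for the default, will propagate the large value.

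Concretely I would take three banks $a,b,c$ whose only non-zero interbank positions are a claim of $a$ on $b$ and a claim of $c$ on $a$ (so $b$ owes $a$, $a$ owes $c$, and no bank is a creditor of $c$). By rescaling external assets and liabilities and fixing the recovery rate $R\in[0,1)$ — which imposes no obstruction, since leverages and first-round losses can be prescribed freely — one arranges that $(1-R)l^b_{ab}\ge 1$ and $(1-R)l^b_{ca}=\gamma$ for some fixed $\gamma\in(0,1)$. Then apply an external shock that defaults $b$ at the first round, $h_b(1)=1$, inflicts an arbitrarily small first-round loss $h_a(1)=\delta\in(0,1)$ on $a$, and leaves $c$ untouched, $h_c(1)=0$.

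It remains to run the two dynamics. Under aDR, $\mathcal{A}(1)=\{a,b\}$, so at $t=2$ bank $a$ passes its distress $\delta$ to $c$, giving $h_c(2)=\gamma\delta$, while at the same step $h_a(2)=\min\{1,\delta+(1-R)l^b_{ab}\}=1$; since $a$ was already distressed at $t=1$ it is excluded from $\mathcal{A}(2)=\{c\}$, and as $c$ has no creditors the process has converged at $h_a=h_b=1$, $h_c=\gamma\delta$. Under DC, by contrast, $\mathcal{A}(1)=\{b\}$ only, because $h_a(1)=\delta\ne1$; at $t=2$ only $a$ is struck, $h_a(2)=1$, so $\mathcal{A}(2)=\{a\}$; at $t=3$ bank $a$ now propagates at full strength, $h_c(3)=\gamma$, and the process converges at $h_a=h_b=1$, $h_c=\gamma$. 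Using the definition of global vulnerability,
\[
H^{DC}(\infty)-H^{aDR}(\infty)=\frac{\gamma(1-\delta)\,E_c(0)}{E_a(0)+E_b(0)+E_c(0)}>0,
\]
which contradicts $H^{DC}(\infty)\le H^{aDR}(\infty)$; appending a downstream bank holding a claim on $c$ makes the discrepancy propagate further, if a starker example is desired.

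The only delicate part is the step-by-step bookkeeping of the active sets, and this is also where the construction could go wrong: one must verify that in aDR bank $a$ genuinely ``wastes'' its unique propagation on the value $\delta$ strictly before its own default, and that choosing $c$ to be a sink makes the cascade terminate exactly where claimed — both follow at once from writing out $\mathcal{A}(1)$ and $\mathcal{A}(2)$ for each model. I would also note, as with the other pathological counterexamples in this section, that although the example is entirely admissible within the framework it requires the intermediate bank to be heavily leveraged toward its defaulting debtor, so it should be read as a degenerate rather than a typical configuration.
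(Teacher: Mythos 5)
Your counterexample is correct and exploits exactly the same mechanism as the paper's own proof: an intermediate bank that first receives a sub-default distress and is defaulted one step later, so that aDR ``spends'' its single propagation on the small value while DC waits for the default and propagates the full write-off. The paper realises this with a three-bank cycle and a uniform $10\%$ shock rather than your three-bank chain with a heterogeneous shock, but the argument is essentially identical.
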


\begin{proof}
We will prove this by exhibiting a counterexample. Consider the following financial system:

$$ E(0) = \begin{bmatrix}
       5 \\
       15 \\
       25
     \end{bmatrix} \:\:\:
	A^{(e)} = \begin{bmatrix}
       100 \\
       100 \\
       100
     \end{bmatrix}\:\:\:
     A^{(b)} = \begin{bmatrix}
       0 & 0 & 20 \\
       20 & 0 & 0 \\
       0 & 15 & 0
     \end{bmatrix}$$

\begin{center}
\begin{tikzpicture}[->,>=stealth',shorten >=1pt,auto,node distance=3cm,
                    thick,main node/.style={circle,draw,font=\sffamily\Large}]

  \node[main node] (1) {1};
  \node[main node] (2) [below left of=1] {2};
  \node[main node] (3) [below right of=1] {3};

  \path[every node/.style={font=\sffamily\small}]
    (1) edge [bend right] node[left] {20} (2)
    (2) edge [bend right] node[below] {15} (3)
    (3) edge [bend right] node[right] {20} (1);
\end{tikzpicture}
\end{center}

If we shock all external assets by $10$\%, we obtain

$$ h^{aDR}(1) = \begin{bmatrix}
       1 \\
       2/3 \\
       2/5
     \end{bmatrix} \:\:\:
     h^{aDR}(2) = \begin{bmatrix}
       1 \\
       1 \\
       4/5
     \end{bmatrix}  \: \mbox{ and }$$

$$ h^{DC}(1) = \begin{bmatrix}
       1 \\
       2/3 \\
       2/5
     \end{bmatrix} \:\:\:
     h^{DC}(2) = \begin{bmatrix}
       1 \\
       1 \\
       2/5
     \end{bmatrix}\:\:\:
     h^{DC}(3) = \begin{bmatrix}
       1 \\
       1 \\
       1
     \end{bmatrix}$$
     
So that $H^{DC}(\infty) > H^{aDR}(\infty)$. 

The counterexample works by constructing the network in such a way that two waves of distress (one moderate and one more substantial) affect the system. Bank $2$ is sent into default by bank $1$, but it is unable to propagate this distress onto $3$ under aDR, because it has already propagated the lesser distress coming from the first wave. Under $DC$, on the other hand, bank $2$ can now propagate its distress, which more than makes up for the ``lack of propagation'' of the first wave.

It is worthwhile to point out that this example is not pathological in any sense: it is routinely observed in our empirical results that, past a given external shock threshold (high enough to send some banks into default at the first round), DC can in fact produce much higher vulnerabilities than aDR. 
\end{proof}

\begin{proposition}[EN $\nleq$ aDR]
\label{prop:not-EN<aDR}
In general it does not hold that:
\begin{equation}
H^{EN}(\infty) \leq H^{aDR}(\infty) \nonumber
 \end{equation}
\end{proposition}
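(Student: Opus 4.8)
The plan is to produce an explicit counterexample, in the same spirit as Proposition \ref{prop:not-DC<aDR}, exploiting the one feature that separates aDR from EN: in aDR a bank transmits distress only once, at the \emph{first} time its vulnerability becomes positive. I would work with a directed \emph{chain} of three banks $1 \to 2 \to 3$ (bank $1$ owes bank $2$, bank $2$ owes bank $3$, and bank $3$ owes only entities outside the network), complemented by external liabilities, and apply a single common shock $s$ to all external assets. The balance sheets are calibrated so that: (i) bank $1$ holds a large amount of external assets against a tiny equity and therefore defaults already in the first round; (ii) bank $2$ holds a large interbank claim $A^b_{21}$ on bank $1$ but only a small amount of external assets, so that the first-round shock alone leaves it with a strictly positive yet tiny vulnerability $h_2(1)\in(0,1)$; and (iii) bank $3$ holds an intermediate interbank claim $A^b_{32}$ on bank $2$ and essentially no external assets.

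Under aDR the decisive point is the \emph{timing} of the two distress waves that reach bank $2$. Its own (minuscule) first-round loss already places it in $\mathcal{A}(1)$, so at the step $t=1\to2$ it propagates only $h_2(1)$ onto bank $3$, producing $h_3(2) = (1-R)\,l^b_{32}\,h_2(1)$; at the very same step the default of bank $1$ reaches and defaults bank $2$, but since $2\notin\mathcal{A}(2)$ this much larger distress is never passed on. With parameters chosen so that $(1-R)\,l^b_{32}\,h_2(1)<1$, bank $3$ survives and $H^{\text{aDR}}(\infty)<1$. Under EN only defaults matter: bank $2$'s small external loss is irrelevant; bank $1$ defaults with a large shortfall in payments which, routed through $\Pi$, destroys most of bank $2$'s dominant asset and defaults bank $2$; bank $2$'s own (now large) shortfall is in turn routed onto bank $3$ and defaults it as well; since bank $3$'s remaining liabilities are external, the cascade ends with $h^{\text{EN}}(\infty)=(1,1,1)$, so that $H^{\text{EN}}(\infty)=1>H^{\text{aDR}}(\infty)$.

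The step requiring care --- and the reason the resulting network is, as the statement says, somewhat pathological --- is the simultaneous satisfaction of the two conditions on bank $3$: it must default under EN, $A^b_{32}(1-p_2(\infty)/\bar{p}_2)>E_3(0)$, yet not default under aDR, $(1-R)(A^b_{32}/E_3(0))\,h_2(1)<1$. These are compatible because the EN shortfall ratio $1-p_2(\infty)/\bar{p}_2$ of bank $2$ is of order $s$ (bank $1$'s default wipes out the bulk of bank $2$'s balance sheet), whereas $h_2(1)=s\,A^e_2/E_2(0)$ can be made arbitrarily small by taking $A^e_2\ll E_2(0)$; one may then fix the ratio $A^b_{32}/E_3(0)$ anywhere in the nonempty interval with endpoints $\bar{p}_2/(\bar{p}_2-p_2(\infty))$ and $1/((1-R)\,h_2(1))$. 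A concrete instance: take $R=0$, $s=0.9$, $E(0)=(1,10,1)$, $A^e=(100,1,0)$, with bank $1$ owing $99$ to bank $2$, bank $2$ owing $2$ to bank $3$ and $88$ externally, and bank $3$ owing $1$ externally. Running the Fictitious Default Algorithm, all three banks default under EN, so $H^{\text{EN}}(\infty)=1$; under aDR bank $2$ propagates only $h_2(1)=0.09$ at $t=1\to2$, giving $h_3(\infty)=(A^b_{32}/E_3(0))\,h_2(1)=2\cdot0.09=0.18$ and $H^{\text{aDR}}(\infty)=(1+10+0.18)/12<1$. I would close, as the authors do for the DC counterexample, by stressing that such a configuration is non-generic --- it needs a bank (here bank $2$) that is extremely vulnerable on the interbank market and default waves arriving in a carefully ordered sequence --- and is never met in the reconstructed networks of the empirical section.
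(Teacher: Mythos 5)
Your proposal is correct and takes essentially the same route as the paper: the authors also use a three-bank directed chain in which bank $1$ defaults at the first round, bank $2$ is placed in the active set by a small first-round loss on its external assets and therefore passes only that tiny distress to bank $3$ before being defaulted by bank $1$, while under EN the default cascade propagates all the way and defaults every bank. Your concrete numbers check out ($h^{\text{aDR}}(\infty)=(1,1,0.18)$ versus $h^{\text{EN}}(\infty)=(1,1,1)$), and your closing remark on the non-generic, fine-tuned character of the configuration matches the paper's own caveat.
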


\begin{proof}
We will prove this by exhibiting a counterexample. Consider the following financial network:

$$ E(0) = \begin{bmatrix}
       15 \\
       35 \\
       35
     \end{bmatrix} \:\:\:
	A^{(e)} = \begin{bmatrix}
       100 \\
       5 \\
       20
     \end{bmatrix}\:\:\:
     A^{(b)} = \begin{bmatrix}
       0 & 0 & 0 \\
       50 & 0 & 0 \\
       0 & 20 & 0
     \end{bmatrix}$$

\begin{center}
\begin{tikzpicture}[->,>=stealth',shorten >=1pt,auto,node distance=3cm,
                    thick,main node/.style={circle,draw,font=\sffamily\Large}]

  \node[main node] (1) {1};
  \node[main node] (2) [right of=1] {2};
  \node[main node] (3) [right of=2] {3};

  \path[every node/.style={font=\sffamily\small}]
    (1) edge [right] node[below] {50} (2)
    (2) edge [right] node[below] {20} (3);
\end{tikzpicture}
\end{center}

If we shock all external assets by $100$\%, we obtain

$$ h^{aDR}(1) = \begin{bmatrix}
       1 \\
       1/7 \\
       4/7
     \end{bmatrix} \:\:\:
     h^{aDR}(2) = \begin{bmatrix}
       1 \\
       1 \\
       32/49
     \end{bmatrix}, \: \mbox{ and}$$

$$ h^{EN}(1) = \begin{bmatrix}
       1 \\
       1/7 \\
       4/7
     \end{bmatrix} \:\:\:
     h^{EN}(2) = \begin{bmatrix}
       1 \\
       1 \\
       4/7
     \end{bmatrix}\:\:\:
     h^{EN}(3) = \begin{bmatrix}
       1 \\
       1 \\
       1
     \end{bmatrix}.$$
     
So that $H^{EN}(\infty) > H^{aDR}(\infty)$. 

This counterexample exploits the same features analyzed in Proposition \ref{prop:not-DC<aDR}. Clearly in order for the same principle to work in the case of the EN algorithm, the financial system has to be particularly vulnerable to interbank contagion, so that this counterexample cannot be considered typical of empirical data in any sense and it has been built \textit{ad hoc}. In actual implementations, we see that aDR always leads to higher vulnerabilities. This counterexample, in fact, is significant only insofar as it increases our understanding of the limitations of the aDR dynamics w.r.t. the cDR dynamics. In particular, in Section \ref{subsubsec:si-cdr}, while introducing the cDR model, we briefly pointed out the importance of this definition of active nodes.\end{proof}

\subsubsection{Conservation of losses in EN}
\label{subsec:si-cons-loss-en}

We will now derive analytical results for the Eisenberg-Noe model that are key to our analysis. In particular in Proposition \ref{prop:EN-final-H} we derive an exact expression for the final vulnerability of the system at convergence, $H^{EN}(\infty)$, that will be subsequently used in Proposition \ref{prop:EN-second-round-upper-bound} to obtain an upper bound for the second round losses in equity.

\begin{proposition}[Analytical expression for $H^{EN}(\infty)$]
\label{prop:EN-final-H}
Let $s_i$ be the shock on the external assets of the $i$-th bank, then

$$ H^{EN}(\infty) = \frac{1}{\sum_{i=1}^n E_i(0)} \sum_{i=1}^n \bigg( A^{(e)}_i s_i - (1 - \beta_i) (\bar{p}_i - p_i(\infty)) \bigg)$$ 

where $p(\infty)$ is the clearing payment vector at convergence and $\beta_i := L_i^{(b)} / (L_i^{(b)} + L_i^{(e)}) \in [0, 1]$ is the ratio\footnote{This coefficient is obviously of great importance, as already highlighted by \cite{glasserman2015likely}, and is known as \textit{financial connectivity}} between interbank liabilities and total liabilities of bank $i$.
\end{proposition}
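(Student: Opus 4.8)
The plan is to compute $H^{\text{EN}}(\infty)$ directly from its definition $H^{\text{EN}}(\infty) = \bigl(\sum_i E_i(0) - \sum_i E_i(\infty)\bigr)/\sum_i E_i(0)$, so the whole task reduces to obtaining a clean expression for $\sum_i E_i(\infty)$ in terms of the clearing payment vector $p(\infty)$. First I would sum the balance-sheet identity (\ref{eq:balancesheetidentity}) over all banks at $t=0$ to get $\sum_i E_i(0) = \sum_i A_i^e + \sum_i A_i^b - \sum_i L_i^e - \sum_i L_i^b$, and note that $\sum_i A_i^b = \sum_i L_i^b$ since $A^b = (L^b)^T$, so the interbank terms cancel: $\sum_i E_i(0) = \sum_i A_i^e - \sum_i L_i^e$. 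This is the ``aggregate bank'' identity and is the conceptual heart of conservation.

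Next I would write $\sum_i E_i(\infty)$ using the clearing solution. For a \emph{solvent} bank, $E_i(\infty) = \sum_j \Pi^T_{ij} p_j(\infty) + (1-s_i)A_i^e - p_i(\infty)$ with $p_i(\infty) = \bar p_i$; for a \emph{defaulted} bank $E_i(\infty) = 0$ but also $\sum_j \Pi^T_{ij} p_j(\infty) + (1-s_i)A_i^e - p_i(\infty) = 0$ by the clearing condition (\ref{eq:en-clearing-payment-vector}), so in \emph{both} cases $E_i(\infty) = \sum_j \Pi^T_{ij} p_j(\infty) + (1-s_i)A_i^e - p_i(\infty)$. Summing over $i$: the term $\sum_i \sum_j \Pi^T_{ij} p_j(\infty)$ is the total interbank inflow, which equals $\sum_j \bigl(\text{interbank share of }p_j(\infty)\bigr)$; since payments are proportional, bank $j$ sends a fraction $L^b_j/\bar p_j$ of $p_j(\infty)$ to banks and $L^e_j/\bar p_j$ outside, i.e. interbank inflow $= \sum_j \beta_j p_j(\infty)$ where $\beta_j = L^b_j/\bar p_j = L^b_j/(L^b_j + L^e_j)$ (using $\bar p_j = L^b_j + L^e_j$). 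Hence $\sum_i E_i(\infty) = \sum_j \beta_j p_j(\infty) + \sum_i (1-s_i)A_i^e - \sum_i p_i(\infty) = \sum_i (1-s_i)A_i^e - \sum_i (1-\beta_i)p_i(\infty)$.

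Now subtract: $\sum_i E_i(0) - \sum_i E_i(\infty) = \bigl(\sum_i A_i^e - \sum_i L_i^e\bigr) - \sum_i(1-s_i)A_i^e + \sum_i(1-\beta_i)p_i(\infty) = \sum_i s_i A_i^e - \sum_i L_i^e + \sum_i (1-\beta_i)p_i(\infty)$. Finally I would rewrite $\sum_i L_i^e = \sum_i (1-\beta_i)\bar p_i$ (again from $\bar p_i = L^b_i + L^e_i$ and $1-\beta_i = L^e_i/\bar p_i$), so the last two terms combine into $-\sum_i (1-\beta_i)(\bar p_i - p_i(\infty))$, yielding $\sum_i E_i(0) - \sum_i E_i(\infty) = \sum_i \bigl(s_i A_i^e - (1-\beta_i)(\bar p_i - p_i(\infty))\bigr)$; dividing by $\sum_i E_i(0)$ gives the claim. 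The main obstacle — really the only subtle point — is justifying that the defaulted-bank equity formula can be written in the same ``net worth'' form as the solvent case so that the sum splits cleanly; this rests on the clearing fixed-point equation (\ref{eq:en-clearing-payment-vector}) forcing $\sum_j\Pi^T_{ij}p_j(\infty) + (1-s_i)A_i^e = p_i(\infty)$ precisely when $i$ defaults, plus being careful that the ``proportionality'' bookkeeping of interbank versus external outflows holds at the shocked balance sheet and that $\Pi$ is row-substochastic with the external row-sum equal to $1-\beta_i$.
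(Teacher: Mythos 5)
Your proposal is correct and follows essentially the same route as the paper's proof: write $E_i(\infty)$ in the unified net-worth form valid for both solvent and defaulted banks via the clearing fixed point, sum over $i$ using $\sum_i \Pi^T_{ij} = \beta_j$, do the analogous computation at $t=0$, and subtract. The only cosmetic difference is that you express $E_{tot}(0)$ as $\sum_i A_i^e - \sum_i L_i^e$ by cancelling interbank assets against interbank liabilities directly, whereas the paper writes the equivalent $\sum_j(\beta_j-1)\bar p_j + \sum_i A_i^{e}$.
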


\begin{proof}
At convergence the clearing payment vector satisfies Equation \ref{eq:en-clearing-payment-vector} in Section \ref{subsubsec:si-en}, i.e.

$$p_i(\infty) = \min \bigg\{\sum_{j = 1}^n \Pi^T_{ij} p_j(\infty) + A_i^e (1 - s_i), \bar{p}_i \bigg\} $$

so that the final equity of bank $i$ is

$$E_i(\infty) = \sum_{j = 1}^n \Pi^T_{ij} p_j(\infty) + A_i^{(e)} (1 - s_i) - p_i(\infty)$$

We also notice that at $t = 0$ we have:\footnote{Assuming no bank is in default prior to the first round shock}
\begin{eqnarray*}
E_{tot}(0) & = & \sum_{i=1}^n E_i(0) = \sum_{i=1}^n \bigg( \sum_{j = 1}^n \Pi^T_{ij} \bar{p}_j + A_i^{(e)} - \bar{p}_i \bigg) =\\
& = &  \sum_{j = 1}^n \sum_{i=1}^n \Pi^T_{ij} \bar{p}_j + \sum_{i=1}^n A_i^{(e)} - \sum_{i=1}^n \bar{p}_i  =\\
& = & \sum_{j = 1}^n \beta_j \bar{p}_j + \sum_{i=1}^n A_i^{(e)} - \sum_{i=1}^n \bar{p}_i  =\\
& = & \sum_{j = 1}^n (\beta_j - 1) \bar{p}_j + \sum_{i=1}^n A_i^{(e)}
\end{eqnarray*}

Putting the two equations together, we obtain

\begin{eqnarray*}
H^{EN}(\infty) & = & \frac{E_{tot}(0) - E_{tot}(\infty)}{E_{tot}(0)} =\\
& = & \frac{1}{{\sum_{i=1}^n E_{tot}(0)}} \sum_{i=1}^n \bigg( (\beta_i - 1) \bar{p}_i + A_i^{(e)} - (\beta_i - 1) p_i(\infty) - A_i^{(e)} (1 - s_i) \bigg) =\\
& = & \frac{1}{{\sum_{i=1}^n E_{tot}(0)}} \sum_{i=1}^n \bigg( A_i^{(e)} s_i - (1 - \beta_i) (\bar{p}_i - p_i(\infty))\bigg).
\end{eqnarray*}

Therefore the total cumulative relative equity loss of the system equals the total loss in assets at the first round minus the shortfall in payments to external debtholders, that are forced to endure a certain amount of writing-off of debt.
\end{proof}

\begin{proposition}[Second round losses in EN]
\label{prop:EN-second-round}
Let $s_i$ be the shock on the external assets of the $i$-th bank, then the exact expression for the cumulative relative equity loss incurred as second round effect is given by

$$ H^{EN}(\infty) - H^{EN}(1)  = \frac{1}{\sum_{i=1}^n E_i(0)} \bigg[ \sum_{i \in \mathcal{D}(1)} \bigg( A^{(e)}_i s_i - E_i(0) \bigg) -  \sum_{i = 1}^n (1 - \beta_i) (\bar{p}_i - p_i(\infty))\bigg]  $$ 

where $\mathcal{D}(1) = \{i \in V | s_i > 1/l_i^e\}$ is the set of defaulted banks at the first round. 

The result can be interpreted as saying that total equity loss during the second round is exactly equal to the loss in assets at the first round in excess of equity minus the shortfall in payments to external debtholders, which represents that fraction of the initial shock that is ultimately absorbed by entities outside of the network.

\end{proposition}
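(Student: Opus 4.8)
The plan is to obtain the claimed identity by subtracting the first-round vulnerability $H^{EN}(1)$ from the closed-form expression for $H^{EN}(\infty)$ already established in Proposition \ref{prop:EN-final-H}, and then to split the resulting sum according to which banks default at the first round.

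First I would compute $H^{EN}(1)$ explicitly. At $t=1$ only the direct shock has hit and the Fictitious Default Algorithm has not yet begun, so Equation (\ref{eq:en-equities-time-t}) with the payment vector still equal to $\bar{p}$, together with the balance-sheet identity $\sum_j \Pi^T_{ij}\bar{p}_j + A^{(e)}_i - \bar{p}_i = E_i(0)$, gives $\tilde{E}_i(1) = \max\{0,\, E_i(0) - s_i A^{(e)}_i\}$. Hence $E_i(0) - E_i(1) = \min\{E_i(0),\, s_i A^{(e)}_i\}$, and using the definition of $H(t)$ in (\ref{eq:global-vulnerability}),
$$ H^{EN}(1) = \frac{1}{\sum_{i=1}^n E_i(0)} \sum_{i=1}^n \min\{E_i(0),\, s_i A^{(e)}_i\}. $$

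Next I would identify which branch of the minimum is active. Since $l_i^e = A^{(e)}_i / E_i(0)$, the condition $s_i > 1/l_i^e$ defining $\mathcal{D}(1)$ is equivalent to $s_i A^{(e)}_i > E_i(0)$: a bank belongs to $\mathcal{D}(1)$ exactly when its first-round loss exceeds its equity. Thus the minimum equals $E_i(0)$ for $i \in \mathcal{D}(1)$ and $s_i A^{(e)}_i$ otherwise (the boundary case $s_i A^{(e)}_i = E_i(0)$ being immaterial, as both branches then coincide), which yields the split $\sum_{i} \min\{E_i(0),\, s_i A^{(e)}_i\} = \sum_{i \in \mathcal{D}(1)} E_i(0) + \sum_{i \notin \mathcal{D}(1)} s_i A^{(e)}_i$.

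Finally I would subtract. Writing $H^{EN}(\infty) - H^{EN}(1)$ by combining Proposition \ref{prop:EN-final-H} with the two displays above, the term $\sum_{i \notin \mathcal{D}(1)} s_i A^{(e)}_i$ cancels against the corresponding part of $\sum_i A^{(e)}_i s_i$, leaving $\sum_{i \in \mathcal{D}(1)} s_i A^{(e)}_i$; the shortfall term $-\sum_{i=1}^n (1-\beta_i)(\bar{p}_i - p_i(\infty))$ appears in $H^{EN}(\infty)$ only and passes through unchanged. Collecting $\sum_{i\in\mathcal{D}(1)}(s_i A^{(e)}_i - E_i(0))$ gives exactly the asserted formula. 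There is no real obstacle here — the argument is essentially bookkeeping on top of Proposition \ref{prop:EN-final-H} — but the one point deserving a word of care is that a bank $i \notin \mathcal{D}(1)$ may still default during the second round, so the shortfall sum legitimately ranges over \emph{all} banks rather than only over $\mathcal{D}(1)$; this is precisely why that term is left intact in the subtraction.
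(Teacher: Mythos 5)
Your proposal is correct and follows essentially the same route as the paper: both compute $H^{EN}(1)=\frac{1}{\sum_i E_i(0)}\sum_i \min\{E_i(0), s_i A^{(e)}_i\}$, split the minimum according to membership in $\mathcal{D}(1)$, and subtract from the closed form of Proposition \ref{prop:EN-final-H}, cancelling the non-defaulting banks' first-round terms. Your closing remark about why the shortfall sum must range over all banks (since second-round defaults also contribute) is a correct observation that the paper leaves implicit.
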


\begin{proof}

From Proposition (\ref{prop:EN-final-H}) we know

$$ H^{EN}(\infty) = \frac{1}{\sum_{i=1}^n E_i(0)} \sum_{i=1}^n \bigg( A^e_i s_i - (1 - \beta_i) (\bar{p}_i - p_i(\infty)) \bigg)$$ 

A direct computation of $H^{EN}(1)$ yields
\begin{equation}
 H^{EN}(1) = \frac{1}{\sum_{i=1}^n E_i(0)} \sum_{i=1}^n \big( E_i(0) \min \left\{1, l_i^e s_i \right\} \big), \nonumber
 \end{equation}
where $l_i^e := A_i^e / E_i(0)$ is the external leverage of bank $i$.
Therefore:
\begin{eqnarray*}
H^{EN}(\infty) - H^{EN}(1) & = & \frac{1}{\sum_{i=1}^n E_i(0)} \bigg[ \sum_{i=1}^n \bigg( A^e_i s_i - (1 - \beta_i) (\bar{p}_i - p_i(\infty)) \bigg) + \\
& & \qquad - \sum_{i=1}^n E_i(0) \min \{1, l_i^e s_i \} \bigg] =  \\
& = &\frac{1}{\sum_{i=1}^n E_i(0)} \bigg[ \sum_{i=1}^n \bigg( A^e_i s_i - (1 - \beta_i) (\bar{p}_i - p_i(\infty)) \bigg) + \\
& & \qquad -\sum_{i \in \mathcal{D}(1)} E_i(0) - \sum_{i \in \mathcal{D}(1)^c} A_i^{(e)} s_i \bigg] =\\
& = & \frac{1}{\sum_{i=1}^n E_i(0)} \bigg[ \sum_{i \in \mathcal{D}(1)} \bigg( A_i^{e} s_i - E_i(0) \bigg) -  \sum_{i=1}^n (1 - \beta_i) (\bar{p}_i - p_i(\infty)) \bigg],
\end{eqnarray*}

which proves our claim. 
\end{proof}

Notice that the expression proved in Proposition \ref{prop:EN-second-round} will in general depend on the clearing payment vector $p(\infty)$, since we cannot know in advance how much of the external debt will be written off. We are therefore interested in finding an upper bound of this quantity that is independent of $p(\infty)$ at convergence.

\begin{proposition}[Upper bound on second round losses in EN]
\label{prop:EN-second-round-upper-bound}
Under the hypotheses of Proposition \ref{prop:EN-second-round} we have an upper bound on the losses at the second round given by:
\begin{equation}
 H^{EN}(\infty) - H^{EN}(1)  \leq \frac{1}{\sum_{i=1}^n E_i(0)}  \sum_{i \in \mathcal{D}(1)} \beta_i \bigg( A^{e}_i s_i - E_i(0) \bigg). \nonumber
 \end{equation}
\end{proposition}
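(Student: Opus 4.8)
The plan is to reduce the claim to the exact formula for second-round losses established in Proposition~\ref{prop:EN-second-round} and then bound one of its terms from above. Set $S := \sum_{i=1}^n E_i(0)$. That proposition gives
$$S\bigl(H^{EN}(\infty)-H^{EN}(1)\bigr) \;=\; \sum_{i\in\mathcal{D}(1)}\bigl(A^e_i s_i - E_i(0)\bigr) \;-\; \sum_{i=1}^n (1-\beta_i)\bigl(\bar p_i - p_i(\infty)\bigr),$$
so, after rearranging the $\beta_i$ contributions, the inequality to be proved is equivalent to
$$\sum_{i\in\mathcal{D}(1)}(1-\beta_i)\bigl(A^e_i s_i - E_i(0)\bigr) \;\le\; \sum_{i=1}^n (1-\beta_i)\bigl(\bar p_i - p_i(\infty)\bigr).$$
Every summand on the right is nonnegative, being a product of $1-\beta_i\ge 0$ and $\bar p_i - p_i(\infty)\ge 0$; hence it suffices to keep only the indices $i\in\mathcal{D}(1)$ on the right and prove the termwise estimate $A^e_i s_i - E_i(0) \le \bar p_i - p_i(\infty)$ for each bank $i$ that defaults at the first round, then multiply by $1-\beta_i$ and sum.

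The core step is that termwise estimate. For $i\in\mathcal{D}(1)$ we have by definition $s_i > 1/l_i^e$, i.e. $s_i A^e_i > E_i(0)$, so $i$'s post-shock net worth is negative even while all its interbank claims are still valued at par. Using $p_j(\infty)\le\bar p_j$ for all $j$ (immediate from the clearing equation), so that $\sum_{j=1}^n \Pi^T_{ij} p_j(\infty) \le \sum_{j=1}^n \Pi^T_{ij}\bar p_j$, together with the $t=0$ balance-sheet identity $\sum_{j=1}^n \Pi^T_{ij}\bar p_j = E_i(0) + \bar p_i - A^e_i$, one checks that $i$'s available assets at convergence, $\sum_{j=1}^n \Pi^T_{ij} p_j(\infty) + (1-s_i)A^e_i$, are bounded above by $E_i(0)+\bar p_i - s_i A^e_i < \bar p_i$; therefore $i$ lies in the defaulted branch of the clearing equation, i.e. $p_i(\infty) = \sum_{j=1}^n \Pi^T_{ij} p_j(\infty) + (1-s_i)A^e_i$. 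Substituting and using the same two facts once more,
$$\bar p_i - p_i(\infty) \;=\; \bar p_i - \sum_{j=1}^n \Pi^T_{ij} p_j(\infty) - (1-s_i)A^e_i \;\ge\; \bar p_i - \bigl(E_i(0)+\bar p_i - A^e_i\bigr) - (1-s_i)A^e_i \;=\; s_i A^e_i - E_i(0),$$
which is exactly the required inequality. Assembling the pieces gives $\sum_{i\in\mathcal{D}(1)}(1-\beta_i)(\bar p_i - p_i(\infty)) \ge \sum_{i\in\mathcal{D}(1)}(1-\beta_i)(A^e_i s_i - E_i(0))$, and substituting back into the displayed identity for $S(H^{EN}(\infty)-H^{EN}(1))$ leaves $\sum_{i\in\mathcal{D}(1)}\bigl[(A^e_i s_i - E_i(0)) - (1-\beta_i)(A^e_i s_i - E_i(0))\bigr] = \sum_{i\in\mathcal{D}(1)}\beta_i(A^e_i s_i - E_i(0))$, which is the bound.

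The only genuinely delicate point — the one I would spell out in full — is the claim that each $i\in\mathcal{D}(1)$ sits in the defaulted branch of the clearing equation at convergence, so that $p_i(\infty)$ equals the bank's post-shock available assets rather than $\bar p_i$; this rests on the clearing equation together with the fact that $i$'s net worth is already negative at $t=1$, valuing interbank claims at par. Everything else is bookkeeping with the balance-sheet identity and the elementary bounds $0\le\beta_i\le 1$ and $0\le p_i(\infty)\le\bar p_i$. I would also note that the bound is attained with equality when $\beta_i = 1$ for all $i$, since then the ``leakage'' sum in Proposition~\ref{prop:EN-second-round} vanishes identically.
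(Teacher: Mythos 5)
Your proof is correct and follows essentially the same route as the paper: start from the exact identity of Proposition~\ref{prop:EN-second-round}, then bound the leakage term using $\bar p_i - p_i(\infty) \geq A_i^e s_i - E_i(0)$ for $i \in \mathcal{D}(1)$ and $\bar p_i - p_i(\infty) \geq 0$ otherwise. The only difference is that you rigorously derive the key termwise inequality from the clearing equation and the $t=0$ balance-sheet identity, whereas the paper asserts it with a verbal justification; that is a welcome (but not structurally different) addition.
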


\begin{proof}
Without knowing anything about $p(\infty)$, we can nevertheless say that:
\begin{equation}
\bar{p}_i - p_i(\infty) \geq
\begin{cases}
A_i^{e} s_i - E_i(0)  & \:\:\: \mbox{  if $i \in \mathcal{D}(1)$ } \\
0 & \:\:\: \mbox{ otherwise } 
\end{cases}, \nonumber
\end{equation}

in fact those banks that defaulted at the first round will have at least a shortfall in payments equal to the amount of lost assets in excess of equity, while we conservatively assume that all other banks will not default after the first round. This conservative estimates leads us to an upper bound for the second round effects given by:
\begin{eqnarray*}
H^{\text{EN}}(\infty) - H^{\text{EN}}(1) & = & \frac{1}{\sum_{i=1}^n E_i(0)} \bigg[ \sum_{i \in \mathcal{D}(1)} \bigg( A_i^{e} s_i - E_i(0) \bigg) -  \sum_{i=1}^n (1 - \beta_i) (\bar{p}_i - p_i(\infty)) \bigg] = \\
& \leq & \frac{1}{\sum_{i=1}^n E_i(0)} \bigg[ \sum_{i  \in \mathcal{D}(1)} \bigg( A_i^{e} s_i - E_i(0) \bigg) + \\
& & \qquad -  \sum_{i \in \mathcal{D}(1)} (1 - \beta_i) (A_i^{e} s_i - E_i(0)) \bigg] = \\
& = & \frac{1}{\sum_{i=1}^n E_i(0)} \sum_{i \in \mathcal{D}(1)} \beta_i \bigg( A_i^{e} s_i - E_i(0) \bigg).
\end{eqnarray*}

The intuition behind this upper bound is that the second round losses absorbed by the system will be equal to at most a fraction of the losses in assets in excess of equity, the factor of proportionality being exactly the financial connectivity coefficients of the banks hit at the first round. 
\end{proof}


Further calculations on the bound provided in Proposition \ref{prop:EN-second-round-upper-bound} lead to the rewrite the upper bound as follows:
\begin{eqnarray*}
H^{\text{EN}}(\infty) - H^{\text{EN}}(1) & \leq &\frac{1}{\sum_{i=1}^n E_i(0)} \sum_{i \in \mathcal{D}(1)} \beta_i \bigg( A_i^{e} s_i - E_i(0) \bigg) =  \\
& =& \sum_{i \in \mathcal{D}(1)} \beta_i \bigg( l_i^{e} \frac{E_i(0)}{\sum_{j=1}^n E_i(0)} s_i - \frac{E_i(0)}{\sum_{i=1}^n E_i(0)} \bigg) = \\
&=& \sum_{i \in \mathcal{D}(1)}\beta_i w_i  \left(l_i^e s_i - 1\right)
\end{eqnarray*}

where $l_i s_i - 1$ is the the \textit{loss in excess of equity} and it's always larger than zero by the way $\mathcal{D}(1)$ is defined, and $w_i = E_i(0) / E_{tot}(0)$ is the weight of bank $i$ in terms of total network equity. This formulation allows to interpret the second round losses as the weighted average of the individual excess losses at the first round on the subset of banks defaulted at the first round.

%
%
%
%
%

\subsection{Illustrative examples on EN}
\label{subsec:si-illustrative-examples}

\subsubsection{Mutualization of losses}

We now present some examples in order to clarify the process of losses mutualization in the network under the EN model. Consider a directed wheel graph on $n$ vertices, with $n \geq 2$, shown in Figure \ref{fig:wheel} for the case $n=4$. Order the vertices $\{1, \ldots, n\}$. Suppose there are two classes of banks with different balance sheets. The first node, occupying the center of the wheel, is a fragile bank, with balance sheet:
\begin{equation}
\begin{array}{c}
A_1^e = 75(n-1), \: \: A_1^b = 0 \\
E_1(0) = 5(n-1), \: \: L_1^b = 10(n-1), \: \: L_1^e = 60(n-1)
\end{array} \nonumber
\end{equation}
while the remaining $n-1$ banks are more capitalized:
\begin{equation}
\begin{array}{c}
A_i^e = 50(n-1), \: \: A_i^b = 15(n-1) \\
E_i(0) = 10(n-1), \: \: L_i^b = 5(n-1), L_i^e = 50(n-1), \: \: \: \mbox{for } i = 2, \ldots, n
\end{array} \nonumber
\end{equation}
The balance sheet quantities are consistent for every value of $n \geq 2$ and are linearly increasing with the size of the network. The figure below shows such a wheel graph for $n=4$. The bank shown in red is the fragile bank at the center of the wheel.

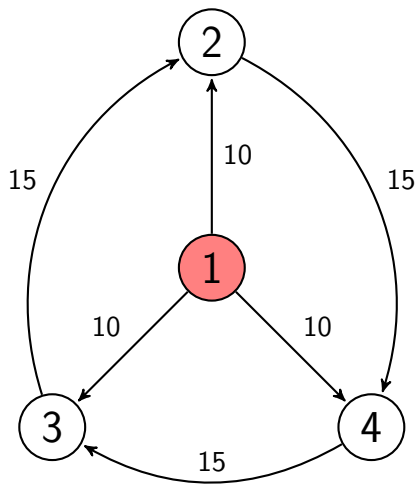
\begin{figure}[H]
\begin{center}
\begin{tikzpicture}[->,>=stealth',shorten >=1pt,auto,node distance=3cm,
                    thick,main node/.style={circle,draw,font=\sffamily\Large}]

  \node[main node] (2) {2};
  \node[main node, fill=red!50] (1) [below of=2] {1};
  \node[main node] (3) [below left of=1] {3};
  \node[main node] (4) [below right of=1] {4};

  \path[every node/.style={font=\sffamily\small}]
    (1) edge [right] node[right] {10} (2)
    (1) edge [right] node[above left] {10} (3)
    (1) edge [right] node[above right] {10} (4)
    (3) edge [bend left=40] node[above left] {15} (2)
    (2) edge [bend left=40] node[above right] {15} (4)
    (4) edge [bend left=30] node[above] {15} (3);
\end{tikzpicture}
\caption{Wheel graph for $n=4$.}
\label{fig:wheel}
\end{center}
\end{figure}

Assume an initial shock of $s = 10\%$ impacts the external assets of the first bank, making it default. We can readily compute the individual vulnerability of the remaining banks

$$ h^{EN}_i(\infty) = \frac{(A_1^e s - E_1(0) )\beta_1}{n-1}\cdot \frac{1}{E_i(0)} = \frac{3.75 \%}{n-1} , \: \: \: \mbox{for } i=2, \ldots, n$$

Notice that the $n-1$ factor at the denominator implies that all losses in excess of bank $1$ equity are fully mutualized among the counterparties. Computing the final global vulnerability yields to:
\begin{align}
 H^{EN}(\infty) & =  (n - 1) \frac{10(n-1)}{10 (n-1)^2  + 5(n-1)} \cdot \frac{3.57\%}{(n-1)}  + \nonumber \\ & + \frac{5(n-1)}{10(n-1)^2 + 5(n-1)} =  \frac{1.075}{2(n-1) + 1}. \nonumber
 \end{align}
As the number of counterparties increases the default of the more fragile institution induces less and less distress in the network, since losses are mutualized among a greater number of counterparties.

\subsubsection{Invariance with respect to network topology}

From Proposition \ref{prop:EN-second-round-upper-bound}, we know that the second round global vulnerability under the EN model admits the following bound, \emph{independently of the underlying network topology}
\begin{equation}
H^{EN}(\infty) - H^{EN}(1)  \leq \frac{1}{\sum_{i=1}^n E_i(0)}  \sum_{i \in \mathcal{D}(1)} \beta_i \bigg( A^{(e)}_i s_i - E_i(0) \bigg)\nonumber 
\end{equation}

We will now apply this bound to three simple networks with different topologies, as in Figure \ref{fig:topologies}, and show how the second round under EN is in fact \textit{independent} of the network structure.

\begin{figure}[H]
\centering
\begin{subfigure}[t]{\textwidth}
\centering
\begin{tikzpicture}[->,>=stealth',shorten >=1pt,auto,node distance=3cm,
                    thick,main node/.style={circle,draw,font=\sffamily\Large}]

  \node[main node, fill=red!50] (1) {1};
  \node[main node] (2) [right of=1] {2};
  \node[main node] (3) [right of=2] {3};
  \node[main node] (4) [right of=3] {4};

  \path[every node/.style={font=\sffamily\small}]
    (1) edge [right] node[below] {15} (2)
    (2) edge [right] node[below] {15} (3)
    (3) edge [right] node[below] {15} (4);
\end{tikzpicture}
\caption{Chain graph}
\label{fig:chain}
\end{subfigure}
\begin{subfigure}[t]{0.4\textwidth}
\centering
\begin{tikzpicture}[->,>=stealth',shorten >=1pt,auto,node distance=3cm,
                    thick,main node/.style={circle,draw,font=\sffamily\Large}]

  \node[main node] (2) {2};
  \node[main node, fill=red!50] (1) [below of=2] {1};
  \node[main node] (3) [below left of=1] {3};
  \node[main node] (4) [below right of=1] {4};

  \path[every node/.style={font=\sffamily\small}]
    (1) edge [right] node[right] {5} (2)
    (1) edge [right] node[above left] {5} (3)
    (1) edge [right] node[above right] {5} (4);
\end{tikzpicture}
\caption{Star graph}
\label{fig:star}
\end{subfigure}
\begin{subfigure}[t]{0.4\textwidth}
\centering
\begin{tikzpicture}[->,>=stealth',shorten >=1pt,auto,node distance=3cm,
                    thick,main node/.style={circle,draw,font=\sffamily\Large}]

  \node[main node, fill=red!50] (1) {1};
  \node[main node] (2) [left of=1] {2};
  \node[main node] (3) [below of=2] {3};
  \node[main node] (4) [right of=3] {4};

  \path[every node/.style={font=\sffamily\small}]
    (1) edge [bend right] node[above] {15} (2)
    (2) edge [bend right] node[left] {15} (3)
    (3) edge [bend right] node[below] {15} (4)
    (4) edge [bend right] node[right] {15} (1);
\end{tikzpicture}
\caption{Directed cyclic graph}
\label{fig:cycle}
\end{subfigure}
\caption{Examples of different topologies on simple 4-vertices graphs}
\label{fig:topologies}
\end{figure}
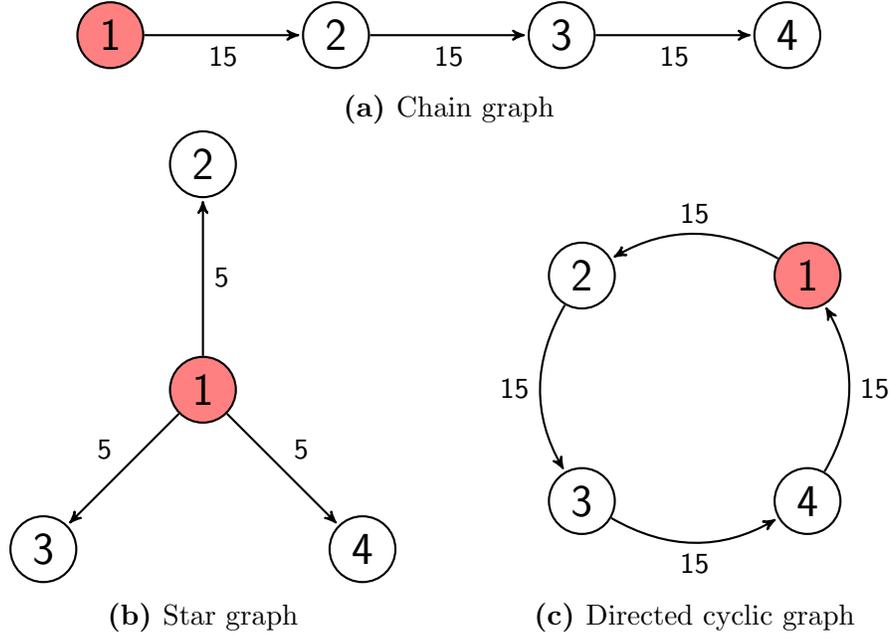

In each of the graphs shown, the first node (colored in red) corresponds to a fragile bank with balance sheet
\begin{equation}
 A_1^e= 80, \: \: L_1^e = 60, \: \: L^b_1 = 15, \: \: E_1(0) = 5. \nonumber \end{equation}

The remaining three banks have equities $E_i(0) = 10, i = 2, 3, 4$, while the interbank linkages are shown on the graphs themselves. Assume that an initial shock $s = 10\%$ impacts only the external assets of bank $1$, leading to its default. Before considering the topology of the network and only by looking at the balance sheet quantities provided, we know that:
\begin{eqnarray*}
H^{EN}(\infty) - H^{EN}(1) & \leq & \frac{1}{\sum_{i=1}^n E_i(0)}  \sum_{i \in \mathcal{D}(1)} \beta_i \bigg( A^{(e)}_i s_i - E_i(0) \bigg) \\
& = & \frac{1}{35} \cdot \frac{15}{60} \cdot 4 = 1.71\% .
\end{eqnarray*}

In other words, the second round losses cannot be higher than $1.71\%$ of the total initial equity. Indeed, a direct computation of the individual vulnerabilities induced by the initial shock in the cases of Figures \ref{fig:chain}, \ref{fig:star}, and \ref{fig:cycle} leads to the following results
$$ \mbox{\textbf{a)} } h^{EN}(\infty) = \begin{bmatrix}
       1 \\
       0.06 \\
       0 \\
       0 \\
     \end{bmatrix}, \:\:\:
     \mbox{\textbf{b)} }
     h^{EN}(\infty) = \begin{bmatrix}
       1 \\
       0.02 \\
       0.02 \\
       0.02
     \end{bmatrix},\:\:\:
     \mbox{\textbf{c)} }
     h^{EN}(\infty) = \begin{bmatrix}
       1 \\
       0.06 \\
       0 \\
       0
     \end{bmatrix}.$$
In all three cases, despite the differences in the vulnerabilities of individual banks and independently from the topology, the final result is:
\begin{equation}
H^{EN}(\infty) - H^{EN}(1) = 1.71 \% \: \: \: \mbox{ and } \: \: \: H^{EN}(\infty) = 16 \% \nonumber
\end{equation}
that is, the bound is satisfied strictly in the case of all three topologies. 

The reason why the bound is achieved exactly is that, in the particular cases of these examples, only \textit{one wave} of defaults is triggered, namely the default of bank $1$ at time $t=1$. If further defaults are induced, the second round is even smaller due to the fact that at each successive default a fraction of losses is absorbed by the external debtholders of the defaulting institution, thus reducing the conservative flow in and absorption by the network.

\begin{figure}[H]
\centering
\includegraphics[width=\textwidth]{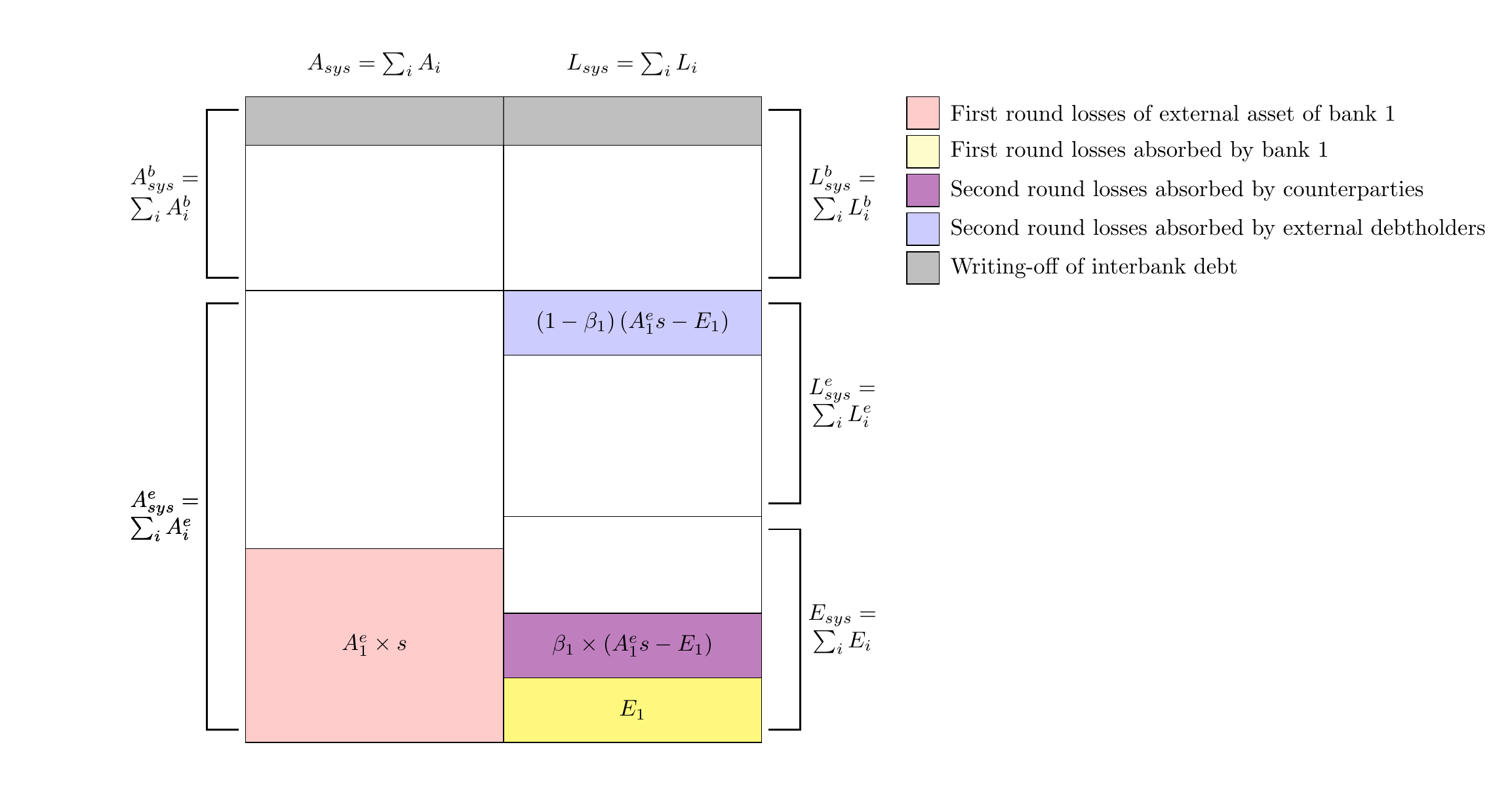}
\caption{Aggregated balance sheet for networks in Figures \ref{fig:chain}, \ref{fig:star}, \ref{fig:cycle} and corresponding dynamics for the EN model. Notice that the dynamics is completely network-independent}
\label{fig:agg-balance-sheet}
\end{figure}

It's worth emphasizing that the final global vulnerability value is the same as if we aggregated all balance sheets, as in Figure \ref{fig:agg-balance-sheet}. 

The initial losses induced by the shock in external assets of bank $1$, are partially shoulders by bank $1$ equity. Then a fraction $\beta_1$ of losses in excess is passed to its counterparties, while a fraction $1 - \beta_1$ is passed onto the external debtholders, thus leaving the network. The losses affecting counterparties then cascade through the network and, if they cause further defaults, are proportionally diminished by the fraction absorbed by external debtholders.

To see how the underlying topology could affect and truly amplify the propagation of losses, let's analyze the case of the aDR model, in which the results are clearly topology-dependent. From simple computations we obtain:

$$ \mbox{\textbf{a)} } h^{aDR}(\infty) = \begin{bmatrix}
       1 \\
       0.75 \\
       0.56 \\
       0.42 \\
     \end{bmatrix}, \:\:\:
     \mbox{\textbf{b)} }
     h^{aDR}(\infty) = \begin{bmatrix}
       1 \\
       0.75 \\
       0.75 \\
       0.75
     \end{bmatrix},\:\:\:
     \mbox{\textbf{c)} }
     h^{aDR}(\infty) = \begin{bmatrix}
       1 \\
       0.75 \\
       0.56 \\
       0.42
     \end{bmatrix}$$

so that

$$ \mbox{\textbf{a)} } H^{aDR}(\infty) = 0.64,\:\:\:
     \mbox{\textbf{b)} }
     H^{aDR}(\infty) = 0.79,\:\:\:
     \mbox{\textbf{c)} }
     H^{aDR}(\infty) = 0.64$$

In the case of the aDR model the advantages and disadvantages of the different topologies emerge clearly. Naturally enough aDR is a first-cycles process, so that the chain in Figure \ref{fig:chain} and the cycle in Figure \ref{fig:cycle} produce exactly the same final vulnerabitlity. But the star graph in Figure \ref{fig:star} leads to higher vulnerabilities, because the fragile node is able to trasmit its distress to the highest possible number of counterparties. Indeed the systemic risk posed by a star network configuration of this kind is well known and widely debated in the literature on CCPs (Central Clearing Counterparties).

\subsection{Data collection and processing}
\label{subsec:si-data-collection}

Our dataset comes from the Bureau Van Dijk Bankscope database\footnote{URL: bankscope.bvdinfo.com}. We focus on a representative subset of the 285 EU listed banks, for which we obtained individual balance sheet data\footnote{In details: we downloaded the following fields from the Universal Banking Model (UBM) of Bankscope. 1) ``Total Equity'', 2) ``Total assets'', 3) ``Loans and Advances to Banks'', 4) ``Deposits from Banks'', 5) ``Total Loans'', 6) ``Memo: Total Impaired Loans'', 7) ``Derivatives''} on a quarterly basis from 2005-Q4 to 2015-Q3.

Missing data was a big concern for the full dataset 0f 285 institutions. Table \ref{tab:nan-dataset-285} summarizes the percentages of banks affected by missing values at the end of each year for all the balance sheet quantities analyzed. Percentages of missing values in the equity time series range from $3.17\%$ to $93.31\%$ of all quarters among individual banks, with a mean of $(46.86 \pm 29.70)\%$ missing values for the average bank. The time series of interbank data is typically even sketchier, showing $(9.32 \pm 7.58)\%$ more missing values than the equity time series. Finally data on derivatives and impaired loans has the worst coverage of all quantities analyzed.

\begingroup
\setlength{\tabcolsep}{4pt}
\renewcommand{\arraystretch}{1.5}
\begin{table}[!htbp]\index{typefaces!sizes}
  \scriptsize%
  \begin{center}
    \begin{tabular}{llllllllllll}
      \toprule
\textbf{Quantity} &
\begin{tabular}{l} \textbf{2005} \end{tabular} & \begin{tabular}{l} \textbf{2006} \end{tabular} & \begin{tabular}{l} \textbf{2007} \end{tabular} & \begin{tabular}{l} \textbf{2008} \end{tabular} & \begin{tabular}{l} \textbf{2009} \end{tabular} & \begin{tabular}{l} \textbf{2010} \end{tabular} & \begin{tabular}{l} \textbf{2011} \end{tabular} & \begin{tabular}{l} \textbf{2012} \end{tabular} & \begin{tabular}{l} \textbf{2013} \end{tabular} & \begin{tabular}{l} \textbf{2014} \end{tabular} & \begin{tabular}{l} \textbf{2015} \end{tabular}\\
      \midrule
\begin{tabular}{l}Equity \&\\ Assets $\;$ \end{tabular} &  \begin{tabular}{l}  23.16 \end{tabular} &
\begin{tabular}{l}  16.84 \end{tabular} &
\begin{tabular}{l}  15.09 \end{tabular} &
\begin{tabular}{l}  12.28 \end{tabular} &
\begin{tabular}{l}  11.23 \end{tabular} &
\begin{tabular}{l}  8.07 \end{tabular} &
\begin{tabular}{l}  3.86 \end{tabular} &
\begin{tabular}{l}  3.51 \end{tabular} &
\begin{tabular}{l}  3.16 \end{tabular} &
\begin{tabular}{l}  5.96 \end{tabular} &
\begin{tabular}{l}  76.84 \end{tabular}\\
	 \midrule
\begin{tabular}{l} Interbank \& \\ External \\ Assets $\;$ \end{tabular} &
\begin{tabular}{l}  38.25 \end{tabular} &
\begin{tabular}{l}  34.74 \end{tabular} &
\begin{tabular}{l}  34.39 \end{tabular} &
\begin{tabular}{l}  32.28 \end{tabular} &
\begin{tabular}{l}  30.88 \end{tabular} &
\begin{tabular}{l}  28.42 \end{tabular} &
\begin{tabular}{l}  24.91 \end{tabular} &
\begin{tabular}{l}  23.86 \end{tabular} &
\begin{tabular}{l}  23.86 \end{tabular} &
\begin{tabular}{l}  25.61 \end{tabular} &
\begin{tabular}{l}  82.11 \end{tabular}\\
	\midrule
\begin{tabular}{l} Interbank \\ Liabilities $\;$ \end{tabular} &  \begin{tabular}{l}  36.49 \end{tabular} &
\begin{tabular}{l}  31.58 \end{tabular} &
\begin{tabular}{l}  32.98 \end{tabular} &
\begin{tabular}{l}  30.55 \end{tabular} &
\begin{tabular}{l}  28.77 \end{tabular} &
\begin{tabular}{l}  27.37 \end{tabular} &
\begin{tabular}{l}  23.31 \end{tabular} &
\begin{tabular}{l}  24.21 \end{tabular} &
\begin{tabular}{l}  25.26 \end{tabular} &
\begin{tabular}{l}  27.72 \end{tabular} &
\begin{tabular}{l}  81.75 \end{tabular}\\
	 \midrule
\begin{tabular}{l} Derivatives \\ $\;$ \end{tabular} &  \begin{tabular}{l}  59.65 \end{tabular} &
\begin{tabular}{l}  55.09 \end{tabular} &
\begin{tabular}{l}  51.93 \end{tabular} &
\begin{tabular}{l}  49.82 \end{tabular} &
\begin{tabular}{l}  50.18 \end{tabular} &
\begin{tabular}{l}  48.42 \end{tabular} &
\begin{tabular}{l}  41.05 \end{tabular} &
\begin{tabular}{l}  36.84 \end{tabular} &
\begin{tabular}{l}  35.09 \end{tabular} &
\begin{tabular}{l}  38.25 \end{tabular} &
\begin{tabular}{l}  85.96 \end{tabular}\\
	 \midrule
\begin{tabular}{l} Impaired \\ Loans $\;$ \end{tabular} &  \begin{tabular}{l}  68.07 \end{tabular} &
\begin{tabular}{l}  64.91 \end{tabular} &
\begin{tabular}{l}  60.70 \end{tabular} &
\begin{tabular}{l}  59.65 \end{tabular} &
\begin{tabular}{l}  55.79 \end{tabular} &
\begin{tabular}{l}  52.28 \end{tabular} &
\begin{tabular}{l}  44.56 \end{tabular} &
\begin{tabular}{l}  38.95 \end{tabular} &
\begin{tabular}{l}  38.25 \end{tabular} &
\begin{tabular}{l}  41.05 \end{tabular} &
\begin{tabular}{l}  89.12 \end{tabular}\\
\bottomrule
    \end{tabular}
  \end{center}
  \vspace{-0.5cm}
  \caption{Percentages of banks with unavailable values in the full dataset of 285 listed EU banks at each year-end. Breakdown by balance sheet quantity.}
  \label{tab:nan-dataset-285}
\end{table}
\endgroup

Furthermore, despite a positive trend in time that shows higher and more frequent coverage, most banks tend to publish data solely at end or mid-year periods (see Table \ref{tab:reporting-285}).

\begin{table}[!htbp]\index{typefaces!sizes}
  \footnotesize%
  \begin{center}
    \begin{tabular}{lllll}
      \toprule
\textbf{Quantity} &
\begin{tabular}{l} \textbf{Q1} \end{tabular} & 
\begin{tabular}{l} \textbf{Q2} \end{tabular} &
\begin{tabular}{l} \textbf{Q3} \end{tabular} & 
\begin{tabular}{l} \textbf{Q4} \end{tabular}\\
      \midrule
\begin{tabular}{l}Equity \\ Assets $\;$ \end{tabular} &  \begin{tabular}{l}  70.62 \end{tabular} &
\begin{tabular}{l}  35.02 \end{tabular} &
\begin{tabular}{l}  68.33 \end{tabular} &
\begin{tabular}{l}  16.36 \end{tabular}\\
      \midrule
\begin{tabular}{l} Interbank and \\ External \\ Assets $\;$ \end{tabular} &  
\begin{tabular}{l}  72.85 \end{tabular} &
\begin{tabular}{l}  48.96 \end{tabular} &
\begin{tabular}{l}  70.81 \end{tabular} &
\begin{tabular}{l}  34.48 \end{tabular}\\
      \midrule
\begin{tabular}{l} Interbank \\ Liabilities $\;$ \end{tabular} &  \begin{tabular}{l}  73.75 \end{tabular} &
\begin{tabular}{l}  48.48 \end{tabular} &
\begin{tabular}{l}  71.90 \end{tabular} &
\begin{tabular}{l}  33.65 \end{tabular}\\
      \midrule
\begin{tabular}{l} Derivatives $\;$ \end{tabular} &  \begin{tabular}{l}  82.14 \end{tabular} &
\begin{tabular}{l}  61.50 \end{tabular} &
\begin{tabular}{l}  81.02 \end{tabular} &
\begin{tabular}{l}  50.21 \end{tabular}\\
      \midrule
\begin{tabular}{l} Impaired \\ Loans $\;$ \end{tabular} &  \begin{tabular}{l}  87.21 \end{tabular} &
\begin{tabular}{l}  73.72 \end{tabular} &
\begin{tabular}{l}  86.22 \end{tabular} &
\begin{tabular}{l}  55.76 \end{tabular}\\
\bottomrule
    \end{tabular}
  \end{center}
  \vspace{-0.5cm}
  \caption{Percentages of banks failing to report balance sheet quantitied in the full dataset of 285 listed EU banks. Breakdown by balance sheet quantity and quarter. Percentages at fixed quarter refer to averages on specified quarter of all years in the dataset.}
  \label{tab:reporting-285}
\end{table}

Therefore we focus on a subset of the original dataset and select the top 50 banks by total assets in 2013-Q4, so as to maximize data coverage (bigger institutions tend to report more data, more often) and obtain a representative sample of the EU banking system. 
Table \ref{tab:nan-dataset-50} is analogous to Table \ref{tab:nan-dataset-285} and shows the clear gain in data coverage for this restriction of the dataset.

\begingroup
\setlength{\tabcolsep}{4pt}
\renewcommand{\arraystretch}{1.5}
\begin{table}[!htbp]\index{typefaces!sizes}
  \scriptsize%
  \begin{center}
    \begin{tabular}{llllllllllll}
      \toprule
\textbf{Quantity} &
\begin{tabular}{l} \textbf{2005} \end{tabular} & \begin{tabular}{l} \textbf{2006} \end{tabular} & \begin{tabular}{l} \textbf{2007} \end{tabular} & \begin{tabular}{l} \textbf{2008} \end{tabular} & \begin{tabular}{l} \textbf{2009} \end{tabular} & \begin{tabular}{l} \textbf{2010} \end{tabular} & \begin{tabular}{l} \textbf{2011} \end{tabular} & \begin{tabular}{l} \textbf{2012} \end{tabular} & \begin{tabular}{l} \textbf{2013} \end{tabular} & \begin{tabular}{l} \textbf{2014} \end{tabular} & \begin{tabular}{l} \textbf{2015} \end{tabular}\\
      \midrule
\begin{tabular}{l} Interbank \\ Liabilities $\;$ \end{tabular} &
\begin{tabular}{l}  1.40 \end{tabular} &
\begin{tabular}{l}  0.35 \end{tabular} &
\begin{tabular}{l}  0.00 \end{tabular} &
\begin{tabular}{l}  0.35 \end{tabular} &
\begin{tabular}{l}  0.35 \end{tabular} &
\begin{tabular}{l}  0.35 \end{tabular} &
\begin{tabular}{l}  0.35 \end{tabular} &
\begin{tabular}{l}  0.35 \end{tabular} &
\begin{tabular}{l}  0.35 \end{tabular} &
\begin{tabular}{l}  1.05 \end{tabular} &
\begin{tabular}{l}  9.47 \end{tabular}\\
	\midrule
\begin{tabular}{l} Derivatives$\;$ \end{tabular} &  \begin{tabular}{l}  1.40 \end{tabular} &
\begin{tabular}{l}  0.35 \end{tabular} &
\begin{tabular}{l}  0.00 \end{tabular} &
\begin{tabular}{l}  0.00 \end{tabular} &
\begin{tabular}{l}  0.00 \end{tabular} &
\begin{tabular}{l}  0.00 \end{tabular} &
\begin{tabular}{l}  0.00 \end{tabular} &
\begin{tabular}{l}  0.00 \end{tabular} &
\begin{tabular}{l}  0.00 \end{tabular} &
\begin{tabular}{l}  0.35 \end{tabular} &
\begin{tabular}{l}  10.53 \end{tabular}\\
	 \midrule
\begin{tabular}{l} Impaired \\ Loans $\;$ \end{tabular} &  \begin{tabular}{l}  3.86 \end{tabular} &
\begin{tabular}{l}  1.75 \end{tabular} &
\begin{tabular}{l}  0.70 \end{tabular} &
\begin{tabular}{l}  0.70 \end{tabular} &
\begin{tabular}{l}  0.70 \end{tabular} &
\begin{tabular}{l}  1.05 \end{tabular} &
\begin{tabular}{l}  0.35 \end{tabular} &
\begin{tabular}{l}  0.35 \end{tabular} &
\begin{tabular}{l}  0.70 \end{tabular} &
\begin{tabular}{l}  0.70 \end{tabular} &
\begin{tabular}{l}  1.93 \end{tabular}\\
	 \midrule
\begin{tabular}{l} All other \\ quantities $\;$ \end{tabular} &  \begin{tabular}{l}  1.40 \end{tabular} &
\begin{tabular}{l}  $\leq$ 0.35 \end{tabular} &
\begin{tabular}{l}  0.00 \end{tabular} &
\begin{tabular}{l}  0.00 \end{tabular} &
\begin{tabular}{l}  0.00 \end{tabular} &
\begin{tabular}{l}  0.00 \end{tabular} &
\begin{tabular}{l}  0.00 \end{tabular} &
\begin{tabular}{l}  0.00 \end{tabular} &
\begin{tabular}{l}  0.00 \end{tabular} &
\begin{tabular}{l}  $\leq$ 0.35 \end{tabular} &
\begin{tabular}{l}  $\leq$ 8.77 \end{tabular}\\
\bottomrule
    \end{tabular}
  \end{center}
  \vspace{-0.5cm}
  \caption{Percentages of banks with unavailable values in the restricted dataset of top 50 listed EU banks by total assets. Breakdown by balance sheet quantity. Percentage refer to end-of-year data.}
\label{tab:nan-dataset-50}
\end{table}
\endgroup

Bigger banks indeed tend to report their balance sheet data more frequently and more consistently, thus restricting to the subset of the top 50 banks by assets leads to a dramatic improvement in data quality. Table \ref{tab:coverage-dataset-50} shows the percentage of basic balance sheet quantities covered by our selection of top 50 banks with respect to the original dataset of 285 banks. Indeed this subsample is shown to be representantive of the whole EU banking system in absolute terms, since most of equity and assets is concentrated in these bigger institutions.

\begingroup
\setlength{\tabcolsep}{4pt}
\renewcommand{\arraystretch}{1.5}
\begin{table}[!htbp]\index{typefaces!sizes}
  \scriptsize%
  \begin{center}
    \begin{tabular}{llllllllllll}
      \toprule
\textbf{Quantity} &
\begin{tabular}{l} \textbf{2005} \end{tabular} & \begin{tabular}{l} \textbf{2006} \end{tabular} & \begin{tabular}{l} \textbf{2007} \end{tabular} & \begin{tabular}{l} \textbf{2008} \end{tabular} & \begin{tabular}{l} \textbf{2009} \end{tabular} & \begin{tabular}{l} \textbf{2010} \end{tabular} & \begin{tabular}{l} \textbf{2011} \end{tabular} & \begin{tabular}{l} \textbf{2012} \end{tabular} & \begin{tabular}{l} \textbf{2013} \end{tabular} & \begin{tabular}{l} \textbf{2014} \end{tabular} & \begin{tabular}{l} \textbf{2015} \end{tabular}\\
      \midrule
\begin{tabular}{l} Equity \end{tabular} &
\begin{tabular}{l}  82.45 \end{tabular} &
\begin{tabular}{l}  82.46 \end{tabular} &
\begin{tabular}{l}  82.09 \end{tabular} &
\begin{tabular}{l}  83.57 \end{tabular} &
\begin{tabular}{l}  84.07 \end{tabular} &
\begin{tabular}{l}  81.38 \end{tabular} &
\begin{tabular}{l}  80.53 \end{tabular} &
\begin{tabular}{l}  82.29 \end{tabular} &
\begin{tabular}{l}  80.16 \end{tabular} &
\begin{tabular}{l}  81.03 \end{tabular} &
\begin{tabular}{l}  89.97 \end{tabular}\\
	 \midrule
\begin{tabular}{l} Assets \end{tabular} &
\begin{tabular}{l}  94.23 \end{tabular} &
\begin{tabular}{l}  93.99 \end{tabular} &
\begin{tabular}{l}  94.52 \end{tabular} &
\begin{tabular}{l}  94.00 \end{tabular} &
\begin{tabular}{l}  91.78 \end{tabular} &
\begin{tabular}{l}  89.83 \end{tabular} &
\begin{tabular}{l}  89.30 \end{tabular} &
\begin{tabular}{l}  89.09 \end{tabular} &
\begin{tabular}{l}  88.38 \end{tabular} &
\begin{tabular}{l}  89.50 \end{tabular} &
\begin{tabular}{l}  92.89 \end{tabular}\\
	\midrule
\begin{tabular}{l} Interbank \\ Assets \end{tabular} &  \begin{tabular}{l}  94.42 \end{tabular} &
\begin{tabular}{l}  93.68 \end{tabular} &
\begin{tabular}{l}  93.96 \end{tabular} &
\begin{tabular}{l}  87.31 \end{tabular} &
\begin{tabular}{l}  82.37 \end{tabular} &
\begin{tabular}{l}  85.98 \end{tabular} &
\begin{tabular}{l}  82.42 \end{tabular} &
\begin{tabular}{l}  80.50 \end{tabular} &
\begin{tabular}{l}  84.19 \end{tabular} &
\begin{tabular}{l}  86.02 \end{tabular} &
\begin{tabular}{l}  72.68 \end{tabular}\\
	 \midrule
\begin{tabular}{l} Interbank \\ Liabilities \end{tabular} &  \begin{tabular}{l}  95.13 \end{tabular} &
\begin{tabular}{l}  94.12 \end{tabular} &
\begin{tabular}{l}  93.81 \end{tabular} &
\begin{tabular}{l}  90.55 \end{tabular} &
\begin{tabular}{l}  85.59 \end{tabular} &
\begin{tabular}{l}  82.76 \end{tabular} &
\begin{tabular}{l}  84.67 \end{tabular} &
\begin{tabular}{l}  83.81 \end{tabular} &
\begin{tabular}{l}  78.80 \end{tabular} &
\begin{tabular}{l}  83.03 \end{tabular} &
\begin{tabular}{l}  93.99 \end{tabular}\\
	 \midrule
\begin{tabular}{l} Derivatives \end{tabular} &  \begin{tabular}{l}  99.36 \end{tabular} &
\begin{tabular}{l}  99.36 \end{tabular} &
\begin{tabular}{l}  99.44 \end{tabular} &
\begin{tabular}{l}  99.40 \end{tabular} &
\begin{tabular}{l}  98.90 \end{tabular} &
\begin{tabular}{l}  97.70 \end{tabular} &
\begin{tabular}{l}  97.52 \end{tabular} &
\begin{tabular}{l}  96.92 \end{tabular} &
\begin{tabular}{l}  97.08 \end{tabular} &
\begin{tabular}{l}  97.69 \end{tabular} &
\begin{tabular}{l}  99.44 \end{tabular}\\
	 \midrule
\begin{tabular}{l} Impaired \\ Loans \end{tabular} &  \begin{tabular}{l}  89.80 \end{tabular} &
\begin{tabular}{l}  92.80 \end{tabular} &
\begin{tabular}{l}  94.39 \end{tabular} &
\begin{tabular}{l}  93.68 \end{tabular} &
\begin{tabular}{l}  92.31 \end{tabular} &
\begin{tabular}{l}  91.37 \end{tabular} &
\begin{tabular}{l}  89.40 \end{tabular} &
\begin{tabular}{l}  87.32 \end{tabular} &
\begin{tabular}{l}  86.16 \end{tabular} &
\begin{tabular}{l}  87.94 \end{tabular} &
\begin{tabular}{l}  90.65 \end{tabular}\\
	 \midrule
\begin{tabular}{l} External \\ Assets \end{tabular} &  \begin{tabular}{l}  95.77 \end{tabular} &
\begin{tabular}{l}  95.96 \end{tabular} &
\begin{tabular}{l}  95.90 \end{tabular} &
\begin{tabular}{l}  95.44 \end{tabular} &
\begin{tabular}{l}  93.82 \end{tabular} &
\begin{tabular}{l}  91.50 \end{tabular} &
\begin{tabular}{l}  91.32 \end{tabular} &
\begin{tabular}{l}  90.75 \end{tabular} &
\begin{tabular}{l}  90.03 \end{tabular} &
\begin{tabular}{l}  91.18 \end{tabular} &
\begin{tabular}{l}  93.84 \end{tabular}\\
\bottomrule
    \end{tabular}
  \end{center}
  \vspace{-0.5cm}
  \caption{Coverage of balance sheet quantities by the restricted dataset of top 50 listed EU banks with respect to the original dataset of 285 banks. Breakdown by balance sheet quantity. Percentages refer to end-of-year data.}
\label{tab:coverage-dataset-50}
\end{table}
\endgroup

For this choice of dataset the time series of total assets is complete. For some problematic banks, the equity and total loans series present an average of $(23.15\pm22.66)\%$ missing data points but never more than three of them consecutively therefore they have been completely reconstructed by simple linear interpolation.

In the case of missing interbank assets data we performed an estimation via linear interpolation on interbank leverages available for other years. The choice of interpolating on interbank leverages instead of directly interpolating on interbank assets is due to the fact that the former are more stable in time, being ideally free of any trend, and can be easily obtained from the equity time series, which is complete.

Analogously, for missing interbank liabilities we interpolated on the fraction of interbank liabilities to total liabilities, which is also more stable in time than the corresponding time series on interbank liabilities and allows to fully exploit the completely reconstructed liabilities time series (difference between assets and equity).

Finally, missing impaired loans have been reconstructed by linearly interpolating on the fraction of impaired loans to total loans.

\subsection{Network reconstruction}
\label{subsec:si-network-reconstruction}

Detailed bilateral exposures between individual institutions are typically confidential. Several works on national interbank networks have used real bilateral data collected by national central banks, but at the larger EU-level there is no comprehensive dataset available.

At each point in time we reconstruct $1,000$ interbank networks from aggregate data of the individual banks in our dataset. For the reconstruction of the adjacency matrix we follow the ``fitness model'' outlined in \cite{demasi2006fitness}, \cite{musmeci2012bootstrapping} and \cite{montagna2016contagion}, which allows to replicate a core-periphery topology typical of interbank markets. The weighted adjacency matrix is then obtained by calculating weights using a proportional fitting algorithm introduced in \cite{battiston2016leveraging}. 

More in detail, the procedure is as follows:
\begin{itemize}
\item \textbf{Total exposure rebalancing.} Our dataset represents only a subset of the entire interbank market. We notice that total interbank assets, $A^b = \sum A_i^b$, are systematically lower than total interbank liabilities, $L^b = \sum L_i^b$ (ranging from a minimum of $59.4\%$ to a maximum of $92.2\%$). This implies that the top 50 EU listed banks are net borrowers. We take a conservative stand and assume that the volume of total lending in the network is the minimum between the two, i.e. total interbank assets.
\item \textbf{Link assignment} In agreement with the fitness model, define $x_i = \frac{1}{2}\big(\frac{A_i^b}{A^b} + \frac{L_i^b}{L^b}\big)$ as the ``fitness'' of bank $i$, which is a proxy for the importance of the bank in the interbank market.
We can then estimate the probability that a bilateral exposure between bank $i$ and bank $j$ exists as $ p_{ij} = \frac{z x_i x_j}{1 + z x_i x_j}$. The parameter $z$ is free and is chosen in such a way that the average density is $20\%$, compatibly with the fact that our top 50 EU listed banks represent the core of the interbank market and are highly interlinked\footnote{We have performed robustness checks for this particular choice of density and noticed negligible changes in results for lower densities}. Subsequently, $1,000$ realizations of this prior matrix are sampled, yielding $1,000$ adjacency matrices.
\item \textbf{Weight assignment} The last step is that of assigning weights to each matrix. Following the methodology presented in \cite{battiston2016leveraging} we impose the constraint that the sum of exposures of each bank equals its total interbank assets, $A_i^b$. We implement an iterative proportional fitting algorithm on the interbank exposure matrix.
In a nutshell, we wish to estimate the matrix $\pi_{ij} = A_{ij}^b / A^b$, which is simply the matrix $A_{ij}^b$ normalized on the sum of all entries. Each iteration of the fitting procedure consists of two steps:
\begin{enumerate}
\item $\hat{\pi}_{ij}' = \frac{\hat{\pi}_{ij}}{\sum_j \hat{\pi}_{ij}} \frac{A_i^b}{A^b}$
\item $\hat{\pi}_{ij}'' = \frac{\hat{\pi}'_{ij}}{\sum_i \hat{\pi}'_{ij}} \frac{L_i^b}{L^b}$
\end{enumerate}
The procedure stops as soon as $\sum_j \hat{\pi}_{ij} - A_i^b /A^b$ and $\sum_i \hat{\pi}_{ij} - L_i^b /L^b$ are below $1\%$.
Finally the estimated weighted adjacency matrix is obtained as $\pi_{ij} \cdot A^b$.
\end{itemize}

\clearpage

\bibliographystyle{apalike}
\bibliography{references}

\begin{thebibliography}{}

\bibitem[Abad et~al., 2016]{abad2016shedding}
Abad, J., Aldasoro, J.~I., Aymanns, C., D'Errico, M., Fache~Rousova, L.,
  Hoffman, P., Langfield, S., Neychev, M., and Roukny, T. (2016).
\newblock {Shedding lights on dark markets: first insights from the new EU-wide
  OTC derivative dataset}.
\newblock {\em ESRB Occasional Paper Series - forthcoming}.

\bibitem[Allahrakha et~al., 2015]{allahrakha2015systemic}
Allahrakha, M., Glasserman, P., Young, H.~P., et~al. (2015).
\newblock {Systemic Importance Indicators for 33{U.S.} bank holding companies:
  An Overview of Recent Data}.
\newblock {\em {Office for Financial Research Brief Series}}.

\bibitem[Allen and Santomero, 1997]{allen1997theory}
Allen, F. and Santomero, A.~M. (1997).
\newblock The theory of financial intermediation.
\newblock {\em Journal of Banking \& Finance}, 21(11):1461--1485.

\bibitem[{American Bankers Association}, 2008]{aba2008letter}
{American Bankers Association} ({September 2008}).
\newblock Letter to {SEC}.

\bibitem[Amundsen and Arnt, 2005]{amundsen2005contagion}
Amundsen, E. and Arnt, H. (2005).
\newblock Contagion risk in the danish interbank market.
\newblock Technical report, Danmarks Nationalbank Working Papers.

\bibitem[Bardoscia et~al., 2015]{bardoscia2015debtrank}
Bardoscia, M., Battiston, S., Caccioli, F., and Caldarelli, G. (2015).
\newblock {DebtRank: A microscopic foundation for shock propagation}.
\newblock {\em PLoS ONE}, 10(6):e0134888.

\bibitem[Bardoscia et~al., 2016]{bardoscia2016paths}
Bardoscia, M., Battiston, S., Caccioli, F., and Caldarelli, G. (2016).
\newblock {Pathways towards instability in financial networks}.
\newblock Available at http://arxiv.org/abs/1602.05883.

\bibitem[Barucca et~al., 2016]{barucca2016network}
Barucca, P., Bardoscia, M., Caccioli, F., D'Errico, M., Visentin, G.,
  Battiston, S., and Caldarelli, G. (2016).
\newblock Network valuation in financial systems.
\newblock {\em Available at SSRN 2795583}.

\bibitem[{Basel Committee for Banking Supervision}, 2010]{basel2010global}
{Basel Committee for Banking Supervision} (2010).
\newblock {Basel III: A Global Regulatory Framework for More Resilient Banks
  and Banking Systems}.

\bibitem[Battiston et~al., 2016a]{battiston2016leveraging}
Battiston, S., Caldarelli, G., D'errico, M., and Gurciullo, S. (2016a).
\newblock {Leveraging the network: a stress-test framework based on DebtRank}.
\newblock {\em Statistics and Risk Modeling, forthcoming}, pages 1--33.

\bibitem[Battiston et~al., 2016b]{battiston2016financial}
Battiston, S., Caldarelli, G., and D’Errico, M. (2016b).
\newblock The financial system as a nexus of interconnected networks.
\newblock In {\em Interconnected Networks}, pages 195--229. Springer
  International Publishing.

\bibitem[Battiston et~al., 2016c]{battiston2016price}
Battiston, S., Caldarelli, G., May, R.~M., Roukny, T., and Stiglitz, J.~E.
  (2016c).
\newblock The price of complexity in financial networks.
\newblock {\em Proceedings of the National Academy of Sciences}.

\bibitem[Battiston et~al., 2015]{battiston2016debtrank}
Battiston, S., D’Errico, M., and Gurciullo, S. (2015).
\newblock {DebtRank and the network of leverage}.
\newblock {\em Journal of Alternative Investments}.

\bibitem[Battiston et~al., 2012a]{battiston2012credit}
Battiston, S., Gatti, D.~D., Gallegati, M., Greenwald, B., and Stiglitz, J.~E.
  (2012a).
\newblock {Default cascades: When does risk diversification increase
  stability?}
\newblock {\em Journal of Financial Stability}, 8(3):138--149.

\bibitem[Battiston et~al., 2012b]{battiston2012debtrank}
Battiston, S., Puliga, M., Kaushik, R., Tasca, P., and Caldarelli, G. (2012b).
\newblock {DebtRank: Too Central to Fail? Financial Networks, the FED and
  Systemic Risk}.
\newblock {\em Scientific Reports}, 2:1--6.

\bibitem[{BCBS}, 2011]{bcbs2011cva}
{BCBS} (2011).
\newblock Capital treatment for bilateral counterparty credit risk finalised by
  the basel committee on banking supervision.

\bibitem[Bl{\aa}varg and Nimander, 2002]{blaavarg2002interbank}
Bl{\aa}varg, M. and Nimander, P. (2002).
\newblock Interbank exposures and systemic risk.
\newblock In {\em Third Joint Central Bank Research Conference on Risk
  Measurement and Systemic Risk held at the BIS in March 2002.}, page 287.
  Citeseer.

\bibitem[Burnham, 1991]{burnham1991structure}
Burnham, J. (1991).
\newblock Current structure and recent developments in foreign exchange
  markets.
\newblock {\em editor: S. Khoury, Recent Developments in International Banking
  and Finance}, pages 123–--153.

\bibitem[Caballero and Simsek, 2013]{caballero2013fire}
Caballero, R.~J. and Simsek, A. (2013).
\newblock Fire sales in a model of complexity.
\newblock {\em The Journal of Finance}, 68(6):2549--2587.

\bibitem[Cifuentes et~al., 2005]{cifuentes2005liquidity}
Cifuentes, R., Ferrucci, G., and Shin, H.~S. (2005).
\newblock Liquidity risk and contagion.
\newblock {\em Journal of the European Economic Association}, 3(2-3):556--566.

\bibitem[Cimini et~al., 2015a]{cimini2015estimating}
Cimini, G., Squartini, T., Gabrielli, A., and Garlaschelli, D. (2015a).
\newblock Estimating topological properties of weighted networks from limited
  information.
\newblock {\em Phys. Rev. E}, 92:040802.

\bibitem[Cimini et~al., 2015b]{cimini2015systemic}
Cimini, G., Squartini, T., Garlaschelli, D., and Gabrielli, A. (2015b).
\newblock Systemic risk analysis on reconstructed economic and financial
  networks.
\newblock {\em Scientific reports}, 5.

\bibitem[Clerc et~al., 2016]{clerc2016indirect}
Clerc, L., Giovannini, A., Langfield, S., Peltonen, T., Portes, R., and
  Scheicher, M. (2016).
\newblock Indirect contagion: the policy problem.
\newblock {\em Occasional Paper}, 9.

\bibitem[Cont, 2010]{cont2010credit}
Cont, R. (2010).
\newblock {Credit default swaps and financial stability}.
\newblock {\em Banque de France, Financial Stability Review}, 14.

\bibitem[de~Masi et~al., 2006]{demasi2006fitness}
de~Masi, G., Iori, G., and Caldarelli, G. (2006).
\newblock {Fitness model for the Italian interbank money market}.
\newblock {\em Physical Review E}, 74(6).

\bibitem[de~Souza et~al., 2016]{de2016evaluating}
de~Souza, S. R.~S., Silva, T.~C., Tabak, B.~M., and Guerra, S.~M. (2016).
\newblock Evaluating systemic risk using bank default probabilities in
  financial networks.
\newblock {\em Journal of Economic Dynamics and Control}, 66:54--75.

\bibitem[Degryse and Nguyen, 2007]{degryse2007interbank}
Degryse, H. and Nguyen, G. (2007).
\newblock {Interbank exposures: An empirical examination of contagion risk in
  the Belgian banking system}.
\newblock {\em International Journal of Central Banking}, 3(2):123--171.

\bibitem[D'Errico et~al., 2016]{derrico2015passing}
D'Errico, M., Battiston, S., Tuomas, P., and Scheicher, M. (2016).
\newblock {How does risk flow in the Credit Default Swap market?}
\newblock {Mimeo, forthcoming.}

\bibitem[D'Errico and Roukny, 2016]{derrico2016compression}
D'Errico, M. and Roukny, T. (2016).
\newblock {Notional excess and the mechanics of portfolio compression}.
\newblock {\em Working paper}.

\bibitem[Draghi, 2013]{draghi2013intercon}
Draghi, M. (2013).
\newblock {Hearing before the Committee on Economic and Monetary Affairs of the
  European Parliament}.
\newblock Introductory statement by Mario Draghi, Chair of the European
  Systemic Risk Board, Brussels.

\bibitem[Eisenberg and Noe, 2001]{eisenberg2001systemic}
Eisenberg, L. and Noe, T.~H. (2001).
\newblock {Systemic Risk in Financial Systems}.
\newblock {\em Management Science}, 47(2):236--249.

\bibitem[Elsinger et~al., 2009]{elsinger2009financial}
Elsinger, H. et~al. (2009).
\newblock {\em Financial networks, cross holdings, and limited liability}.
\newblock Oesterreichische Nationalbank Austria.

\bibitem[Elsinger et~al., 2005]{elsinger2005using}
Elsinger, H., Lehar, A., and Summer, M. (2005).
\newblock Using market information for banking system risk assessment.
\newblock {\em Available at SSRN 787929}.

\bibitem[Elsinger et~al., 2006]{elsinger2006risk}
Elsinger, H., Lehar, A., and Summer, M. (2006).
\newblock {Risk Assessment for Banking Systems}.
\newblock {\em Management Science}, 52(9):1301--1314.

\bibitem[Elsinger et~al., 2013]{elsinger2013models}
Elsinger, H., Lehar, A., and Summer, M. (2013).
\newblock {Network Models and Systemic Risk Assessment}.
\newblock In Fouque, J.-P. and Langsam, J.~A., editors, {\em Handbook on
  Systemic Risk}, pages 287--305. Cambridge University Press.

\bibitem[{European Central Bank}, 2009]{ecb2009counterpartycds}
{European Central Bank} (2009).
\newblock {Credit Default Swaps and Counterparty Risk}.

\bibitem[{Financial Accounting Standards Board}, 2006]{ifsa2006statement}
{Financial Accounting Standards Board} (2006).
\newblock {Statement of Financial Accounting Standard No. 157}.

\bibitem[{Financial Services Authority}, 2010]{fsa2010prudential}
{Financial Services Authority} (2010).
\newblock The prudential regime for trading activities.

\bibitem[Flood, 1994]{flood1994market}
Flood, M.~D. (1994).
\newblock Market structure and inefficiency in the foreign exchange market.
\newblock {\em Journal of International Money and Finance}, 13(2):131--158.

\bibitem[Frisell et~al., 2007]{frisell2007state}
Frisell, L., Holmfeld, M., Larsson, O., Omberg, M., and Persson, M. (2007).
\newblock State-dependent contagion risk: using micro data from swedish banks.
\newblock {\em Preprint}.

\bibitem[Furfine, 2003]{furfine2003interbank}
Furfine, C. (2003).
\newblock Interbank exposures: Quantifying the risk of contagion.
\newblock {\em Journal of Money, Credit, and Banking}, 35(1):111--128.

\bibitem[Glasserman and Young, 2015]{glasserman2015likely}
Glasserman, P. and Young, H.~P. (2015).
\newblock How likely is contagion in financial networks?
\newblock {\em Journal of Banking \& Finance}, 50:383--399.

\bibitem[Graf et~al., 2005]{graf2005interbank}
Graf, J.~P., Guerrero, S., and Lopez-Gallo, F. (2005).
\newblock Interbank exposures and contagion: an empirical analysis for the
  mexican banking sector.
\newblock {\em Bank of Mexico, May, mimeo}.

\bibitem[Gros, 2010]{gros2010leverage}
Gros, D. (2010).
\newblock Too interconnected to fail = too big to fail: {What} is in a leverage
  ratio?
\newblock \url{http://voxeu.org/article/too-interconnected-fail-too-big-fail}.
\newblock Accessed: 2016-05-01.

\bibitem[Haldane, 2009]{haldane2009rethinking}
Haldane, A.~G. (2009).
\newblock {Rethinking the financial network}.
\newblock {\em Speech delivered at the Financial Student Association,
  Amsterdam, April}.

\bibitem[Helwege and Zhang, 2013]{helwege2013financial}
Helwege, J. and Zhang, G. (2013).
\newblock Financial firm bankruptcy and contagion.
\newblock In {\em Midwest Finance Association 2013 Annual Meeting Paper}.

\bibitem[Iyer and Peydro, 2011]{iyer2011interbank}
Iyer, R. and Peydro, J.-L. (2011).
\newblock Interbank contagion at work: Evidence from a natural experiment.
\newblock {\em Review of Financial Studies}, 24(4):1337--1377.

\bibitem[Iyer et~al., 2014]{iyer2014interbank}
Iyer, R., Peydr{\'o}, J.-L., da~Rocha-Lopes, S., and Schoar, A. (2014).
\newblock Interbank liquidity crunch and the firm credit crunch: Evidence from
  the 2007--2009 crisis.
\newblock {\em Review of Financial studies}, 27(1):347--372.

\bibitem[Kiyotaki and Moore, 2002]{kiyotaki2002balance}
Kiyotaki, N. and Moore, J. (2002).
\newblock {Balance-sheet contagion}.
\newblock {\em The American Economic Review}, 92(2):46--50.

\bibitem[Laux and Leuz, 2009]{laux2009crisis}
Laux, C. and Leuz, C. (2009).
\newblock The crisis of fair-value accounting: Making sense of the recent
  debate.
\newblock {\em Accounting, organizations and society}, 34(6):826--834.

\bibitem[Laux and Leuz, 2010]{laux2010did}
Laux, C. and Leuz, C. (2010).
\newblock Did fair-value accounting contribute to the financial crisis?
\newblock {\em The Journal of Economic Perspectives}, 24(1):93--118.

\bibitem[Liu and Staum, 2010]{liu2010sensitivity}
Liu, M. and Staum, J. (2010).
\newblock {Sensitivity analysis of the Eisenberg--Noe model of contagion}.
\newblock {\em Operations Research Letters}, 38(5):489--491.

\bibitem[Lubl{\'o}y, 2005]{lubloy2005domino}
Lubl{\'o}y, {\'A}. (2005).
\newblock Domino effect in the hungarian interbank market.
\newblock {\em Hungarian Economic Review}, 52(4):377--401.

\bibitem[Lyons, 1997]{lyons1997simultaneous}
Lyons, R.~K. (1997).
\newblock {A simultaneous trade model of the foreign exchange hot potato}.
\newblock {\em Journal of International Economics}, 42(3-4):275--298.

\bibitem[Merton, 1974]{merton1974pricing}
Merton, R.~C. (1974).
\newblock On the pricing of corporate debt: The risk structure of interest
  rates.
\newblock {\em The Journal of finance}, 29(2):449--470.

\bibitem[Mistrulli, 2011]{mistrulli2011assessing}
Mistrulli, P.~E. (2011).
\newblock {Assessing financial contagion in the interbank market: Maximum
  entropy versus observed interbank lending patterns}.
\newblock {\em Journal of Banking {\&} Finance}, 35(5):1114--1127.

\bibitem[Montagna and Lux, 2016]{montagna2016contagion}
Montagna, M. and Lux, T. (2016).
\newblock Contagion risk in the interbank market: A probabilistic approach to
  cope with incomplete structural information.
\newblock {\em Quantitative Finance}, pages 1--20.

\bibitem[M{\"u}ller, 2006]{muller2006interbank}
M{\"u}ller, J. (2006).
\newblock Interbank credit lines as a channel of contagion.
\newblock {\em Journal of Financial Services Research}, 29(1):37--60.

\bibitem[Musmeci et~al., 2013]{musmeci2012bootstrapping}
Musmeci, N., Battiston, S., Caldarelli, G., Puliga, M., and Gabrielli, A.
  (2013).
\newblock {Bootstrapping Topological Properties and Systemic Risk of Complex
  Networks Using the Fitness Model}.
\newblock {\em Journal of Statistical Physics}, 151(3-4):720--734.

\bibitem[Rogers and Veraart, 2013]{rogers2013failure}
Rogers, L. C.~G. and Veraart, L. A.~M. (2013).
\newblock {Failure and rescue in an interbank network}.
\newblock {\em Management Science}, 59(4):882--898.

\bibitem[Roukny et~al., 2013]{roukny2013default}
Roukny, T., Bersini, H., Pirotte, H., Caldarelli, G., and Battiston, S. (2013).
\newblock {Default Cascades in Complex Networks: Topology and Systemic Risk}.
\newblock {\em Scientific Reports}, 3:1--31.

\bibitem[Schuldenzucker et~al., 2016]{schuldenzuckerclearing}
Schuldenzucker, S., Seuken, S., and Battiston, S. (2016).
\newblock {Clearing Payments in Financial Networks with Credit Default Swaps}.
\newblock {Working paper}.

\bibitem[Sheldon et~al., 1998]{sheldon1998interbank}
Sheldon, G., Maurer, M., et~al. (1998).
\newblock Interbank lending and systemic risk: an empirical analysis for
  switzerland.
\newblock {\em Revue Suisse d'Economie Politique et de Statistique},
  134:685--704.

\bibitem[Stulz, 2010]{stulz2009credit}
Stulz, R.~M. (2010).
\newblock Credit default swaps and the credit crisis.
\newblock {\em Journal of Economic Perspectives}, 24(1):73--92.

\bibitem[Toivanen, 2009]{toivanen2009financial}
Toivanen, M. (2009).
\newblock Financial interlinkages and risk of contagion in the finnish
  interbank market.
\newblock {\em Bank of Finland Research Discussion Paper}, 6.

\bibitem[Upper, 2011]{upper2011simulation}
Upper, C. (2011).
\newblock Simulation methods to assess the danger of contagion in interbank
  markets.
\newblock {\em Journal of Financial Stability}, 7(3):111--125.

\bibitem[Upper and Worms, 2004]{upper2004estimating}
Upper, C. and Worms, A. (2004).
\newblock {Estimating Bilateral Exposures in the German Interbank Market: Is
  there a Danger of Contagion?}
\newblock {\em European Economic Review}, 48(4):827--849.

\bibitem[Van~Lelyveld and Liedorp, 2004]{van2004interbank}
Van~Lelyveld, I. and Liedorp, F.~R. (2004).
\newblock {Interbank contagion in the Dutch banking sector}.
\newblock {\em DNB Working Paper}.

\bibitem[Wallison, 2008a]{wallison2008fair}
Wallison, P.~J. (2008a).
\newblock Fair value accounting: A critique.
\newblock {\em Financial Services Outlook}.

\bibitem[Wallison, 2008b]{wallison2008judgment}
Wallison, P.~J. (2008b).
\newblock Judgment too important to be left to the accountants.
\newblock {\em Financial Times}, 1.

\bibitem[Wells, 2004]{wells2004financial}
Wells, S.~J. (2004).
\newblock Financial interlinkages in the united kingdom's interbank market and
  the risk of contagion.
\newblock {\em {Bank of England Working Paper}}.

\bibitem[Yellen, 2013]{yellen2013intercon}
Yellen, J. ({January 4, 2013}).
\newblock Interconnectedness and systemic risk: Lessons from the financial
  crisis and policy implications.
\newblock {\em {Speech at the American Economic Association/American Finance
  Association Joint Luncheon, San Diego, California}}.

\bibitem[Zlati{\'c} et~al., 2015]{zlatic2015reduction}
Zlati{\'c}, V., Gabbi, G., and Abraham, H. (2015).
\newblock {Reduction of systemic risk by means of Pigouvian taxation}.
\newblock {\em {PloS one}}, 10(7):e0114928.

\end{thebibliography}
\end{document}